\documentclass[11pt]{article}
\usepackage{mathrsfs}
\usepackage{amsfonts}
\usepackage{amsmath,amssymb,amsfonts,amsthm,fancyhdr,mathrsfs}
\usepackage{bbm}
\usepackage{verbatim}
\usepackage{color}
\usepackage{dsfont}
\usepackage[left=2cm,right=2cm,top=2cm,bottom=2.5cm]{geometry}

\usepackage{latexsym}
\usepackage{graphicx}
\usepackage{enumerate}
\usepackage[colorlinks=true,urlcolor=blue,
citecolor=red,linkcolor=blue,linktocpage,pdfpagelabels,
bookmarksnumbered,bookmarksopen]{hyperref}
\usepackage{cite}
\usepackage{indentfirst}
\usepackage{amsmath}
\usepackage{booktabs}
\usepackage{threeparttable}
\usepackage{float}
\allowdisplaybreaks[4]
\numberwithin{equation}{section} 

\numberwithin{equation}{section}
\newtheorem{thm}{Theorem}[section]

\newtheorem{pro}[thm]{Proposition}
\newtheorem{lem}[thm]{Lemma}

\newtheorem{re}{Remark}[section]
\newenvironment{pf}{{\noindent \it \bf Proof:}}{{\hfill $\square$}\\}
\usepackage[pagewise]{lineno}%\linenumbers

\newcommand{\Tr}{\operatorname{Tr}}
\newcommand{\eps}{\varepsilon}

\newcommand{\R}{\mathbb{R}}

\newcommand{\cQ}{\mathcal{Q}}

\newcommand{\norm}[1]{\|#1\|}
\newcommand{\abs}[1]{|#1|}

\newcommand{\limn}{\lim_{n\to+\infty}}
\newcommand{\limN}{\lim_{N\to+\infty}}
\DeclareMathOperator{\dom}{dom}

\newcommand{\Qpot}{\mathcal{Q}_{\mathrm{pot}}}
\newcommand{\QL}{\mathcal{Q}_{L^6}}

\newcommand{\x}{\boldsymbol{x}}
\newcommand{\y}{\boldsymbol{y}}
\newcommand{\z}{\boldsymbol{z}}

\title{Condensation and Collapse in the Mean-Field Limit of Rotating 2D Bose Gases with Two-Body and Three-Body Interactions}
\author{ \bf
	{Deke Li$^1$, Yuan Li$^1$ and  Qingxuan Wang$^2$\thanks{Corresponding author: Qingxuan Wang.}}\\
{\small \emph{$^1$School of Mathematics and Statistics, Lanzhou University, Lanzhou 730000, China}}\\
   {\small \emph{$^2$School of Mathematical Sciences, Zhejiang Normal University,	Jinhua, 321004, China}}
}
\date{}
\begin{document}
\maketitle
\footnote{E-mail: lidk2024@lzu.edu.cn (D. Li), li\_yuan@lzu.edu.cn (Y. Li), wangqx@zjnu.edu.cn (Q. Wang)}
\begin{abstract}
We consider a system of $N$ interacting bosons in a rotating harmonic trap in $\mathbb{R}^2$, where the two-body interaction is attractive and scaled as $N^{2\alpha}U(N^\alpha x)$ with $0<\alpha<1/12$, and the three-body interaction is repulsive and scaled as $N^{4\beta}W(N^\beta x,N^\beta y)$ with $0<\beta<1/24$. In the mean-field limit, the ground state energy is effectively described by a rotating cubic-quintic nonlinear Schr\"{o}dinger functional. We analyze the collapse regime where the two-body coupling $a$ approaches the critical value $a_*$ and the three-body coupling $b$ tends to zero. The NLS ground states blow up with a universal profile given by the optimizer of the Gagliardo-Nirenberg inequality, and the energy satisfies $E^{\mathrm{NLS}} = (1-\zeta/4+o(1))\mathcal{Q}{\rm{pot}}\ell_n^2$ with $\zeta\geqslant0$ determined by the relative rates of $a_n\to a_*$ and $b_n\searrow0$. From the many-body theory, we rigorously justify this effective description: the quantum ground state energy converges to the NLS energy with the same asymptotic expansion in the collapse regime, and the many-body ground states exhibit complete Bose-Einstein condensation onto the universal blow-up profile. 
\end{abstract}

\noindent
\textbf{Keywords}: Bose-Einstein condensates; Cubic-quintic Schr\"{o}dinger equation; Ground state; Three-body interaction; Mean-field limit; Blow-up profile.

\noindent
\textbf{Mathematics Subject Classification:}  35J20; 35A15; 35B40; 81V70.

\tableofcontents

\section{Introduction}
Bose-Einstein condensates (BECs) have been an important subject in modern physics since their first experimental realization in 1995 \cite{Anderson,Davis}. When trapped atoms are set into rotation, a variety of remarkable quantum phenomena emerge, including quantized vortices, vortex lattices, and superfluidity \cite{YaoPRL,Cooper,Pethick,Haljan}. The standard theoretical framework for rotating BECs is the mean‑field Gross-Pitaevskii (GP) equation \cite{Lieb-CMP01,Lieb-PRA02}, which assumes only two‑body interaction and has successfully explained a wide range of experimental findings, from vortex formation to collective excitations \cite{Sinha-PRL-01,Kasamatsu-PRL-03,Fetter-PRA-01,Fetter-RMP-09}, see also the monograph \cite{Aftalion} where an extensive list of references can be found for the theoretical work.  

However, the two‑body approximation is not sufficient in several physically important regimes. It is known, for instance, that three‑body interactions contribute about $2\%$ of the binding energy of liquid He$^4$ \cite{Murphy-PRA-71} and up to $14\%$ for water \cite{Mas-JCP-03}, and that certain key properties cannot be explained without them \cite{Pieniazek-JACS-11,Ujevic07}. In two dimensions, repulsive three‑body interactions can even stabilize an attractive condensate against collapse induced by two‑body attraction \cite{Dai11}, and the competition between the two terms can lead to droplet crystals or supersolid‑like phases \cite{Bisset15,Blakie16} in three dimensions. On the mathematical side, the Bose systems with three‑body interactions have attracted increasing attention in recent years \cite{Chen-JFA-11,Chen-ARMA-12,Chen-IM-19,Li-JMP-21,Nam-JMP-22,Nam-PMP-23,NamR-JMP-22,Nam-CMP-20,NR1d,NR2d,Yuan-CPAA-15,Xie-DIE-15}. While most of these works consider purely three‑body interactions without two‑body counterparts, some recent studies have examined systems where two‑ and three‑body interactions coexist with distinct scaling parameters \cite{NR1d,NR2d}. 

In the past few years, for rotating BECs with purely attractive/repulsive two‑body interactions has received widespread attention, one can see, e.g., \cite{Lewin-14,Lewin-16,Lewin-17,Lewin-18,ARMAGuo,Lieb-PRA00,Lieb-CMP01,Lieb-PRA02,Lieb-CMP06,Seiringer-CMP02} and the reference therein. In addition, attractive Bose-Einstein condensates are believed
to respond to rotation in a rather different manner from the repulsive case. In a rotating repulsive Bose gas, a triangular lattice of vortices is formed, with the number of vortices increasing with the speed of rotation \cite{Aftalion,Cooper08,Correggi,Fetter-RMP-09}. On the contrary, it has been argued \cite{Wilkin98,Mottelson-PRL-99,Pethick00,Saito-04,Lundh,Carr06,Sakaguchi-08} that in an attractive rotating Bose gas, vortices should be unstable and it is instead the center of mass of the system which can rotate around the
axis. 

In the absence of rotation, the model with both attractive two‑body and repulsive three‑body interactions was recently studied in detail by Nguyen and Ricaud \cite{NR2d} for the power‑type trap $|\x|^s$ in two dimensions. They derived rigorously the cubic‑quintic nonlinear Schr\"odinger (NLS) functional as the mean‑field limit, established the existence and collapse of NLS ground states, and proved condensation of the many‑body ground states. Their analysis revealed that the blow‑up profile is universally given by the optimizer of the Gagliardo-Nirenberg inequality, and that the ground state energy expansion depends explicitly on the trap power $s$ through a rate condition involving the relative speeds of $a_n\to a_*$ and $b_n\searrow 0$. However, the special critical case $\zeta = 2(s+2)/s$, where the leading‑order coefficient in the energy expansion vanishes, was excluded in \cite{NR2d}. Li and Zhao \cite{lz-27} recently addressed this critical regime (under the non‑degenerate condition $\epsilon_N \neq 0$) by deriving the next‑order asymptotic expansion of the energy and establishing the condensation of many‑body ground states, thus extending the rigorous collapse description to the critical case. Similar conclusions can also be found in \cite{Wang,NR1d,LMPGuo,AIHPAGuo,lw-24}  and the reference therein. 

\subsection{Models and main results}
The present work is a continuation of this model to the rotating case. We consider a system of $N\geqslant3$ spinless bosons in $\mathbb{R}^2$ with a scaled attractive, two-body interaction and a scaled repulsive, three-body interaction, trapped in the harmonic potential, and which rotates at a fixed speed of rotation $\Omega$. The many‑body Hamiltonian in the rotating frame is 
\begin{align*}
	H^{N}_{a,b,\Omega}:&=\sum_{j=1}^{N}\left(-\Delta_{\boldsymbol{x}_j}+\omega|\boldsymbol{x}_j|^2-\Omega L_{\boldsymbol{x}_j}\right)-\frac{a}{N-1}\sum_{1\leqslant i<j\leqslant N}U_{N^\alpha}(\boldsymbol{x}_i-\boldsymbol{x}_j)\notag\\
    &\qquad\qquad\qquad\quad+\frac{b}{(N-1)(N-2)}\sum_{1\leqslant i<j<k\leqslant N}W_{N^\beta}(\boldsymbol{x}_i-\boldsymbol{x}_j,\boldsymbol{x}_i-\boldsymbol{x}_k)
\end{align*}
acting on  $\mathfrak{H}^N=\otimes^N_{sym}L^2(\mathbb{R}^2)$, the Hilbert space of square-integrable symmetric functions. Here $\omega>0$ is the trap frequency, $\Omega>0$ is the angular velocity, $\x=(x_1,x_2)\in\mathbb{R}^2$ and $L_{\x}=-i(x_2\partial_1-x_1\partial_2)$ is the angular momentum operator. Introducing the vector potential $\boldsymbol{A}_\Omega(\x)=\frac{1}{2}\Omega \x^\perp=\frac{1}{2}\Omega(-x_2,x_1)$, the many-body Hamiltonian can be rewritten as
\begin{align}
	H^{N}_{a,b,\Omega}&=\sum_{j=1}^{N}\left(\big[-i\nabla_{\x_j}-\boldsymbol{A}_\Omega(\x_j)\big]^2+\frac{4\omega-\Omega^2}{4}|\x_j|^2\right)-\frac{a}{N-1}\sum_{1\leqslant i<j\leqslant N}U_{N^\alpha}(\x_i-\x_j)\notag\\
    &\qquad\qquad\qquad\qquad\qquad +\frac{b}{(N-1)(N-2)}\sum_{1\leqslant i<j<k\leqslant N}W_{N^\beta}(\x_i-\x_j,\x_i-\x_k),\label{r11}
\end{align}
which splitting of the term $-\Omega L$ into the contribution of the vector potential and the term $-\Omega^2|\x|^2/4$ corresponds respectively to the Coriolis and the centrifugal force in the rotating frame. The interaction terms $U_{N^\alpha}$ and $W_{N^\beta}$ are scaled through the parameters $\alpha,\beta>0$,
\begin{gather}\label{12}
	\begin{aligned}
	U_{N^\alpha}(\x):=N^{2\alpha}U(N^{\alpha}\x)\quad\text{and}\quad W_{N^\beta}(\x,\y):=N^{4\beta}W(N^{\beta}\x,N^{\beta}\y).
	\end{aligned}
\end{gather}
Finally, we assume on one hand that the two-body interaction $-aU$ is attractive and even,
\begin{gather}\label{13}
	\begin{aligned}
	U(\x)=U(-\x)\geqslant0,\quad\text{with}\quad U\in L^1\cap L^\infty(\mathbb{R}^2)\quad\text{and}\quad \int_{\mathbb{R}^2}U(\x)\,d\x=1.
	\end{aligned}
\end{gather}
On the other hand, the three-body interaction $bW$ has to be repulsive for matter stability and we assume
\begin{gather}\label{14}
	\begin{aligned}
	W(\x,\y)=W(\y,\x)\geqslant0,\quad\text{with}\quad W\in L^1\cap L^\infty(\mathbb{R}^2\times\mathbb{R}^2)\quad\text{and}\quad \iint_{\mathbb{R}^4}W(\x,\y)\,d\x d\y=1,
	\end{aligned}
\end{gather}
together with the symmetry condition
\begin{gather}\label{15}
	\begin{aligned}
	W(\x-\y,\x-\z)=W(\y-\x,\y-\z)=W(\z-\y,\z-\x).
	\end{aligned}
\end{gather}

We are interested in the large‑$N$ behavior of the ground state energy per particle,
\begin{gather}\label{16}
	\begin{aligned}
	E^{\mathrm{QM}}(a,b,N,\Omega)=\inf\left\{\frac{ \langle\Psi_N|H^{N}_{a,b,\Omega}|\Psi_N \rangle}{N}:\Psi_N\in \mathfrak{H}^N\text{ and }\|\Psi_N\|_{L^2}=1\right\}
	\end{aligned}
\end{gather}
and the corresponding ground states. In our setting, as $N\to+\infty$, the potentials $U_{N^\alpha} (\x-\y)$ and $W_{N^\beta}(\x-\y,\x-\z)$ converge to the delta functions $\delta_{\x=\y}$ and $\delta_{\x=\y=\z}$, respectively. And the many‑body problem is expected to be described by the rotating cubic‑quintic NLS functional 
\begin{gather}\label{110}
\begin{aligned}
\mathcal{E}^{\mathrm{NLS}}_{a,b,\Omega}(u)&=\int_{\mathbb{R}^2}|(\nabla -i \boldsymbol{A}_\Omega)u|^2d\x+\frac{4\omega-\Omega^2}{4}\int_{\mathbb{R}^2}|\x|^2|u|^2d\x-\frac{a}{2}\int_{\mathbb{R}^2}|u|^4 d\x
+\frac{b}{6}\int_{\mathbb{R}^2}|u|^6 d\x
\end{aligned}
\end{gather}
with the associated ground state energy
\begin{gather}\label{m}
	\begin{aligned}
	E^{\mathrm{NLS}}(a,b,\Omega):=\inf\Big\{\mathcal{E}^{\mathrm{NLS}}_{a,b,\Omega}(u):u\in H^1(\mathbb{R}^2)\text{ and }\|u\|_{L^2}=1\Big\}.
	\end{aligned}
\end{gather}

It is worth noting that, in the absence of three-body interaction ($b\equiv0$), the rotating NLS functional $E^{\mathrm{NLS}}(a,0,\Omega)$ is unstable when the strength $a$ is above the critical strength for any $\Omega\geq0$, one can see, e.g., \cite{LMPGuo,ARMAGuo,Lewin-15,Lewin-16,Lewin-17,Chen-IMRN-17,Zhang-JSP-00}. Hence a simple variational argument yields there exists a constant $a_*\geq0$ such that $E^{\mathrm{NLS}}(a,0,\Omega)=-\infty$ for any $a\geqslant a_*$ and $\Omega\geq0$. It turns out that the value of $a_*$ can be determined by solving the nonlinear scalar field equation
\begin{align}\label{Qequ}
    -\Delta u + u - u^3 = 0 \quad \text{in } \R^2.  
\end{align}
It is well-known \cite{Gidas81,Kwong89,Li93,McLeod87} that \eqref{Qequ} admits a unique (up to translations and phase) positive radial solution $Q$. Moreover, we recall from \cite{Weinstein} the following Gagliardo-Nirenberg inequality
\begin{gather}\label{GN}
    \begin{aligned}
	\int_{\mathbb{R}^2}|u|^4 d\x\leqslant\frac{2}{a_*} \int_{\mathbb{R}^2}|\nabla  u|^2 d\x\int_{\mathbb{R}^2}|u|^2 d\x,\quad u\in H^1(\mathbb{R}^2),
    \end{aligned}
\end{gather}
where $a_* = \norm{Q}_{L^2}^2$ and equality is achieved for $u(\x)=Q(\x)$. We denote the normalized profile 
\begin{align}\label{normalized}
    Q_0 := Q / \norm{Q}_{L^2},
\end{align}
it follows from \eqref{Qequ} and \eqref{GN} that
\begin{equation}\label{eq:Q0-identities}
    \norm{\nabla Q_0}_{L^2}^2 = \norm{Q_0}_{L^2}^2 = \frac{a_*}{2}\norm{Q_0}_{L^4}^4 =1.
\end{equation}
For later reference we also introduce  the positive constants
\begin{align}\label{Q6p}
    \cQ_{L^6} := \frac{1}{6}\norm{Q_0}_{L^6}^6 \quad \text{and}\quad\cQ_{\mathrm{pot}} := 2\omega\int_{\R^2} |\x|^2 |Q_0|^2 d\x.  
\end{align}

The rotation introduces several new mathematical challenges compared to the non-rotating case studied in \cite{NR2d}. First, the centrifugal force modifies the effective potential to $\frac{4\omega-\Omega^2}{4}|\x|^2$; if $\Omega$ exceeds the critical value $\Omega^* := 2\sqrt{\omega}$, the system becomes unstable irrespective of interactions. Second, the rotation term $-\Omega L$ breaks the radial symmetry of ground states that is heavily used in the non-rotating analysis \cite{RMPar,ARMAkillip,lw-24,Wang}. Ground states can now carry complex phases and, in principle, accommodate vortices (this remains an open problem: there are vortices in the ground state is still unknown). Controlling the angular momentum contribution requires sharp estimates on the imaginary part of the ground state and a gauge transformation to identify the blow-up center. Despite these differences, we prove that the blow-up profile remains the universal $Q_0$, demonstrating that the local structure of the collapse is dominated by the kinetic energy vs. two-body attraction balance and is insensitive to both the rotation and the details of the three-body repulsion.

To study the collapse regime, we consider sequences $\{a_n\},\{b_n\}\subset[0,+\infty)$ with $a_n\to a_*$, $b_n\searrow 0$, and $\max\{0,a_*-a_n\}+b_n>0$. We assume that there exists a constant $\zeta\geqslant 0$ such that
\begin{equation}\label{eq:rate}
\lim_{n\to\infty}\frac{2(1-a_n/a_*)}{\Qpot}
\Bigl(\frac{4\QL b_n}{\Qpot}\Bigr)^{-2/3}
= (1-\zeta)\zeta^{-2/3}\in(-\infty,+\infty].
\end{equation}
The parameter $\zeta$ encodes the relative speed of the two limits; the typical regimes are $a_n-a_*\sim b_n^{1/2}$, $a_n-a_*\ll b_n^{1/2}$, and $b_n^{1/2}\ll a_n-a_*$ (with $a_n > a_*$). The blow-up length scale is defined by
\begin{align}\label{def:llen}
   \ell_n =
\begin{cases}
\displaystyle \left( \frac{2(a_* - a_n)}{a_*(1-\zeta)\cQ_{\mathrm{pot}}} \right)^{1/4}, & \text{if } \zeta \neq 1,\\[10pt]
\displaystyle \quad\ \left( \frac{4\cQ_{L^6} b_n}{\zeta\cQ_{\mathrm{pot}}} \right)^{1/6}, & \text{if } \zeta \neq 0.
\end{cases} 
\end{align}

Our first main result concerns the existence of minimizers for the rotating cubic‑quintic NLS functional and their collapse behavior.

\begin{thm}\label{existence}
Assume $\omega>0$ $a>0$, $b\in\mathbb{R}$ and  $\Omega^*=2\sqrt{\omega}$. Then the following properties hold: 
\begin{itemize}
    \item[{\emph{(1)}}]  For any $\Omega>\Omega^*$ or $b<0$, then there is no minimizer for $E^{\mathrm{NLS}}(a,b,\Omega)$.
    
    \item[{\emph{(2)}}] For any $0<\Omega<\Omega^*$, if ($b=0$ and $0< a<a_* = \norm{Q}_{L^2}^2$) or $b>0$, then there exists at least one minimizer $u^{\mathrm{NLS}}_{a,b,\Omega}$ for $E^{\mathrm{NLS}}(a,b,\Omega)$; if ($b=0$ and $a\geqslant a_*$), then there are no minimizer for $E^{\mathrm{NLS}}(a,b,\Omega)$.
\end{itemize}
\end{thm}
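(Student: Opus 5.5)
Part (1) will be proved by showing $E^{\mathrm{NLS}}(a,b,\Omega)=-\infty$ via explicit trial states; part (2) splits into an existence argument (direct method with compactness from the trap) and a nonexistence argument (energy equals $0$ but is not attained). I treat the cases in that order.

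\textbf{Part (1).} For $b<0$ I take the trial state $u_\tau(\x)=\tau Q_0(\tau\x)$, which satisfies $\|u_\tau\|_{L^2}=1$. Writing $|(\nabla-i\boldsymbol{A}_\Omega)u|^2=|\nabla u|^2+|\boldsymbol{A}_\Omega|^2|u|^2-2\boldsymbol{A}_\Omega\cdot\mathrm{Im}(\bar u\nabla u)$, the last term vanishes because $Q_0$ is real. Hence
\[
\mathcal{E}^{\mathrm{NLS}}_{a,b,\Omega}(u_\tau)=\tau^2\Bigl(1-\frac{a}{a_*}\Bigr)+O(\tau^{-2})+b\,\QL\,\tau^4,
\]
using \eqref{eq:Q0-identities} and \eqref{Q6p}. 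Since $b<0$, the $\tau^4$ term dominates as $\tau\to\infty$, so the energy tends to $-\infty$ and there is no minimizer.

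For $\Omega>\Omega^*$ the effective trap coefficient $\frac{4\omega-\Omega^2}{4}$ is negative, so any quadratic potential gain must be beaten by the magnetic kinetic term. To do this I take a translated, spread-out real profile $u(\x)=\lambda\varphi(\lambda(\x-\x_0))e^{i\boldsymbol{A}_\Omega(\x_0)\cdot\x}$ with $\lambda\to0$ and $|\x_0|\to\infty$. The gauge phase cancels the linear part of $\boldsymbol{A}_\Omega$, so the kinetic term is $\lambda^2\|\nabla\varphi\|^2+\frac{\Omega^2}{4}\lambda^{-2}\int|\x|^2\varphi^2$. The potential term is $\approx\frac{4\omega-\Omega^2}{4}|\x_0|^2$, which tends to $-\infty$. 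The interactions are bounded by $O(\lambda^2+|b|\lambda^4)$. Thus $E^{\mathrm{NLS}}=-\infty$ for every $b$, and there is no minimizer.

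\textbf{Part (2), existence.} Here $0<\Omega<\Omega^*$, so $V(\x)=\frac{4\omega-\Omega^2}{4}|\x|^2$ is coercive. I use the diamagnetic inequality $|\nabla|u||\le|(\nabla-i\boldsymbol{A}_\Omega)u|$ together with \eqref{GN}.
\begin{itemize}
\item If $b=0$ and $a<a_*$: $\mathcal{E}\ge(1-a/a_*)\|(\nabla-i\boldsymbol{A}_\Omega)u\|^2+\int V|u|^2$.
\item If $b>0$: interpolate $\|u\|_4^4\le\|u\|_2\|u\|_6^3$, i.e.\ $\int|u|^4\le(\int|u|^6)^{1/2}$. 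Then $-\frac a2\int|u|^4+\frac b6\int|u|^6\ge -C(a,b)$, and also $\ge\frac b{12}\int|u|^6-C$.
\end{itemize}
In either case minimizing sequences are bounded in the magnetic Sobolev space $H^1_{\boldsymbol{A}}\cap\{\int V|u|^2<\infty\}$.

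This space embeds compactly into $L^q$ for $2\le q<\infty$: the weight $V$ gives tightness, and $|u|\in H^1$ together with local $H^1$ bounds gives local compactness. So a subsequence converges weakly in the energy space and strongly in $L^2\cap L^4\cap L^6$. I will check this embedding carefully, since $L^6$ in 2D follows from the $H^1$ bound on $|u|$ plus tightness. The limit therefore keeps its mass, the interaction terms converge, and the quadratic terms are weakly lower semicontinuous. The limit is a minimizer.

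\textbf{Part (2), nonexistence for $b=0$, $a\ge a_*$.} For $a>a_*$ the Part (1) computation gives $\tau^2(1-a/a_*)\to-\infty$. For $a=a_*$, the diamagnetic inequality and \eqref{GN} give $\mathcal{E}>0$ for every admissible $u$, since $\int V|u|^2>0$. On the other hand, the trial states $u_\tau$ give energy $O(\tau^{-2})\to0$, so $E^{\mathrm{NLS}}=0$. That infimum cannot be attained, since every admissible $u$ has strictly positive energy.

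The main obstacle I expect is the $\Omega>\Omega^*$ construction: I must balance the $\lambda^{-2}$ magnetic remainder against the potential gain. For instance, taking $\lambda=|\x_0|^{-1/2}$ makes the remainder $O(|\x_0|)$, which is dominated by the negative $|\x_0|^2$ term. The compactness embedding in the magnetic setting is the secondary technical point.
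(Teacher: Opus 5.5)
Your proof is correct. It matches the paper's argument for $b<0$, which uses the same dilated profile $\tau Q_0(\tau\x)$ (the paper's $\ell=1/\tau$). It is also the paper's argument in substance for existence with $b>0$: interpolation $\int|u|^4\le(\int|u|^6)^{1/2}$ plus Young, then compactness from the trap. The compactness step you planned to check is immediate. Since $\|\nabla u\|_{L^2}\le\|(\nabla-i\boldsymbol{A}_\Omega)u\|_{L^2}+\frac{\Omega}{2}\||\x|u\|_{L^2}$, your ``magnetic'' embedding reduces to the ordinary compact embedding of $\{\nabla u,\,|\x|u\in L^2\}$ into $L^q$; the paper reaches the same gradient bound via Cauchy--Schwarz on $\Omega L$.

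Two points genuinely differ from the paper.

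First, for $\Omega>\Omega^*$ the paper passes to the Landau gauge with the phase $e^{i\frac{\Omega}{2}x_1x_2}$. The magnetic term then becomes $|\nabla f|^2+\Omega^2x_2^2f^2$, which does not see stretching in $x_1$. Stretching $Q_0$ anisotropically produces the divergence $\frac{4\omega-\Omega^2}{16\omega}\cQ_{\mathrm{pot}}\eta^{-2}$. You instead use that $\boldsymbol{A}_\Omega$ is linear, so $e^{i\boldsymbol{A}_\Omega(\x_0)\cdot\x}$ implements a magnetic translation. For real radial $\varphi$ with $\|\varphi\|_{L^2}=1$ the energy is exactly $\lambda^2\|\nabla\varphi\|_{L^2}^2+\omega\lambda^{-2}\int|\y|^2\varphi^2+\frac{4\omega-\Omega^2}{4}|\x_0|^2-\frac a2\lambda^2\|\varphi\|_{L^4}^4+\frac b6\lambda^4\|\varphi\|_{L^6}^6$. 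This shows the physical mechanism, centrifugal drift of the centre of mass, more directly than the paper's construction. It also means the balancing $\lambda=|\x_0|^{-1/2}$ you flagged as the main obstacle is unnecessary: you can fix $\lambda=1$ and let $|\x_0|\to\infty$.

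Second, for $b=0$ the paper only invokes \cite[Theorem 1.1]{ARMAGuo}, while you prove both halves directly. At $a=a_*$ you note that $E^{\mathrm{NLS}}=0$, because the trial energy is $\omega\tau^{-2}\int|\y|^2Q_0^2\to0$. Meanwhile every admissible $u$ satisfies $\mathcal{E}^{\mathrm{NLS}}_{a_*,0,\Omega}(u)\ge\frac{4\omega-\Omega^2}{4}\int|\x|^2|u|^2>0$, so the infimum is not attained. This is the standard argument, and it makes the theorem self-contained.
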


\begin{re}
For $\Omega = \Omega^*$, the centrifugal force exactly compensates the trapping potential, so the system loses compactness in the magnetic Sobolev space due to magnetic translation invariance, which allows localized states to drift to infinity without changing the energy. Moreover, the strict subadditivity inequality required for the concentration-compactness method \cite{lp1} is not known to hold, making this method inapplicable. In particular, even in the purely two-body case ($b \equiv 0$), the existence of ground states for $E^{\mathrm{NLS}}(a,0,\Omega^*)$ is a subtle issue; see Remark 2.4 in \cite{ARMAGuo} for a detailed discussion. For a general class of trapping potentials, the existence problem at the critical rotational velocity $\Omega=\Omega^*$ is systematically studied in \cite{SIAMGuo}, where the authors provide sharp criteria depending on the asymptotic behavior of the potential.
\end{re}

\begin{thm}[Collapse of rotating cubic-quintic NLS ground states]\label{thm:main}
Let $0<\Omega<\Omega^*$, $\zeta \geqslant 0$, and let $\mathcal{Q}_{L^{6}}$, $\cQ_{\mathrm{pot}}$ be as in \eqref{Q6p}. Suppose $\{a_n\},\{b_n\}\subset[0,+\infty)$ satisfy $a_n\to a_*$, $b_n\searrow 0$, with $\max\{0,a_*-a_n\}+b_n>0$, and the rate condition \eqref{eq:rate} holds. Let $\{u^{\mathrm{NLS}}_{a_n,b_n,\Omega}\}_n$ be a sequence of ground states of $E^{\mathrm{NLS}}(a_n,b_n,\Omega)$. 
Then, after a suitable translation, we have
\begin{align}\label{eq:con}
    \limn \ell_n u^{\mathrm{NLS}}_{a_n,b_n,\Omega}(\ell_n \x)= Q_{0}(\x)
\end{align}
strongly in $H^{1}(\mathbb{R}^{2})\cap L^{\infty}(\mathbb{R}^{2})$ for the whole sequence, where $Q_0$ and $\ell_n$ is given by \eqref{normalized} and \eqref{def:llen}, respectively. Furthermore, the ground state energy admits the asymptotic expansion
\begin{equation}\label{eq:energy-exp}
E^{\mathrm{NLS}}(a_n,b_n,\Omega) = \left( 1 - \frac{\zeta}{4} + o(1) \right) \cQ_{\mathrm{pot}} \ell_n^{2}.
\end{equation}
\end{thm}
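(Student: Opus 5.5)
The proof follows the standard energy-method scheme (as in Guo--Seiringer and Nguyen--Ricaud), adapted to the magnetic setting. The argument runs through an upper bound, a matching lower bound with compactness, and an identification of the limit.

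\textbf{Step 1: upper bound.} We use the real trial state $u_n(\x)=\ell_n^{-1}Q_0(\x/\ell_n)$. Since $u_n$ is real and radial, the magnetic kinetic energy reduces to
$\int|\nabla u_n|^2+\frac{\Omega^2}{4}\int|\x|^2u_n^2$,
because the cross term $\int \boldsymbol{A}_\Omega\cdot \mathrm{Im}(\bar u\nabla u)$ vanishes. The centrifugal term then combines with the trap to give exactly $\omega\int|\x|^2u_n^2=\frac12\Qpot\ell_n^2$. Using \eqref{eq:Q0-identities}, we obtain
\begin{equation*}
\mathcal{E}^{\mathrm{NLS}}_{a_n,b_n,\Omega}(u_n)=\Bigl(1-\frac{a_n}{a_*}\Bigr)\ell_n^{-2}+\frac{\Qpot}{2}\ell_n^2+\QL b_n\ell_n^{-4}.
\end{equation*}
By \eqref{eq:rate} and \eqref{def:llen}, we have $(1-a_n/a_*)\ell_n^{-4}\to\frac{(1-\zeta)\Qpot}{2}$ and $\QL b_n\ell_n^{-6}=\frac{\zeta\Qpot}{4}$. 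Substituting these gives $(\frac{1-\zeta}{2}+\frac12+\frac{\zeta}{4}+o(1))\Qpot\ell_n^2=(1-\frac{\zeta}{4}+o(1))\Qpot\ell_n^2$. One checks that this choice of $\ell_n$ optimizes the three-term function of the scale, so the bound is sharp.

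\textbf{Step 2: lower bound and blow-up.} For a ground state $u_n$, the diamagnetic inequality $|\nabla|u||\le|(\nabla-i\boldsymbol{A}_\Omega)u|$ gives
\begin{equation*}
E^{\mathrm{NLS}}\ge \int|\nabla|u_n||^2-\frac{a_n}{2}\int|u_n|^4+\frac{b_n}{6}\int|u_n|^6+\frac{4\omega-\Omega^2}{4}\int|\x|^2|u_n|^2 .
\end{equation*}
This alone loses the $\frac{\Omega^2}{4}|\x|^2$ term. To recover it, we expand
$|(\nabla-i\boldsymbol{A}_\Omega)u|^2=|\nabla u|^2+\frac{\Omega^2}{4}|\x|^2|u|^2-\Omega\,\x^\perp\cdot\mathrm{Im}(\bar u\nabla u)$
and show that the angular momentum term is $o(\ell_n^2)$ after rescaling. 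We first establish crude bounds $\epsilon_n^{-2}:=\int|\nabla u_n|^2\to\infty$ and $\int|u_n|^4\sim\epsilon_n^{-2}$ from GN and the upper bound, so that blow-up holds. Next we set $w_n(\x)=\epsilon_n u_n(\epsilon_n\x+\x_n)e^{-i\theta_n(\x)}$, where $\x_n$ is a concentration point (from vanishing-exclusion via the $L^4$ mass) and the phase is the gauge $\theta_n=\epsilon_n\boldsymbol{A}_\Omega(\x_n)\cdot\x+$const, which removes the magnetic translation. From the Euler--Lagrange equation and the energy, $w_n$ is bounded in $H^1$, and $\int|\nabla w_n|^2-\frac{a_*}{2}\int|w_n|^4\to0$, so $w_n$ is an optimizing sequence for GN. Then $|w_n|\to Q_0$ strongly (up to scaling normalization) and $w_n\to Q_0$ after a constant phase, because a GN optimizer must be a phase times $Q$, while $\|\nabla w_n\|\to\|\nabla|w_n|\|$ forces the phase to be asymptotically constant. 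This strong convergence kills the magnetic cross term at leading order and shows $\x_n\to0$ (else the potential energy is too large). With $w_n\to Q_0$ in hand, the lower bound becomes
\begin{equation*}
f(\epsilon_n)=(1-a_n/a_*)\epsilon_n^{-2}+\frac{\Qpot}{2}\epsilon_n^2+\QL b_n\epsilon_n^{-4}+o(\epsilon_n^2),
\end{equation*}
using GN for the first term, Fatou/strong convergence for the $L^6$ and potential terms, and the fact that the magnetic term is nonnegative after the gauge. Minimizing over $\epsilon$ and comparing with Step 1 forces $\epsilon_n/\ell_n\to1$ and gives \eqref{eq:energy-exp}.

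\textbf{Step 3: identification and $L^\infty$.} Since $\epsilon_n/\ell_n\to1$, the rescaling in \eqref{eq:con} is by $\ell_n$. The limit is unique, so the whole sequence converges. For the $L^\infty$ convergence, we rescale the Euler--Lagrange equation; the Lagrange multiplier satisfies $\mu_n\ell_n^2\to-1$. Elliptic estimates (Kato's inequality for $|w_n|$, De Giorgi/Moser or a comparison argument) give uniform bounds and exponential decay, and these upgrade $H^1$ convergence to $L^\infty$.

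\textbf{Main obstacle.} The hard part is controlling the rotation term: showing $\Omega\int\x^\perp\cdot\mathrm{Im}(\bar u_n\nabla u_n)=o(\ell_n^2)$, equivalently an $o(1)$ bound on the imaginary part of the gauged $w_n$ in a weighted $H^1$ norm. We would first try to obtain it from the energy gap, $\int|\nabla w_n|^2-\int|\nabla|w_n||^2\to0$ at rate $O(\ell_n^4)$ relative to the $\ell_n^{-2}$ scale, together with exponential decay of $w_n$ to handle the unbounded weight $|\x|$.
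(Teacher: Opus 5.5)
\textbf{Gap: the rotation term is only shown to be $O(\ell_n^2)$, not $o(\ell_n^2)$.} Your upper bound and the overall blow-up scheme are consistent with the paper: a GN-optimizing sequence, plus a magnetic translation/gauge to absorb an off-centre concentration point. But the step you flag as the main obstacle is left unresolved, and the route you suggest for it cannot close.

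At the $\ell_n$-scale the energy gap does give $D_n:=\int|\nabla w_n|^2-\int|\nabla|w_n||^2=O(\ell_n^4)$. However, it controls the rotation only through Cauchy--Schwarz, $|\langle w_n,Lw_n\rangle|\le\||\x|w_n\|_{L^2}D_n^{1/2}=O(\ell_n^2)$. So the rotational contribution $\Omega\ell_n^2\langle w_n,Lw_n\rangle$ to $\mathcal{G}_{n,\omega,\Omega}(w_n)$ is $O(\ell_n^4)$, which is the same order as the quantity $(1-\zeta/4)\Qpot\ell_n^4$ you are computing. The best lower bound this yields is $D_n-\Omega\ell_n^2\||\x|w_n\|_{L^2}D_n^{1/2}\ge-\frac{\Omega^2\ell_n^4}{4}\int|\x|^2|w_n|^2$. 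That is exactly the diamagnetic loss: $\omega$ is replaced by $\omega-\Omega^2/4$, and the constant is not sharp. Exponential decay only makes the weight $|\x|$ harmless; it does not turn $O(\ell_n^2)$ into $o(\ell_n^2)$.

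Likewise, ``the magnetic term is nonnegative after the gauge'' is not available. Apart from the diamagnetic inequality, which discards the centrifugal term, there is no sign. For a minimizer, comparison with $\overline{u_n}$ even shows $-\Omega\int\x^\perp\cdot\operatorname{Im}(\overline{u_n}\nabla u_n)\le0$. Consequently your lower bound $f(\epsilon_n)$ with the full $\frac{\Qpot}{2}\epsilon_n^2$ is not established, and neither is \eqref{eq:energy-exp} or $\epsilon_n/\ell_n\to1$.

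\textbf{Missing idea: a cancellation from the radial symmetry of the limit profile, on top of an $O(\ell_n^2)$ bound for the imaginary part.} The paper proceeds as follows.
\begin{enumerate}
\item Fix the phase so that $\int Q_{\lambda,\x_0}I_n=0$, where $I_n=\operatorname{Im}w_n$.
\item Use that $Q_{\lambda,\x_0}$ is the non-degenerate ground state of $\mathcal{L}_-=-\Delta-a_*Q_{\lambda,\x_0}^2+\lambda^2$. Coercivity in $H^1$ on $Q_{\lambda,\x_0}^\perp$ then gives $\|I_n\|_{H^1}\le C\ell_n^2$ from the energy.
\item Write $\langle w_n,Lw_n\rangle=2\int\x^\perp\cdot R_n\nabla I_n$. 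Replacing $R_n$ by $Q_{\lambda,\x_0}$ costs $\||\x|(R_n-Q_{\lambda,\x_0})\|_{L^2}\|\nabla I_n\|_{L^2}=o(\ell_n^2)$; this is where decay is really used.
\item After integrating by parts, the identity $(\x-\x_0)^\perp\cdot\nabla Q_{\lambda,\x_0}=0$ kills the angular momentum about the blow-up centre. Only the linear-momentum term $2\int\x_0^\perp\cdot R_n\nabla I_n$ remains.
\item That term is removed by the gauge $e^{i\Omega\ell_n^2\x_0^\perp\cdot\x/2}$ at cost $-\frac{\Omega^2}{4}|\x_0|^2$. The trap term $\omega|\x_0|^2$ dominates this because $\Omega<2\sqrt{\omega}$.
\end{enumerate}
Only then does minimizing over $(\lambda,\x_0)$ give $\lambda=1$, $\x_0=0$ and the sharp constant $1-\zeta/4$.

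Your magnetic translation to the concentration point is compatible with this; it plays the role of the paper's final gauge. It must, however, be combined with the cancellation above, which your proposal does not contain.
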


\begin{re}[The exceptional case \(\zeta=+\infty\)]\label{re12}
    In Theorem \ref{thm:main}, we have excluded the case where the limit in the rate condition \eqref{eq:rate} is \(-\infty\), i.e. \(\zeta=+\infty\). This exceptional regime corresponds to \(a_n\searrow a_*\) at a rate slower in order of magnitude than \(b_n^{2/3}\searrow0\). In this framework, although \(a_n\to a_*\) from above, the resulting energy asymptotics are fundamentally different and **not uniform** with the cases \(\zeta<+\infty\) studied in Theorem \ref{thm:main}. As shown in Appendix \ref{appa}, the NLS ground state energy satisfies the two-sided estimate  
\begin{align*}
    -C_1\frac{(a_n-a_*)^2}{b_n} \le E^{\mathrm{NLS}}(a_n,b_n,\Omega) \le -C_2\frac{(a_n-a_*)^2}{b_n}
\end{align*}
    for some constants \(C_1,C_2>0\). Consequently, depending on the relative rate of \(a_n-a_*\) versus \(b_n\), the energy magnitude \(\frac{(a_n-a_*)^2}{b_n}\) may tend to \(0\), converge to a positive constant, or diverge to \(+\infty\); in the last case the energy itself tends to \(-\infty\). Thus the behavior in this exceptional regime is non‑universal and cannot be captured by the single scaling expansion \eqref{eq:energy-exp}. Moreover, a rigorous derivation of the blow‑up profile in this case would require proving uniform boundedness of both the gradient and the weighted potential terms in the rescaled energy, which is considerably more delicate and currently out of reach. For these reasons we exclude \(\zeta=+\infty\) from Theorem \ref{thm:main} and provide only the energy estimate in Appendix \ref{appa}, leaving the full characterization of the collapse profile for future investigation.
\end{re}
  
At first glance, the asymptotic expansion \eqref{eq:energy-exp} and the universal blow-up profile \(Q_0\) coincide with those obtained in the non-rotating case \cite{NR2d}. This coincidence, however, is not a trivial consequence of the local two-body attraction, but rather a nontrivial statement of robustness. In the non-rotating framework, ground states can be chosen real-valued and nonnegative, so the angular momentum vanishes identically; the analysis reduces to a scalar variational problem for which the limiting profile is determined by standard concentration-compactness arguments. In sharp contrast, the rotation term \(-\Omega L\) forces ground states to be genuinely complex-valued, carrying nontrivial phases. The key difficulty lies in proving that the angular momentum \(\langle w_n, L w_n\rangle\) is \(o(\ell_n^2)\) for the rescaled ground state \(w_n(\x) = \ell_n u^{\mathrm{NLS}}_{a_n,b_n,\Omega}(\ell_n\x)\), despite its a priori nonzero contribution. This requires a delicate two-step argument: first, establishing the refined imaginary-part estimate \(\|I_n\|_{H^1}\le C\ell_n^2\) (where \(I_n=\operatorname{Im} w_n\)); second, applying a gauge transformation to remove the phase induced by the center-of-mass motion. These steps have no analogue when \(\Omega=0\). Thus, the value of our result lies precisely in proving that the universal collapse mechanism survives in a setting where the standard tools of the non-rotating analysis are inapplicable and where new angular-momentum control is essential.\vspace{4.1mm}

In the second part of the paper, we turn to the $N$-particle Hamiltonian \eqref{r11} with attractive, two- and repulsive, three-body interaction potentials of the form \eqref{12}.
In the mean-field regime, we verify the validity of the effective cubic-quintic NLS functional \eqref{110}. As usual, the convergence of ground states is formulated using
$k$-particles reduced density matrices, defined for $0\leqslant k\leqslant N$ and any  $\Psi_N\in\mathfrak{H}^N$ as the partial trace
\begin{align*}
    \gamma^{(k)}_{\Psi_N}:=\mathrm{Tr}_{k+1\to N}|\Psi_N\rangle\langle\Psi_N|.
\end{align*}
Equivalently, $\gamma^{(k)}_{\Psi_N}$ is the trace class operator on $\mathfrak{H}^k$ with kernel
\begin{align*}
    \gamma^{(k)}_{\Psi_N}(\x_1,...,\x_k;\y_1,...,\y_k):=\int_{\mathbb{R}^{2(N-k)}}\overline{{\Psi_N}(\x_1,...,\x_k;\mathrm{Z})}\,{\Psi_N}(\y_1,...,\y_k;\mathrm{Z})\,d\mathrm{Z}.
\end{align*}
One of the main features of the reduced density matrices is that we can write
\begin{align*}
    \frac{\langle \Psi_N|H^{N}_{a,b,\Omega}|\Psi_N\rangle}{N}=\mathrm{Tr}\left[h\gamma^{(1)}_{\Psi_N}\right] -\frac{a}{2}\mathrm{Tr}\left[U_{N^\alpha}\gamma^{(2)}_{\Psi_N}\right] +\frac{b}{6}\mathrm{Tr}\left[W_{N^\beta}\gamma^{(3)}_{\Psi_N}\right],
\end{align*}
where $h:=-\Delta+\omega|\x|^{2}-\Omega L$ is the one-particle operator. Moreover, the Bose-Einstein
condensation is characterized properly by
\begin{align*}
    \exists\, u\in L^2(\mathbb{R}^2),\quad\lim_{N\to+\infty}\mathrm{Tr}\left|\gamma^{(k)}_{\Psi_N}-|u^{\otimes k}\rangle\langle u^{\otimes k}|\right|=0,\quad\forall\, k=1,2,\cdots .
\end{align*}

Our main result on the many-body problem is the following.
\begin{thm}[Condensation and collapse of the many-body ground states]\label{the:many-body}
Let $0<\Omega<\Omega^*$,  $0< \alpha<1 / 12$  and  $0<\beta<1 / 24 $. Assume that  $U$  and  $W$  satisfy \eqref{13}, \eqref{14} and \eqref{15}.
\begin{itemize}
\item[\emph{(i)}] Let $a, b>0$ be fixed with $a<a_{*}$ or  ($a \geqslant a_{*}$ and $\alpha<\beta$). Then,
\begin{align}\label{eq:QMNLS0}
    \lim _{N \rightarrow+\infty} E^{\mathrm{QM}}(a,b,N,\Omega)=E^{\mathrm{NLS}}(a,b,\Omega)>-\infty.
\end{align}
Moreover, for any sequence of ground states  $\left\{\Psi_{N}\right\}_{N}$  of  $E^{\mathrm{QM}}(a,b,N,\Omega)$  given by \eqref{16}, there exists a Borel probability measure $\mu$ supported on
\begin{align*}
    \mathcal{M}_{\mathrm{NLS}}:=\Big\{u\in L^2(\mathbb{R}^2):\mathcal{E}^{\mathrm{NLS}}_{a,b,\Omega}(u)=E^{\mathrm{NLS}}(a,b,\Omega)\text{ and }\|u\|_{L^2}=1 \Big\}
\end{align*}
such that, along a subsequence,
\begin{align}\label{condensation}
   \lim _{N \rightarrow+\infty} \operatorname{Tr}\left|\gamma_{\Psi_{N}}^{(k)}-\int_{\mathcal{M}_{\mathrm{NLS}}}| u^{\otimes k}\rangle\langle u^{\otimes k}| {d} \mu(u) \right|=0, \quad \forall\, k=1,2, \cdots.
\end{align}
If  $\mathcal{M}_{\rm{NLS}}$  has, up to a phase, a unique ground state  $u_{\rm{NLS}}$, then
\begin{align}\label{bec}
\lim _{N \rightarrow+\infty} \operatorname{Tr}\left|\gamma_{\Psi_{N}}^{(k)}-| u_{\rm{NLS}}^{\otimes k}\rangle\langle u_{\rm{NLS}}^{\otimes k}| \right|=0, \quad \forall\, k=1,2, \cdots 
\end{align}
for the whole sequence.

\item[\emph{(ii)}] Let  $Q_{0}$, $\mathcal{Q}_{L^{6}}$, $\cQ_{\mathrm{pot}}$, $\zeta$, $\left\{a_{N}\right\}_{N}$, $\left\{b_{N}\right\}_{N}$, and  $\left\{\ell_{N}\right\}_{N}$  be as in Theorem \ref{thm:main}. Assume further that $|\x|U(\x)\in L^{1}(\mathbb{R}^{2})$, that $\zeta\neq4$, that $\alpha<\beta$ if $\zeta\geqslant 1$, and that $\ell_{N} \sim N^{-\eta}$  with
\begin{align*}
0<\eta<\min\!\Big\{\frac{\alpha}{5},\frac{1-12\alpha}{12},\beta,\frac{1-24\beta}{12}\Big\}.   
\end{align*}
Then,
\begin{align}\label{eq:energy-expN}
 E^{\mathrm{QM}}(a_{N}, b_{N},N,\Omega)=E^{\mathrm{NLS}}(a_{N}, b_{N},\Omega)\big(1+o(1)\big)=\left( 1 - \frac{\zeta}{4} + o(1) \right) \cQ_{\mathrm{pot}} \ell_N^{2}. 
\end{align}
Moreover, let $\left\{\Psi_{N}\right\}_{N}$ be any sequence of ground states of  $E^{\mathrm{QM}}(a_N,b_N,N,\Omega)$, given by \eqref{16}. Let $\Phi_{N}(\x)=\ell_{N} \Psi_{N}\left(\ell_{N} \x\right)$, we have
\begin{align}\label{condensation0}
    \lim _{N \rightarrow+\infty} \operatorname{Tr}\left|\gamma_{\Phi_{N}}^{(k)}-| Q_{0}^{\otimes k}\rangle\langle Q_{0}^{\otimes k}| \right|=0, \quad \forall\, k=1,2, \cdots .
\end{align}
\end{itemize}
\end{thm}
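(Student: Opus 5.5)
The plan is to follow the quantum de Finetti strategy of Lewin--Nam--Rougerie, as adapted to two- and three-body interactions by Nguyen--Ricaud \cite{NR2d}, and to add the magnetic kinetic energy $[-i\nabla-\boldsymbol{A}_\Omega]^2$ with effective trap $\frac{4\omega-\Omega^2}{4}|\x|^2>0$. Since $\Omega<\Omega^*$, the one-body operator $h$ has compact resolvent. The diamagnetic inequality $|\nabla|u||\leqslant|(\nabla-i\boldsymbol{A}_\Omega)u|$ transfers every Gagliardo--Nirenberg/Hoffmann-Ostenhof type bound from the non-rotating case.

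\textbf{Upper bound.} For both (i) and (ii) we use trial states $u^{\otimes N}$, with $u$ a minimizer from Theorem \ref{existence} in case (i), and a regularized rescaled $Q_0$ with a suitable phase in case (ii). Then
\begin{align*}
\langle u^{\otimes N},H^N u^{\otimes N}\rangle/N=\mathcal{E}^{\mathrm{NLS}}_{a,b,\Omega}(u)+\text{errors},
\end{align*}
where the errors come from $\iint U_{N^\alpha}(\x-\y)(|u|^2(\x)-|u|^2(\y))|u|^2(\y)$. In case (ii) these are controlled by $|\x|U\in L^1$ and are $O(N^{-\alpha}\ell_N^{-5})$, which is where $\eta<\alpha/5$ enters. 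The analogous three-body error gives $\eta<\beta$.

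\textbf{Lower bound.} We first prove an a priori stability bound: for ground states, $\mathrm{Tr}[h\gamma^{(1)}]\leqslant C$ in case (i) and $\leqslant C\ell_N^{-2}$ in case (ii). This is obtained via operator inequalities of the form $U_{N^\alpha}(\x_1-\x_2)\lesssim$ kinetic energies, with the three-body term absorbing the two-body attraction when $a\geqslant a_*$. The condition $\alpha<\beta$ guarantees that the three-body range is shorter, so the cubic--quintic control is legitimate. Next we localize the state in the eigenbasis of $h$ below an energy cutoff $L$, i.e.\ we apply the finite-dimensional quantum de Finetti theorem with explicit error $Cd_L/N$, where $d_L\sim L^{2}$ by the magnetic Weyl law (Cwikel--Lieb--Rozenblum for $h$). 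The two- and three-body terms are then bounded below by $\int \mathcal{E}^{\mathrm{NLS}}_{a,b,\Omega}(u)\,d\mu_N(u)$ up to errors that are polynomial in $N^{\alpha},N^{\beta},L,\ell_N^{-1}$ divided by powers of $N$. Optimizing $L$ produces the constraints $\alpha<1/12$, $\beta<1/24$, and in case (ii) the conditions $\eta<(1-12\alpha)/12$ and $\eta<(1-24\beta)/12$.

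\textbf{Limit and convergence of states.} In case (i), letting $N\to\infty$ gives \eqref{eq:QMNLS0}. Moreover, $\gamma^{(k)}_{\Psi_N}\to\int|u^{\otimes k}\rangle\langle u^{\otimes k}|d\mu$, with $\mu$ concentrated on $\mathcal M_{\mathrm{NLS}}$ because the energies match. Strong trace-class convergence follows from the kinetic bound, the compactness of $h^{-1}$, and mass conservation.

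In case (ii), the lower bound gives $E^{\mathrm{QM}}\geqslant E^{\mathrm{NLS}}(a_N,b_N,\Omega)-o(\ell_N^{2})$, and Theorem \ref{thm:main} supplies the expansion \eqref{eq:energy-expN}. The assumption $\zeta\ne4$ ensures that the leading coefficient $(1-\zeta/4)\cQ_{\mathrm{pot}}\ell_N^2$ is nonzero, so the additive error is truly relative.

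For \eqref{condensation0}, we rescale: $\mu_N$, pushed forward by $u\mapsto \ell_N u(\ell_N\cdot)$, is supported almost surely on approximate minimizers. We then need a quantitative stability version of Theorem \ref{thm:main} for approximate minimizers: up to translation and phase, their rescalings converge to $Q_0$. Uniqueness of $Q_0$ then yields convergence of $\gamma^{(k)}_{\Phi_N}$. Because $\ell_N\to0$ and the harmonic trap localizes near the origin at leading order, the translation should be trivialized.

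\textbf{Main obstacle.} The hardest step is expected to be the lower bound in case (ii). The de Finetti errors must be $o(\ell_N^2)$ while the relevant energy scale is $\ell_N^{-2}$ kinetic. This forces tracking all errors with precise powers of $\ell_N$, handled by splitting $h=(h-E_0)+E_0$ and truncating at $L\sim \ell_N^{-2}N^{\epsilon}$. The same difficulty reappears in the approximate-minimizer stability: the angular-momentum control, namely the $\|I_n\|_{H^1}\leqslant C\ell_n^2$ estimate used for exact minimizers, must be redone for near-minimizers. The first attempt would be to replace the Euler--Lagrange-based argument with one that uses only the energy and the diamagnetic bound.
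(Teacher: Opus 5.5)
There is a genuine gap in your lower bound, at the step where you pass from the de Finetti measure to the NLS functional. After localization, what you control for each $u$ in the support of $\mu_N$ is $\frac13\langle u^{\otimes3},H^{(3)}_{a,b,N,\Omega}u^{\otimes3}\rangle=\mathcal{E}^{\mathrm H}_{a,b,N,\Omega}(u)$. Bounding this from below by $\mathcal{E}^{\mathrm{NLS}}_{a,b,\Omega}(u)$ is exactly where Lemma~\ref{lem:conv} works against you. The two-body term has the right sign by \eqref{211}, but the three-body Hartree term is \emph{smaller} than $\frac b6\norm{u}_{L^6}^6$. Moreover, \eqref{212} only gives the qualitative error $o(1)\,\||u|\|_{H^1}^6$, with no rate, since no moment condition is imposed on $W$. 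For $u\in P\mathfrak H$ you only know $\langle u,hu\rangle\le L\to\infty$ (your cutoff $L$ is the paper's $K$). So this error is not small in case (i), and not $o(\ell_N^2)$ in case (ii); it is not polynomial in $N^\alpha,N^\beta,L,\ell_N^{-1}$ over a power of $N$.

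Your a priori bound $\Tr[h\gamma^{(1)}_{\Psi_N}]\le C$ (resp.\ $C\ell_N^{-2}$) does not repair this, because it controls only an average over $\mu_N$. In any case, the mechanism you propose for it is not available. Form bounds $U_{N^\alpha}(\x_1-\x_2)\le\eps(-\Delta_1)+C_\eps$ in 2D have $C_\eps$ growing like $N^{2\alpha}$. No operator inequality is available by which the three-body repulsion absorbs the two-body attraction when $a\ge a_*$. That stability is a nonlinear effect that appears only at the one-body level: H\"older/Young, Gagliardo--Nirenberg, and the rescaling argument using $\alpha<\beta$.

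The missing idea is to use Hartree theory as an intermediate step. The paper bounds $\mathcal{E}^{\mathrm H}_{a,b,N,\Omega}(u)\ge E^{\mathrm H}(a,b,N,\Omega)$ pointwise, which needs no kinetic control at all. The projection errors are controlled only by $\|U_{N^\alpha}\|_{L^\infty}$ and $\|W_{N^\beta}\|_{L^\infty}$, and are absorbed into $K\Tr[P_\perp\gamma^{(1)}_{\Psi_N}]$ with $N^{4\alpha}+N^{8\beta}\ll K\ll N^{1/3}$. This gives $E^{\mathrm H}\ge E^{\mathrm{QM}}\ge E^{\mathrm H}-C(N^{2\alpha-1/6}+N^{4\beta-1/6})$, uniformly in the state and independently of $\ell_N$. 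So the $\ell_N$-dependent cutoff you anticipate as the main obstacle is not needed.

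The passage to NLS is then done once, for (approximate) Hartree minimizers, in Proposition~\ref{thm:Hartree}. Their rescalings are shown to be bounded in $H^1$, so that \eqref{212} applies with $N^\beta\ell_N\to\infty$. This is where $\eta<\beta$ and $\alpha<\beta$ (for $\zeta\ge1$) really enter. They do not come from a three-body error in the upper bound, which has the favourable sign.

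The remaining pieces also go through the Hartree level in the paper:
\begin{itemize}
\item The kinetic bound needed for strong trace-class compactness in (i) is obtained afterwards. One runs the same comparison for $H^N_{a,b,\Omega}-\lambda\sum_j h_j$ and uses stability of $E^{\mathrm H}(a/(1-\lambda),b/(1-\lambda),N,\Omega)$.
\item The support of $\mu$ is identified by splitting into low and high kinetic energy, with the high part handled by the modified Hartree functional.
\item For \eqref{condensation0}, the paper shows via the $\delta_N\to0$ argument (where $\zeta\ne4$ is used) that $\mu_N$ concentrates on approximate \emph{Hartree} minimizers. It then invokes Proposition~\ref{thm:Hartree}(ii).
\end{itemize}
Your worry that the angular-momentum estimate must be redone without the Euler--Lagrange equation is fair. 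However, it attaches to that Hartree statement, which the paper asserts for approximate ground states, not to a stability version of Theorem~\ref{thm:main}.
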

\begin{re}[Stricter scaling ranges in the rotating many-body proof]
    The technical conditions in Theorem \ref{the:many-body}(ii) differ from those in the non-rotating setting \cite{NR2d} in one essential respect: we require \(\alpha<1/12\) and \(\beta<1/24\), whereas \cite{NR2d} works with \(\alpha<1/8\) and \(\beta<1/16\). This tighter restriction is the direct price of controlling the angular momentum operator \(L\) in the quantum de Finetti argument. The projection-error estimates \eqref{233} and \eqref{234} involve operator norms of \(U_{N^\alpha}\) and \(W_{N^\beta}\) that scale as \(N^{2\alpha}\) and \(N^{4\beta}\), respectively; in the rotating case, these must be balanced against an additional \(O(N^{-1/2})\) de Finetti error (see \eqref{pure-estimate}), yielding the stricter bounds. The condition \(\alpha<\beta\) for \(\zeta\ge 1\), however, remains unchanged from \cite{NR2d}, as it reflects the universal requirement that the three-body repulsion be sufficiently short-range to stabilize the system when the two-body interaction is at or above criticality. 
\end{re}

\begin{re}[Exclusion of the critical case $\zeta=4$]
   When \(\zeta=4\), the leading-order coefficient in \eqref{eq:energy-expN} vanishes, so the asymptotics would be governed by subleading corrections. In the non-rotating case \cite{lz-27}, such corrections can be extracted via a two-scale expansion because the angular momentum is identically zero. In the rotating setting, however, our control of the rotational contribution is only \(\langle w_n,Lw_n\rangle=o(\ell_n^2)\), which yields a rotational error \(o(\ell_n^4)\) in the scaled energy. While this estimate is sufficient for the leading-order expansion \eqref{eq:energy-exp} (where the coefficient \(1-\zeta/4\) is nonzero), it is too crude when the leading term vanishes: we know only that the rotational error vanishes relative to \(\ell_n^4\), but we lack the explicit expansion or sharp enough control to identify the precise subleading coefficient. A refined treatment of the imaginary part beyond the current \(o\)-estimate would be necessary to address this exceptional case, and we defer it to future investigation.
\end{re}

The paper is organized as follows. Section \ref{SEC2} is devoted to the proof of Theorems \ref{existence} and \ref{thm:main}, i.e.\ the analysis of the rotating cubic-quintic NLS functional. In Section \ref{SEC3} we first study the Hartree theory as an intermediate step and then prove Theorem \ref{the:many-body}, establishing the condensation and collapse in the full many-body setting.

\section{Existence and collapse in the NLS theory}\label{SEC2}
In this section we provide the proof of Theorem \ref{existence} and  \ref{thm:main}. Recall that $Q$ is the unique positive radial solution of \eqref{Qequ}, and the function $Q$ is decreasing away from the origin, and \cite[Proposition 4.1]{Gidas81}
\begin{align} 
	Q(\x), |\nabla Q(\x)|=O(|\x|^{-\frac{1}{2}}e^{-\nu|\x|})\quad\text{as}\ |\x|\to\infty,\label{decay}
\end{align}
where $\nu$ is a positive constant. 

We first discuss the following the existence and nonexistence of the minimizer for $E^{\mathrm{NLS}}(a,b,\Omega)$.

\begin{proof}[\emph{\textbf{Proof of Theorem \ref{existence}}}] 
(1) For $0<\Omega<\Omega^*=2\sqrt{\omega}$ and $b=0$, the system becomes a simplified model with only attractive two-body interaction. Regarding the existence and nonexistence of the minimizer for this model, see refer to \cite[Theorem 1.1]{ARMAGuo}. The remaining part of our proof demonstrates the existence with $0<\Omega<\Omega^*$ and $b>0$.
 
Recall that Theorem \ref{existence} assumes $a>0$. For any $0<\Omega<\Omega^*$ and $b>0$, and  let $\{u_n\}_n\subset{H}^1(\mathbb{R}^2)$ and $\|u_n\|_{L^2}=1$ be a minimizing sequence of $E^{\mathrm{NLS}}(a,b,\Omega)$. 
First, we deduce from the H\"older inequality and Young inequality that
\begin{gather}\label{HYCQ}
\begin{aligned} 
	\int_{\mathbb{R}^2}|u_n|^4 d\x&\leqslant\left( \int_{\mathbb{R}^2}|u_n|^6 d\x\right)^{\frac{1}{2}}\left( \int_{\mathbb{R}^2}|u_n|^2 d\x\right)^{\frac{1}{2}}\leqslant\frac{b}{3}\int_{\mathbb{R}^2}|u_n|^6 d\x+\frac{3}{4b}.
\end{aligned}
\end{gather}
Hence,
\begin{align*}
    E^{\mathrm{NLS}}(a,b,\Omega)+1 
    &\geqslant \int_{\mathbb{R}^2}|(\nabla -i \boldsymbol{A}_\Omega)u_n|^2 d\x+\frac{4\omega-\Omega^2}{4}\int_{\mathbb{R}^2}|\x|^2|u_n|^2 d\x-\frac{3a}{8b}.
\end{align*}
This implies that for any given  $0<\Omega<\Omega^*$ and $b>0$, there exists a constant  $C>0$, independent of  $n$, such that
\begin{align}\label{bound-V}
    \frac{4\omega-\Omega^2}{4}\int_{\mathbb{R}^2}|\x|^2|u_n|^2 d\x<C \quad \text { uniformly for large } n>0.
\end{align}
We thus deduce from \eqref{110} and \eqref{HYCQ}, \eqref{bound-V} that for large  $n>0 $,
\begin{align}
    E^{\mathrm{NLS}}(a,b,\Omega)+1+\frac{3a}{8b} \geqslant \int_{\mathbb{R}^{2}}|\nabla u_{n}|^2d\x-\Omega L\left( u_{n}\right).\label{311}
\end{align}
Since  $0<\Omega<\Omega^{*}$  is fixed, by the Cauchy-Schwarz inequality, we then obtain from \eqref{bound-V} that for sufficiently large $n>0$,
\begin{align}
    |\Omega L(u_n)|\leqslant \frac{1}{2} \int_{\mathbb{R}^{2}}|\nabla u_{n}|^{2} d\x+\frac{\Omega^{2}}{2} \int_{\mathbb{R}^{2}}|\x|^{2}|u_{n}|^{2}d\x \leqslant\frac{1}{2} \int_{\mathbb{R}^{2}}|\nabla u_{n}|^{2}d\x+C\label{313}
\end{align}
where  $C>0$  is a constant, independent of  $n$. Thus, we conclude from \eqref{311} and \eqref{313} that for sufficiently large  $n>0 $,
\begin{align}\label{314}
    E^{\mathrm{NLS}}(a,b,\Omega)+1+\frac{3a}{8b}  \geqslant \frac{1}{2} \int_{\mathbb{R}^{2}}|\nabla u_{n}|^{2} d\x-C.
\end{align}
We hence deduce form \eqref{bound-V} and \eqref{314} that for any given  $0<\Omega<\Omega^*$ and $b>0$, the sequence $ \left\{\nabla u_{n}\right\}_n$ and $\{|\x|u_{n}\}_n$  are uniformly bounded in $L^2(\mathbb{R}^2)$. By \cite[Lemma 2.1]{Ignat-JFA-06}, we then obtain that, after passing to a subsequence if necessary, there exists $u_0\in{H}^1(\mathbb{R}^2)$  such that
\begin{align*}
    u_{n} \rightharpoonup u_0 \text { weakly in } {H}^1(\mathbb{R}^2), \quad u_{n} \rightarrow u_0 \text { strongly in } L^{q}\big(\mathbb{R}^{2}\big),
\end{align*}
where $ 2 \leqslant q<\infty$  is arbitrary. Therefore, by the weak lower semicontinuity, we further conclude that  $\|u_0\|_{L^2}=1$  and  $\mathcal{E}^{\mathrm{NLS}}_{a,b,\Omega}(u_0)=E^{\mathrm{NLS}}(a,b,\Omega)$. This implies that for any given  $0<\Omega<\Omega^{*}$  and  $b>0$, there exists at least one minimizer $u_0$ of  $E^{\mathrm{NLS}}(a,b,\Omega)$.

(2) To prove that there is no minimizer for \eqref{m} with $\Omega>\Omega^*$ or $b<0$.  We distinguish two (overlapping) cases $\Omega>\Omega^*$ and $b<0$.

\textbf{Case 1: $b<0$.} For any $\ell>0$, we consider the isotropic trial function $v_\ell(\x)={\ell^{-1}}Q_0(\ell^{-1}\x)$. 
Since $Q_0$ is real and radial, we have $\boldsymbol{A}_\Omega\cdot\nabla v_\ell(\x)=0$, and
\begin{align*}
    \int_{\mathbb{R}^2}|(\nabla-i\mathbf{A}_\Omega)v_\ell|^2 d\x+\frac{4\omega-\Omega^2}{4}\int_{\mathbb{R}^2}|\x|^2|v_\ell|^2 d\x=\int_{\mathbb{R}^2}|\nabla v_\ell|^2d\x+\omega\int_{\mathbb{R}^2}|\x|^2|v_\ell|^2 d\x.
\end{align*}
Direct scaling computations yield
\begin{align}
    \mathcal{E}^{\mathrm{NLS}}_{a,b,\Omega}(v_\ell)&\leqslant\ell^{-2}\|\nabla Q_0\|_{L^2}^2+\omega\ell^2\int_{\mathbb{R}^2}|\x|^2|Q_0|^2 d\x-\frac{a\ell^{-2}}{2}\int_{\mathbb{R}^2}|Q_0|^4 d\x+b\,\mathcal{Q}_{L^6}\,\ell^{-4}\notag\\
    &=\left(1-\frac{a}{a_*}\right)\ell^{-2}+\frac{\ell^2}{2}\mathcal{Q}_{\mathrm{pot}}+b\mathcal{Q}_{L^6}\ell^{-4}.\label{upperb}
\end{align}
Since $b<0$, taking $\ell\to 0^+$ forces the last term to $-\infty$ while the remaining terms stay bounded. Hence $E^{\mathrm{NLS}}(a,b,\Omega)\leqslant\lim_{\ell\to0^+}\mathcal{E}^{\mathrm{NLS}}_{a,b,\Omega}(v_\ell)=-\infty$, and no minimizer can exist.

\textbf{Case 2: $\Omega>\Omega^*$.}
In this regime the centrifugal force dominates, but the isotropic trial function used above fails to detect the instability because the rotational contributions cancel exactly. We therefore employ an anisotropic trial function with a phase,
\begin{align*}
    u_\eta(\x)=\eta^{1/2}Q_0(\eta x_1,x_2)\,e^{i\frac{\Omega}{2}x_1x_2},\quad \eta>0.
\end{align*}
One checks that $\|u_\eta\|_{L^2}=1$, and denote $u_\eta(\x)=:f (\x)e^{i\phi(\x)}$ with $f(\x)=\eta^{1/2}Q_0(\eta x_1,x_2)$ and $\phi(\x)=\frac{\Omega}{2}x_1x_2$. Note that $\nabla\phi=(\frac{\Omega}{2}x_2,\frac{\Omega}{2}x_1)$ and $\boldsymbol{A}_\Omega=\frac{\Omega}{2}(-x_2,x_1)$, so $\nabla\phi-\boldsymbol{A}_\Omega=(\Omega x_2,0)$ and $|\nabla\phi-\boldsymbol{A}_\Omega|^2=\Omega^2x_2^2$. Moreover, $|(\nabla-i\boldsymbol{A}_\Omega)u_\eta|^2=|\nabla f|^2+f^2|\nabla\phi-\boldsymbol{A}_\Omega|^2=|\nabla f|^2+\Omega^2f^2x_2^2$.  Changing variables $y_1=\eta x_1$, $y_2=x_2$,
\begin{align*}
    \int_{\mathbb{R}^2}|\nabla f(\x)|^2 d\x=\eta^2\int_{\mathbb{R}^2}|\partial_1Q_0(\y)|^2d\y+\int_{\mathbb{R}^2}|\partial_2Q_0(\y)|^2d\y=\frac{\eta^2}{2}+\frac12,
\end{align*}
where we used radial symmetry and \eqref{eq:Q0-identities} to get $\int_{\mathbb{R}^2}|\partial_1Q_0(\y)|^2d\y=\int_{\mathbb{R}^2}|\partial_2Q_0(\y)|^2d\y=\|\nabla Q_0\|_{L^2}^2/2=1/2$.
On the other hand, we have
\begin{align*}
    \Omega^2\int_{\mathbb{R}^2}x_2^2\,|f(\x)|^2 d\x
    &=\Omega^2\int_{\mathbb{R}^2}\eta\,x_2^2\,|Q_0(\eta x_1,x_2)|^2dx_1dx_2\\
    &=\Omega^2\int_{\mathbb{R}^2}y_2^2\,|Q_0(\y)|^2d\y=\frac{\Omega^2}{2}\int_{\mathbb{R}^2}|\y|^2|Q_0(\y)|^2d\y=\frac{\Omega^2}{4\omega}\mathcal{Q}_{\mathrm{pot}}.
\end{align*}
Hence, the total kinetic term reads
\begin{align*}
    \int_{\mathbb{R}^2}|(\nabla-i\boldsymbol{A}_\Omega)u_\eta|^2 d\x=\frac{\eta^2}{2}+\frac12+\frac{\Omega^2}{4\omega}\mathcal{Q}_{\mathrm{pot}}.
\end{align*}
and the potential energy term
\begin{align*}
    \frac{4\omega-\Omega^2}{4}\int_{\mathbb{R}^2}|\x|^2|u_\eta|^2 d\x&=\frac{4\omega-\Omega^2}{4}\biggl(\eta^{-2}\int y_1^2|Q_0(\y)|^2d\y+\int y_2^2|Q_0(\y)|^2d\y\biggr)\\
    &=\frac{4\omega-\Omega^2}{16\omega}\bigl(\eta^{-2}\mathcal{Q}_{\mathrm{pot}}+\mathcal{Q}_{\mathrm{pot}}\bigr).
\end{align*}
Thus, we obtain
\begin{align*}
    \mathcal{E}^{\mathrm{NLS}}_{a,b,\Omega}(u_\eta)=\frac{\eta^2}{2}+\frac12+\frac{\Omega^2}{2\omega}\mathcal{Q}_{\mathrm{pot}}+\frac{4\omega-\Omega^2}{16\omega}\bigl(\eta^{-2}\mathcal{Q}_{\mathrm{pot}}+\mathcal{Q}_{\mathrm{pot}}\bigr)-\frac{a}{a_*}\eta+b\,\eta^2\mathcal{Q}_{L^6}.
\end{align*}
When $\Omega>\Omega^*$, we have  $\frac{4\omega-\Omega^2}{16\omega}\mathcal{Q}_{\mathrm{pot}}<0$.
Hence, we have $E^{\mathrm{NLS}}(a,b,\Omega)\leqslant\lim_{\eta\to0^+}\mathcal{E}^{\mathrm{NLS}}_{a,b,\Omega}(u_\eta) =-\infty$,
and no minimizer can exist. This then completes the proof of Theorem \ref{existence}.
\end{proof}

Let $\left\{a_{n}\right\}_{n}$ and $\left\{b_{n}\right\}_{n}$  be as in Theorem \ref{thm:main}. Assume that $u^{\mathrm{NLS}}_{a_n,b_n,\Omega}$ is a ground state of $E^{\mathrm{NLS}}(a_n,b_n,\Omega)$, then $u^{\mathrm{NLS}}_{a_n,b_n,\Omega}$ satisfies the following Euler-Lagrange equation 
\begin{align*}
    -\Delta u^{\mathrm{NLS}}_{a_n,b_n,\Omega}+{\omega}|\x|^{2} u^{\mathrm{NLS}}_{a_n,b_n,\Omega}-a_n\left|u^{\mathrm{NLS}}_{a_n,b_n,\Omega}\right|^{2} u^{\mathrm{NLS}}_{a_n,b_n,\Omega}+\frac{b_n}{2}\left|u^{\mathrm{NLS}}_{a_n,b_n,\Omega}\right|^{4}u^{\mathrm{NLS}}_{a_n,b_n,\Omega}- \Omega L u^{\mathrm{NLS}}_{a_n,b_n,\Omega}+\mu_n u^{\mathrm{NLS}}_{a_n,b_n,\Omega}=0.
\end{align*}
It is convenient to work at the blow up scale, and define
\begin{align}\label{sclaing}
    w_n(\x) := {\ell_n} u^{\mathrm{NLS}}_{a_n,b_n,\Omega}({\ell_n}\x), \qquad \x \in\mathbb{R}^2,
\end{align}
where $\ell_n$ be given as \eqref{def:llen}. 
Then $\|w_n\|_{L^2} = \|u^{\mathrm{NLS}}_{a_n,b_n,\Omega}\|_{L^2} = 1$ and
\begin{gather*}
\begin{aligned}
    \mathcal{E}^{\rm NLS}_{a_n,b_n,\Omega}(u^{\mathrm{NLS}}_{a_n,b_n,\Omega})
    &=\frac{1}{\ell_n^2}\int_{\mathbb{R}^2}|(\nabla -i \boldsymbol{A}_{{\ell_n^2}\Omega})w_n|^2 d\x+\frac{(4\omega-\Omega^2)\ell_n^2}{4}\int_{\mathbb{R}^2}|\x|^2|w_n|^2 d\x\\
    &\quad-\frac{a_n}{2\ell_n^2}\int_{\mathbb{R}^2}|w_n|^4 d\x+\frac{b_n}{6\ell_n^4}\int_{\mathbb{R}^2}|w_n|^6 d\x\\
    %&=\int_{\mathbb{R}^2}|\nabla u_n|^2 d\x+\omega\int_{\mathbb{R}^2}|\x|^2|u_n|^2 d\x-\frac{a_n}{2}\int_{\mathbb{R}^2}|u_n|^4 d\x+\frac{b_n}{6}\int_{\mathbb{R}^2}|u_n|^6 d\x-\Omega\langle u_n, L u_n \rangle\\
    %&=\frac{1}{\ell_n^2}\int_{\mathbb{R}^2}|\nabla w_n|^2 d\x+\omega\ell_n^2\int_{\mathbb{R}^2}|\x|^2|w_n|^2 d\x-\frac{a_n}{2\ell_n^2}\int_{\mathbb{R}^2}|w_n|^4 d\x+\frac{b_n}{6\ell_n^4}\int_{\mathbb{R}^2}|w_n|^6 d\x-\Omega\langle w_n, L w_n \rangle\\
    &=\frac{1}{\ell_n^2} \mathcal{G}_{n,\omega,\Omega}(w_n),
\end{aligned}
\end{gather*}
with
\begin{gather}\label{energyG}
\begin{aligned}
\mathcal{G}_{n,\omega,\Omega}(v)&=\int_{\mathbb{R}^2}|(\nabla -i \boldsymbol{A}_{{\ell_n^2}\Omega})v|^2 d\x+\frac{(4\omega-\Omega^2)\ell_n^4}{4}\int_{\mathbb{R}^2}|\x|^2|v|^2 d\x\\
&\quad-\frac{a_n}{2}\int_{\mathbb{R}^2}|v|^4 d\x
+\frac{b_n}{6\ell_n^2}\int_{\mathbb{R}^2}|v|^6 d\x.
\end{aligned}
\end{gather}
Take the associated ground state energy as
\[
G_n(\omega,\Omega) := \inf\big\{\mathcal{G}_{n,\omega,\Omega}(v):v\in H^1(\mathbb{R}^2)\text{ and }\|v\|_{L^2}=1\big\}  = \ell_n^2 E^{\mathrm{NLS}}(a_n,b_n,\Omega).
\]
The study of the properties of the ground state energy $G_n(\omega,0)$ after the transformation in \eqref{sclaing} is equivalent to studying $E^{\mathrm{NLS}}(a_n,b_n,0)$. The study of $E^{\mathrm{NLS}}(a_n,b_n,0)$ can be found in \cite{NR2d}. For any $\omega>0$, by rearrangement inequalities, $G_n(\omega,0)$ has a radial-decreasing nonnegative minimizer.

To prove Theorem \ref{thm:main}, we first present several necessary lemmas.
\begin{lem}
 Let $\Omega$, $Q_{0}$, $\zeta$, $\left\{a_{n}\right\}_{n}$, $\left\{b_{n}\right\}_{n}$, and  $\left\{\ell_{n}\right\}_{n}$  be as in Theorem \ref{thm:main}. Assume further that $\{u^{\mathrm{NLS}}_{a_n,b_n,\Omega}\}_n$ be a sequence of ground states of $E^{\rm NLS}(a_n,b_n,\Omega)$. Then, $E^{\rm NLS}(a_n,b_n,\Omega)=O(\ell_n^2)$ and $w_{n}(\x)=\ell_n u^{\mathrm{NLS}}_{a_n,b_n,\Omega}(\ell_n \x)$ converges to an element of $\mathcal{Q}$, up to a subsequence and a phase, in $H^{1}(\mathbb{R}^{2})$, where
\begin{align*}
    \mathcal{Q}=\left\{Q_{\lambda, \x_0}(\x)=\lambda Q_{0}(\lambda(\x-\x_0)): \lambda>0\text{ and } \x_0 \in \mathbb{R}^{2}\right\} .
\end{align*}
\end{lem}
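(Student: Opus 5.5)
We will prove the lemma in three steps: an upper bound on the energy by a trial state, a matching lower bound giving uniform control of the rescaled quantities, and a compactness argument that identifies the limit through the equality case of \eqref{GN}.

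\emph{Upper bound.} Take the trial state $v_\ell(\x)=\ell^{-1}Q_0(\ell^{-1}\x)$ with $\ell=\ell_n$. Since $Q_0$ is real and radial, the rotation term vanishes, and \eqref{upperb} gives
\[
E^{\mathrm{NLS}}(a_n,b_n,\Omega)\leqslant \Bigl(1-\frac{a_n}{a_*}\Bigr)\ell_n^{-2}+\frac{\ell_n^2}{2}\Qpot+b_n\QL\ell_n^{-4}.
\]
By the definition \eqref{def:llen} of $\ell_n$ and the rate condition \eqref{eq:rate}, we have $(1-a_n/a_*)\ell_n^{-2}=\tfrac{1-\zeta}{2}\Qpot\ell_n^2(1+o(1))$ when $\zeta\neq1$, and $o(\ell_n^2)$ when $\zeta=1$. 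Likewise $b_n\QL\ell_n^{-4}=\tfrac{\zeta}{4}\Qpot\ell_n^2(1+o(1))$ when $\zeta\ne0$, and $o(\ell_n^2)$ when $\zeta=0$. Hence $E^{\mathrm{NLS}}\leqslant C\ell_n^2$, so $G_n(\omega,\Omega)\leqslant C\ell_n^4$.

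\emph{Lower bound.} Use the diamagnetic inequality $\int|(\nabla-i\boldsymbol A)v|^2\geqslant\int|\nabla|v||^2$ together with \eqref{GN} applied to $|v|$. This gives
\[
\mathcal G_n(v)\geqslant\Bigl(1-\frac{a_n}{a_*}\Bigr)\int|\nabla |v||^2+\frac{(4\omega-\Omega^2)\ell_n^4}{4}\int|\x|^2|v|^2+\frac{b_n}{6\ell_n^2}\int|v|^6 ,
\]
where every term except possibly the first is nonnegative. If $a_n\leqslant a_*$, this already yields $G_n\geqslant0$ and bounds on the weighted term: $\int|\x|^2|w_n|^2\leqslant C$ and $b_n\ell_n^{-2}\int|w_n|^6\leqslant C\ell_n^4$.

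If $a_n>a_*$ (so $\zeta>1$), control the negative term $(a_n/a_*-1)\int|\nabla|v||^2$ by the sextic term. Apply H\"older as in \eqref{HYCQ}, $\int|v|^4\leqslant(\int|v|^6)^{1/2}$, and split $a_n=a_*+(a_n-a_*)$, using \eqref{GN} only on the $a_*$ part. This leaves the error $-\tfrac{a_n-a_*}{2}\|v\|_6^3$. Then $\tfrac{b_n}{6\ell_n^2}X^2-\tfrac{a_n-a_*}{2}X\geqslant -C(a_n-a_*)^2\ell_n^2/b_n$, which is $O(\ell_n^4)$ by the rate condition. So $G_n=O(\ell_n^4)$ in all cases, i.e.\ $E^{\mathrm{NLS}}=O(\ell_n^2)$.

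\emph{Uniform bounds on $\nabla w_n$.} Next we need $\int|\nabla w_n|^2\leqslant C$. The rotation is the delicate point. Write
\[
\int|(\nabla-i\boldsymbol A_{\ell_n^2\Omega})w_n|^2=\int|\nabla w_n|^2-\ell_n^2\Omega\langle w_n,Lw_n\rangle+\frac{\Omega^2\ell_n^4}{4}\int|\x|^2|w_n|^2 ,
\]
and bound $|\ell_n^2\Omega\langle w_n,Lw_n\rangle|\leqslant\tfrac12\|\nabla w_n\|_2^2+C\ell_n^4\int|\x|^2|w_n|^2$, as in \eqref{313}. The energy identity then gives $\tfrac12\int|\nabla w_n|^2-\tfrac{a_n}{2}\int|w_n|^4\leqslant C$, but this does not by itself bound the gradient near criticality. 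Instead, argue via the kinetic deficit. From $G_n=O(\ell_n^4)\to0$ and the lower bound, we get
\[
\int|\nabla|w_n||^2-\frac{a_*}{2}\int|w_n|^4\to0 \quad\text{and}\quad \int|\x|^2|w_n|^2\leqslant C\ell_n^{-4}\cdot\ell_n^4=C .
\]
Suppose $\|\nabla|w_n|\|_2\to\infty$ along a subsequence. Let $\epsilon_n^{-2}:=\int|\nabla|w_n||^2$, and rescale $\tilde w_n=\epsilon_n w_n(\epsilon_n\cdot)$, which is an optimizing sequence for \eqref{GN}. By standard concentration-compactness (Lions' lemma), it converges after translation to $Q_0$ up to dilation. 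Scaling the $L^6$ term then shows $b_n\ell_n^{-2}\int|w_n|^6\sim b_n\ell_n^{-2}\epsilon_n^{-4}\gg\ell_n^4$ when $\zeta>0$. When $\zeta=0$, use instead that the deficit must beat $(1-a_n/a_*)\epsilon_n^{-2}\gg\ell_n^4$. Either way this contradicts $G_n=O(\ell_n^4)$, so $\int|\nabla|w_n||^2\leqslant C$. Feeding this back, $\ell_n^2\langle w_n,Lw_n\rangle\to0$, hence $\int|(\nabla-i\boldsymbol A)w_n|^2-\int|\nabla w_n|^2\to0$, and therefore $\int|\nabla w_n|^2$ is bounded as well.

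\emph{Compactness and identification of the limit.} We now have $\int|\nabla w_n|^2-\tfrac{a_*}{2}\int|w_n|^4\to0$ (since $\int|\nabla w_n|^2-\int|\nabla|w_n||^2\to0$ from the squeeze), with $\|w_n\|_2=1$ and $\|\nabla w_n\|_2\leqslant C$. Vanishing is impossible because $\int|w_n|^4$ stays bounded below: otherwise $\int|\nabla w_n|^2\to0$, contradicting $\|w_n\|_2=1$ together with the bounded second moment. The second moment bound gives tightness with no translation needed, so $w_n\to w$ strongly in $L^2$ and $L^4$. Weak lower semicontinuity then makes $w$ an optimizer of \eqref{GN} with norm $1$, and the gradient norms converge, giving strong $H^1$ convergence. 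By the characterization of GN optimizers, $w=e^{i\theta}Q_{\lambda,\x_0}$.

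The main obstacle is the uniform gradient bound in the presence of rotation and supercriticality ($a_n>a_*$), which is handled by the contradiction argument on the GN-deficit above.
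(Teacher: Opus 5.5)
Your proof is correct and follows the paper's skeleton: a trial-state upper bound, then the diamagnetic inequality plus \eqref{GN} with a H\"older--Young absorption of $(a_n-a_*)\int|w_n|^4$ into the sextic term, then tightness from the bounded second moment, and finally identification of the limit through the equality case of \eqref{GN}.

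The genuine difference is how you get the gradient bound.
\begin{itemize}
\item The paper does it directly. The lower bound controls $\frac{b_n}{\ell_n^2}\int|w_n|^6\leqslant C\ell_n^4$ with $b_n\ell_n^{-2}\sim c\,\ell_n^4$, so $\int|w_n|^4\leqslant\|w_n\|_{L^6}^3\leqslant C$. Then $|\langle w_n,Lw_n\rangle|\leqslant\||\x|w_n\|_{L^2}\|\nabla w_n\|_{L^2}$ turns the energy bound into $X^2\leqslant C_1+C_2\ell_n^2X$ for $X=\|\nabla w_n\|_{L^2}$.
\item Your blow-up contradiction reaches the same conclusion but is heavier than needed. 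You had already derived the sextic bound. Moreover, inside your contradiction the only fact actually used is $\|\tilde w_n\|_{L^6}\gtrsim 1$. This follows from H\"older and $\int|\tilde w_n|^4\to 2/a_*$, with no appeal to Lions' lemma.
\item In return, your separate treatment of $\zeta=0$, through $(1-a_n/a_*)\int|\nabla|w_n||^2\leqslant C\ell_n^4$, is more careful than the paper. The paper asserts that $\|w_n\|_{L^6}$ is bounded when $a_n\leqslant a_*$, but this does not follow for $\zeta=0$, where $b_n\ell_n^{-6}\to0$. In that regime one needs the bound coming from the $(a_*-a_n)$ term, exactly as you do.
\item Your explicit trial state $v_{\ell_n}$ for the upper bound is also more self-contained than the paper's comparison with the non-rotating problem of \cite{NR2d}.
\end{itemize}

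One step should be reordered. Writing ``$\ell_n^2\langle w_n,Lw_n\rangle\to0$'' before you know $\|\nabla w_n\|_{L^2}$ is bounded is circular. Instead, use \eqref{GN} to get $\int|w_n|^4\leqslant C$ from $\|\nabla|w_n|\|_{L^2}\leqslant C$. Insert this into your own inequality $\frac12\int|\nabla w_n|^2-\frac{a_n}{2}\int|w_n|^4\leqslant C$ to bound the gradient. Only then is the rotation term $O(\ell_n^2)$, which is what your squeeze $\int|\nabla w_n|^2-\int|\nabla|w_n||^2\to0$ in the last step relies on.

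A minor point: $a_n>a_*$ can also occur for $\zeta=1$, not only for $\zeta>1$. Your Young-inequality bound covers that case unchanged.
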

 \begin{pf}
Let $\tilde{v}_n$ is a radial-decreasing minimizer of $G_n(\omega,0)$ for any $\omega>0$. Hence $\langle \tilde{v}_n, L \tilde{v}_n \rangle = 0$, and by \cite[Theorem 1.2]{NR2d},
\begin{equation}\label{eq:Gn-upperbound}
    G_n(\omega,\Omega) \leqslant\mathcal{G}_{n,\omega,\Omega}(\tilde{v}_n)= G_n(\omega,0)= \ell_n^2 E^{\mathrm{NLS}}(a_n,b_n,0)=O(\ell_n^{4})
\end{equation}
for any $0<\Omega<\Omega^*=2\sqrt{\omega}$. It is the reverse inequality which is not obvious. From the diamagnetic
inequality \cite{anailsis}, we have
\begin{align*}
    \int_{\mathbb{R}^2}|(\nabla -i \boldsymbol{A}_{\ell_n^2\Omega})w_n|^2 d\x\geqslant \int_{\mathbb{R}^2}\big|\nabla |w_n|\big|^2 d\x.
\end{align*}
Combining the Gagliardo-Nirenberg inequality \eqref{GN}, for any $0<\Omega<\Omega^*$, we obtain
\begin{gather}\label{216}
\begin{aligned}
    \mathcal{G}_{n,\omega,\Omega}(w_n)&\geqslant\int_{\mathbb{R}^2}\big|\nabla|w_n|\big|^2 d\x-\frac{a_*}{2}\int_{\mathbb{R}^2}|w_n|^4 d\x+\frac{(4\omega-\Omega^2)\ell_n^4}{4}\int_{\mathbb{R}^2}|\x|^2|w_n|^2 d\x\\
    &\quad+\frac{a_*-a_n}{2}\int_{\mathbb{R}^2}|w_n|^4 d\x+\frac{b_n}{6\ell_n^2}\int_{\mathbb{R}^2}|w_n|^6 d\x\\
    &\geqslant\frac{(4\omega-\Omega^2)\ell_n^4}{4}\int_{\mathbb{R}^2}|\x|^2|w_n|^2 d\x+\frac{a_*-a_n}{2}\int_{\mathbb{R}^2}|w_n|^4 d\x+\frac{b_n}{6\ell_n^2}\int_{\mathbb{R}^2}|w_n|^6 d\x.
\end{aligned}
\end{gather}

From the definition of $\ell_n$ (see \eqref{def:llen}) and \eqref{eq:rate}, for any $\zeta\geqslant0$, we deduce that
\begin{align}\label{215}
    |a_*-a_n|\sim C(\zeta)\ell_n^4\quad\text{ and }\quad b_n\sim C(\zeta)\ell_n^6\quad\text{ as }n\to\infty,
\end{align}
where $C(\zeta)\geqslant0$ is a constant.
Indeed, for $\zeta\not=0$, by using $\ell_n=({4\cQ_{L^6} b_n}/{(\zeta\cQ_{\mathrm{pot}})})^{1/6}$, we have $b_n=\ell_n^6{\zeta\cQ_{\mathrm{pot}}}/{(4\cQ_{L^6})}$ and ${b_n}/{\ell_n^2}=\ell_n^4{\zeta\cQ_{\mathrm{pot}}}/{(4\cQ_{L^6})}$. On the other hand, by \eqref{eq:rate}, we obtain
\begin{align*}
    \frac{2(a_*-a_n)}{a_*\cQ_{\mathrm{pot}}}\left( \frac{4\cQ_{L^6} b_n}{\zeta\cQ_{\mathrm{pot}}} \right)^{-2/3} =\frac{2(a_*-a_n)}{a_*\cQ_{\mathrm{pot}}}\ell_n^{-4}\sim (1-\zeta)\quad\text{ as }n\to+\infty.
\end{align*}
Hence \eqref{215} holds. This actually still holds true when $\zeta\not=1$, by  the same arguments as in the
case $\zeta\not=0$. For the rest of this proof, we omit the details since it strictly follows the
end of the proof in the case $\zeta\not=0$.

Now, we prove $\{\||\x|w_n\|_{L^2}\}_n$ and $\{\|w_n\|_{L^6}\}_n$ are uniformly bounded.  If $a_n\leqslant a_*$, the above yields the uniformly boundedness of $\{\||\x|w_n\|_{L^2}\}_n$ and $\{\|w_n\|_{L^6}\}_n$. If $a_n>a_*$, using by \eqref{215}, H\"order inequality and Young inequality, we get
\begin{align} 
	\frac{a_n-a_*}{2}\int_{\mathbb{R}^2}|w_n|^4 d\x
	&\leqslant\frac{a_n-a_*}{2}\left(\int_{\mathbb{R}^2}|w_n|^6 d\x\right)^{\frac{1}{2}}\notag\\
    &\leqslant \frac{b_n}{12\ell_n^2}\int_{\mathbb{R}^2}|w_n|^6 d\x+\frac{(a_n-a_*)^2\ell_n^2}{2b_n}
	\leqslant\frac{b_n}{12\ell_n^2}\int_{\mathbb{R}^2}|w_n|^6 d\x+C(\zeta)\ell_n^4,\label{217}
\end{align}
where $C(\zeta)\geqslant0$ is a constant and is only dependent on $\zeta$. Combining \eqref{216} and \eqref{217}, we have
\begin{gather}\label{218}
\begin{aligned}
\mathcal{G}_{n,\omega,\Omega}(w_n)\geqslant\frac{(4\omega-\Omega^2)\ell_n^4}{4}\int_{\mathbb{R}^2}|\x|^2|w_n|^2 d\x+\frac{b_n}{12\ell_n^2}\int_{\mathbb{R}^2}|w_n|^6 d\x-C(\zeta)\ell_n^4.  
\end{aligned}
\end{gather}
From \eqref{eq:Gn-upperbound}, \eqref{215} and \eqref{218}, we deduce that $\{\||\x|w_n\|_{L^2}\}_n$ and $\{\|w_n\|_{L^6}\}_n$ are uniformly bounded for $a_n> a_*$. Thus, for any $a_n>0$, we have $\mathcal{G}_{n,\omega,\Omega}(w_n)=O(\ell_n^4)$ and $E^{\rm NLS}(a_n,b_n,\Omega)=O(\ell_n^2)$.

Next, we claim that $\{\|\nabla w_n\|_{L^2}\}_n$ is uniformly bounded. Note that, by H\"older inequality,
\begin{align}\label{219}
    |\left \langle w_n, Lw_n\right \rangle|\leqslant\int_{\mathbb{R}^2}|\x||w_n||\nabla w_n| d\x\leqslant\||\x|w_n\|_{L^2}\|\nabla w_n\|_{L^2}.
\end{align}
From the uniformly boundedness of $\{\|w_n\|_{L^6}\}_n$ and $\{\||\x|w_n\|_{L^2}\}_n$, \eqref{energyG} and \eqref{219}, we obtain
\begin{align*}
    \int_{\mathbb{R}^2}|\nabla w_n|^2 d\x&\leqslant\mathcal{G}_{n,\omega,\Omega}(w_n)+\frac{a_n}{2}\int_{\mathbb{R}^2}|w_n|^4 d\x+\Omega\ell_n^2|\langle w_n, L w_n \rangle|\\
    &\leqslant  C_1+\Omega\ell_n^2\||\x|w_n\|_{L^2}\|\nabla w_n\|_{L^2}\\
    &\leqslant C_1+C_2\ell_n^2\|\nabla w_n\|_{L^2},
\end{align*}
where  $C_1$ and $C_2$ are constants independent of $n$ (dependent on $a_*, \Omega^*$). This is a quadratic inequality concerning $\| \nabla w_n \|_{L^2}$. Let $X = \| \nabla w_n \|_{L^2} \geqslant 0$, then 
\[
X^2 \leqslant C_1 + C_2 \ell_n^2 X \quad \Longrightarrow \quad X^2 - C_2 \ell_n^2 X - C_1 \leqslant 0,
\]  
which implies that
\[
X \leqslant \frac{C_2 \ell_n^2 + \sqrt{C_2^2 \ell_n^4 + 4 C_1}}{2} \leqslant \frac{C_2 \ell_n^2 + 3\sqrt{C_1}}{2} \leqslant 2\sqrt{C_1}.
\] 
Hence $\{\|\nabla w_n\|_{L^2}\}_n$ is uniformly bounded.

Note that $\{w_n\}_n$ is uniformly bounded in  $H^{1}(\mathbb{R}^{2}) $, and $\{|\x|w_n\}_n$ is also uniformly bounded in  $L^{2}(\mathbb{R}^{2}) $. By \cite[Lemma 2.1]{Ignat-JFA-06}, such sequences are precompact in $L^{p}(\mathbb{R}^{2})$ for all  $2\leqslant p<\infty$. Therefore, up to a subsequence, we can pass to the limit  $w_{n} \rightarrow w$  and obtain
\begin{align*}
   \int_{\mathbb{R}^{2}}|\nabla w|^{2}  d\x-\frac{a_{*}}{2} \int_{\mathbb{R}^{2}}|w|^{4}  d\x=0 \quad \text{with} \quad \int_{\mathbb{R}^{2}}|w|^{2} d\x=1.
\end{align*}
This means that  $w$  belongs to the set of the Gagliardo-Nirenberg optimizers (up to a phase)
\begin{align*}
    \mathcal{Q}':=\left\{Q_{\lambda, \x_0}(\x)=\lambda Q_{*}(\lambda(\x-\x_0)), \  \lambda>0, \ \x_0 \in \mathbb{R}^{2}\right\} .
\end{align*}
Here  $Q_{*}$  is the unique positive radial solution to the equation
\begin{align*}
 -\Delta Q_{*}+Q_{*} -a_{*} Q_{*}^{3}=0.   
\end{align*}
This solution necessarily satisfies  $\int_{\mathbb{R}^{2}} |Q_{*}|^{2}=1$  and it is just given by
\begin{align*}
    Q_{*}= Q/\|Q\|_{L^2}=a_{*}^{-1 / 2} Q,
\end{align*}
where  $-\Delta Q+Q -Q^{3}=0$. 
Hence, $Q_{*}=Q_0$ (by the uniqueness) and $Q_{\lambda, \x_0}\in\mathcal{Q}$  solves the equation  
\begin{align}\label{Qlamma}
    -\Delta Q_{\lambda, \x_0}+\lambda^{2} Q_{\lambda, \x_0} -a_{*} Q_{\lambda, \x_0}^{3}=0.
\end{align}

Note that  $w_{n}$  converges to  $Q_{\lambda, \x_0}$  strongly in  $L^{2}\cap L^{4}(\mathbb{R}^{2}) $ and that
\begin{align*}
    \int_{\mathbb{R}^{2}}|\nabla w_{n}|^{2}  d\x-\frac{a_{*}}{2} \int_{\mathbb{R}^{2}}|w_{n}|^{4}  d\x \rightarrow 0
\end{align*}
It follows that
\begin{align*}
   \int_{\mathbb{R}^{2}}|\nabla w_{n}|^{2}  d\x \rightarrow \frac{a_{*}}{2} \int_{\mathbb{R}^{2}}|Q_{\lambda, \x_0}|^{4}  d\x=\int_{\mathbb{R}^{2}}|\nabla Q_{\lambda, \x_0}|^{2}  d\x 
\end{align*}
and thus that the limit is also strong in  $H^{1}(\mathbb{R}^{2}) $. 
\end{pf}

Since $w_n$ is a  ground state of $G_n(\omega,\Omega)$, then $w_n$ satisfies the following Euler-Lagrange equation 
\begin{align}\label{EL0}
 -\Delta w_{n}+{\omega\ell_n^4}|\x|^{2} w_{n}-a_n\left|w_{n}\right|^{2} w_{n}+\frac{b_n}{2\ell_n^2}\left|w_{n}\right|^{4} w_{n}- \Omega \ell_n^2 L w_{n}+\mu_{n} w_{n}=0,
\end{align}
with the Lagrange multiplier given by
\begin{align*}
   \mu_{n}=-G_n(\omega,\Omega)+\frac{a_n}{2}\int_{\mathbb{R}^2}|w_n|^4 d\x
-\frac{b_n}{3\ell_n^2}\int_{\mathbb{R}^2}|w_n|^6 d\x . 
\end{align*}
 Combining \eqref{215} and $G_n(\omega,\Omega)=O(\ell_n^4)$, we have 
\begin{align}\label{mu}
   \mu_{n}\to \frac{a_*}{2}\int_{\mathbb{R}^2}|Q_{\lambda, \x_0}|^4 d\x=\lambda^2>0. 
\end{align}
Using that  $\lambda^{2}>0$  we shall obtain uniform decay estimates \`a la Agmon \cite{Agmon} for  $w_{n} $. Next we need to show that $w_{n}$ converges uniformly and exponential decay.

\begin{lem}[Uniform convergence]\label{Uconvergence}
Let $\Omega$, $Q_{0}$, $\zeta$, $\left\{a_{n}\right\}_{n}$, $\left\{b_{n}\right\}_{n}$, and  $\left\{\ell_{n}\right\}_{n}$  be as in Theorem \ref{thm:main}. Assume further that $\{u^{\mathrm{NLS}}_{a_n,b_n,\Omega}\}_n$ be a sequence of ground states of $E^{\rm NLS}(a_n,b_n,\Omega)$. Let $w_{n}(\x)=\ell_n u^{\mathrm{NLS}}_{a_n,b_n,\Omega}(\ell_n \x)$, then the sequence  $\{w_{n}\}_n$ is bounded in  $H^{2}(\mathbb{R}^{2})$  and converges to  $Q_{\lambda, \x_0}$  strongly in  $H^{1}(\mathbb{R}^{2})$ and in  $L^{\infty}(\mathbb{R}^{2}) $.
\end{lem}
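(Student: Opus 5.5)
The plan is to obtain an $H^2$ bound from the Euler--Lagrange equation \eqref{EL0} by elliptic regularity. After that, $L^\infty$ convergence follows from the $H^1$ convergence of the previous lemma together with interpolation/Sobolev embedding in dimension two.

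\textbf{Step 1 ($L^\infty$ and $L^p$ bounds).} From the previous lemma, $\{w_n\}_n$ is bounded in $H^1(\mathbb{R}^2)$ and $\{|\x|w_n\}_n$ is bounded in $L^2$. By Sobolev embedding $w_n$ is bounded in every $L^p$, $2\le p<\infty$. I would first get a uniform $L^\infty$ bound via a Kato-type inequality. Writing the equation with the magnetic Laplacian $(\nabla-i\boldsymbol{A}_{\ell_n^2\Omega})^2$, one has
\[
-\Delta|w_n|\le \bigl(a_n|w_n|^2-\mu_n\bigr)|w_n| \le a_n|w_n|^3
\]
in the distributional sense, because the trap term $\frac{(4\omega-\Omega^2)\ell_n^4}{4}|\x|^2$ and the quintic term are nonnegative and $\mu_n\to\lambda^2>0$ by \eqref{mu}. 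A De Giorgi--Nash--Moser (or local $L^p$ bootstrap) argument on unit balls, using $|w_n|^2\in L^{p}$ uniformly, then gives $\sup_n\|w_n\|_{L^\infty}<\infty$.

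\textbf{Step 2 ($H^2$ bound).} I would rewrite \eqref{EL0} as
\[
(-\Delta+1)w_n=f_n:=(1-\mu_n)w_n-\omega\ell_n^4|\x|^2w_n+a_n|w_n|^2w_n-\frac{b_n}{2\ell_n^2}|w_n|^4w_n+\Omega\ell_n^2Lw_n.
\]
The terms are controlled as follows.
\begin{itemize}
\item The cubic and quintic terms are bounded in $L^2$ by Step 1, with $b_n/\ell_n^2=O(\ell_n^4)$ from \eqref{215}.
\item The angular term satisfies $\|\ell_n^2 Lw_n\|_{L^2}\le \ell_n^2\||\x|\nabla w_n\|_{L^2}$.
\item The trap term is $\ell_n^4\||\x|^2w_n\|_{L^2}$.
\end{itemize}
The main obstacle is that these weighted norms are not yet known to be bounded. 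I would handle this by multiplying \eqref{EL0} by $|\x|^2\bar w_n$ (and by $|\x|^4\bar w_n$, with a cutoff to justify the computation) and taking real parts. The $\ell_n^2L$ term produces only lower-order cross terms, since $L$ commutes with $|\x|^2$. Because $\mu_n\ge\lambda^2/2$ for large $n$, one absorbs these terms and obtains
\[
\||\x|\nabla w_n\|_{L^2}+\||\x|^2w_n\|_{L^2}\le C\,\ell_n^{-\kappa}
\]
for some $\kappa$ small compared with the prefactors $\ell_n^2,\ell_n^4$.

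An alternative I would try first is a bootstrap. Since $\|f_n\|_{L^2}\le C+C\ell_n^2\|w_n\|_{H^2}^{1/2}\cdots$, it is simpler to use $\||\x|\nabla w_n\|\le\||\x|w_n\|^{1/2}\|\,|\x|\Delta w_n\|^{1/2}$-type interpolation. In any case the prefactor $\ell_n^2\to0$ lets the $Lw_n$ term be absorbed into the left side $\|w_n\|_{H^2}$. Then $\|w_n\|_{H^2}\le C\|f_n\|_{L^2}\le C$.

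\textbf{Step 3 (conclusion).} By the previous lemma, $w_n\to Q_{\lambda,\x_0}$ in $H^1$. Combining this with the uniform $H^2$ bound and the two-dimensional Gagliardo--Nirenberg interpolation
\[
\|v\|_{L^\infty}\le C\|v\|_{L^2}^{1/2}\|v\|_{H^2}^{1/2},
\]
applied to $v=w_n-Q_{\lambda,\x_0}$, gives $w_n\to Q_{\lambda,\x_0}$ in $L^\infty(\mathbb{R}^2)$. This completes the statement, modulo the weighted estimate of Step 2, which I expect to be the delicate point.
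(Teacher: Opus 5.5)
Your route can be made to work, and it differs from the paper's, but the step you flag as delicate is justified by the wrong mechanism. After testing \eqref{EL0} against $|\x|^2\bar w_n$ and commuting $L$ through radial multiplications, the rotation term becomes $-\Omega\ell_n^2\langle v_n,Lv_n\rangle$ with $v_n=|\x|w_n$. This is \emph{not} a lower-order cross term. The only a priori control on $\||\x|^2w_n\|_{L^2}=\||\x|v_n\|_{L^2}$ comes from the trap term $\omega\ell_n^4\||\x|v_n\|_{L^2}^2$, i.e.\ at size $\ell_n^{-2}$. Hence the prefactor $\ell_n^2$ in $\Omega\ell_n^2\||\x|v_n\|_{L^2}\|\nabla v_n\|_{L^2}$ buys no smallness, and $\mu_n\geqslant\lambda^2/2$ does not help either. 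The term is absorbable only through the balanced bound $2\sqrt{\omega}\,\ell_n^2|\langle v,Lv\rangle|\leqslant\langle v,(-\Delta+\omega\ell_n^4|\x|^2)v\rangle$ together with the subcriticality $\Omega<2\sqrt{\omega}$. With it, and using $\mu_n>0$, the tested identity gives
\[
\Big(1-\frac{\Omega}{2\sqrt{\omega}}\Big)\Big(\|\nabla(|\x|w_n)\|_{L^2}^2+\omega\ell_n^4\||\x|^2w_n\|_{L^2}^2\Big)\leqslant 1+a_n\|w_n\|_{L^\infty}^2\||\x|w_n\|_{L^2}^2\leqslant C .
\]
Hence $\||\x|\nabla w_n\|_{L^2}\leqslant C$ and $\||\x|^2w_n\|_{L^2}\leqslant C\ell_n^{-2}$. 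These bounds are exactly what make the trap and rotation parts of $f_n$ of size $O(\ell_n^2)$ in $L^2$, so the test with $|\x|^4\bar w_n$ is unnecessary. Your fallback bootstrap fails as stated: $\ell_n^2\|Lw_n\|_{L^2}$ cannot be absorbed into $\|w_n\|_{H^2}$, because $L$ carries the unbounded weight $|\x|$.

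The paper never moves the trap and rotation to the right-hand side. It keeps $-\Delta+\omega\ell_n^4|\x|^2-\Omega\ell_n^2L+\lambda^2$ on the left and uses the same inequality in operator form, $\|\Omega\ell_n^2L(-\Delta+\omega\ell_n^4|\x|^2)^{-1}\|\leqslant\Omega/(2\sqrt{\omega})$. Combined with the commutation of $L$ with the harmonic oscillator, this gives $\|(-\Delta+\lambda^2/2)(-\Delta+\omega\ell_n^4|\x|^2-\Omega\ell_n^2L+\lambda^2)^{-1}\|\leqslant C$ for small $\ell_n$. The paper then only needs $(\lambda^2-\mu_n)w_n+a_n|w_n|^2w_n-\frac{b_n}{2\ell_n^2}|w_n|^4w_n$ bounded in $L^2$. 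That follows from the $H^1$ bound and two-dimensional Sobolev embedding, with no $L^\infty$ or weighted estimates. So your Kato/Moser Step 1 is superfluous there; in your route it is used only to control $\int|\x|^2|w_n|^4$. What your approach buys is explicit quantitative information, namely $\||\x|\nabla w_n\|_{L^2}=O(1)$ and hence $\ell_n^2\|Lw_n\|_{L^2}=O(\ell_n^2)$. The paper extracts comparable information only later, from the Agmon-type decay lemma. The paper's operator argument is shorter. Your Step 3 (Agmon's inequality plus the $H^1$ convergence) matches the paper's interpolation conclusion.
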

\begin{pf}
Note that
\begin{align}\label{EL}
 (-\Delta +{\omega\ell_n^4}|\x|^{2} w_{n}- \Omega \ell_n^2 L+\lambda^2) w_{n}=(\lambda^2-\mu_{n}) w_{n}+a_n\left|w_{n}\right|^{2} w_{n}-\frac{b_n}{2\ell_n^2}\left|w_{n}\right|^{4} w_{n}
\end{align}
and the right side is bounded in  $L^{2}(\mathbb{R}^{2})$. By the Cauchy-Schwarz inequality and the fact that  $L$  commutes with $-\Delta+{\omega\ell_n^4}|\x|^{2}$, we have
\begin{align*}
 2 \sqrt{\omega}\ell_n^2|L| \leqslant -\Delta+{\omega\ell_n^4}|\x|^{2}   
\end{align*}
and
\begin{align*}
  (\Omega \ell_n^2 L)^{2} \leqslant \frac{\Omega^2}{4\omega} \Big(-\Delta+{\omega\ell_n^4}|\x|^{2} \Big)^{2} 
\end{align*}
or, equivalently,
\begin{align*}
    \left\|\Omega \ell_n^2 L\left(-\Delta+\omega\ell_n^{4}|\x|^{2}\right)^{-1}\right\| \leqslant \frac{\Omega}{2\sqrt{\omega}}<1 .
\end{align*}

Next, we prove that
\begin{align}\label{cla1}
 \left\|\left(-\Delta+\omega\ell_n^{4}|\x|^{2}\right)\left(-\Delta +{\omega\ell_n^4}|\x|^{2} w_{n}- \Omega \ell_n^2 L\right)^{-1}\right\| \leqslant \frac{2\sqrt{\omega}}{2\sqrt{\omega}-\Omega} .
\end{align}
For convenience, let $H_0 := -\Delta+\omega\ell_n^{4}|\x|^{2}$ and $H_\Omega := -\Delta+\omega\ell_n^{4}|\x|^{2}-\Omega \ell_n^2 L$. Hence, $H_\Omega$ are self-adjoint
operators with common domain $\dom(H_\Omega)=\dom(H_0)=\{u\in H^2(\R^2):|\x|^2u\in L^2(\R^2)\}$. Since $H_0$ is strictly positive, we may write
\begin{align}\label{factorization}
    H_\Omega = H_0 - \Omega\ell_n^{2} L = H_0^{1/2} \bigl( I - \Omega\ell_n^{2} H_0^{-1/2} L H_0^{-1/2} \bigr) H_0^{1/2}.
\end{align}
Define the bounded self-adjoint operator
\[
K := \Omega\ell_n^{2} H_0^{-1/2} L H_0^{-1/2}.
\]
Because $L$ commutes with $H_0$, we have
\[
K = \Omega\ell_n^{2} L H_0^{-1} \quad \text{on } \dom(H_0),
\]
and 
\[
\norm{K} = \norm{\Omega\ell_n^{2} L H_0^{-1}} \leqslant \frac{\Omega}{2\sqrt{\omega}}<1.
\]
Moreover, $(I - K)^{-1}$ is bounded with
\[
\norm{(I - K)^{-1}} \leqslant \frac{1}{1 - \norm{K}} \leqslant \frac{2\sqrt{\omega}}{2\sqrt{\omega}-\Omega}.
\]
From the factorization \eqref{factorization}, we have
\[
H_\Omega^{-1} = H_0^{-1/2} (I - K)^{-1} H_0^{-1/2}.
\]
The operator $H_\Omega^{-1}$ is
therefore bounded on $L^2(\R^2)$.
Now compute
\[
H_0 H_\Omega^{-1} = H_0 \cdot H_0^{-1/2} (I - K)^{-1} H_0^{-1/2}
= H_0^{1/2} (I - K)^{-1} H_0^{-1/2}.
\]
This is a similarity transformation of $(I - K)^{-1}$ by the positive operator $H_0^{1/2}$.
Therefore,
\[
\norm{H_0 H_\Omega^{-1}} = \norm{(I - K)^{-1}} \leqslant \frac{2\sqrt{\omega}}{2\sqrt{\omega}-\Omega}.
\]
%In the equality we used the fact that for any bounded operator $T$ and any positive self-adjoint operator $A$, the operators $A T A^{-1}$ and $T$ have the same norm, provided $A T A^{-1}$ is bounded. Here $A = H_0^{1/2}$ and $T = (I - K)^{-1}$. Since $H_0^{1/2} (I - K)^{-1} H_0^{-1/2}$ is indeed bounded (as composition of bounded operators), the norm identity holds. 
This completes the proof of \eqref{cla1}. 
Therefore, similarly that
\begin{align*}
    \left\|\left(-\Delta+{\omega\ell_n^4}|\x|^{2} +\lambda^{2}\right)\left(-\Delta+\omega\ell_n^{4}|\x|^{2}-\Omega \ell_n^2 L+\lambda^{2}\right)^{-1}\right\| \leqslant \frac{2\sqrt{\omega}}{2\sqrt{\omega}-\Omega}.
\end{align*}
Using the relation
\begin{gather*}
   \begin{aligned}
\left\|\left(-\Delta+\omega\ell_n^{4}|\x|^{2}+\lambda^{2}\right) u\right\|_{L^{2}}^{2}= & \int_{\mathbb{R}^{2}}|\Delta u(\x)|^{2}  d\x+\int_{\mathbb{R}^{2}}\left(\omega\ell_n^{4}|\x|^{2}+\lambda^{2}\right)^{2}|u(\x)|^{2}  d\x \\
&+2\int_{\mathbb{R}^{2}}\left(\omega\ell_n^{4}|\x|^{2}+\lambda^{2}\right)|\nabla u(\x)|^{2}  d\x-4 \omega\ell_n^{4} \int_{\mathbb{R}^{2}}|u(\x)|^{2} d\x,
\end{aligned} 
\end{gather*}
we get
\begin{align*}
  \left(-\Delta+\omega\ell_n^{4}|\x|^{2}+\lambda^{2}\right)^{2} \geqslant\left(-\Delta+\lambda^{2}\right)^{2}-4 \omega\ell_n^{4} \geqslant\left(-\Delta+\lambda^{2} / 2\right)^{2}  
\end{align*}
for  $\ell_n \leqslant \sqrt[4]{3} \lambda^{1/2} / 2 $, and thus
\begin{align*}
    \left\|\left(-\Delta+\lambda^{2} / 2\right)\left(-\Delta+\omega\ell_n^{4}|\x|^{2}-\Omega \ell_n^2 L+\lambda^{2}\right)^{-1}\right\| \leqslant \frac{2\sqrt{\omega}}{2\sqrt{\omega}-\Omega}.
\end{align*}
Inserting in Equation \eqref{EL}, this proves that $w_{n}$ is bounded in $H^{2}(\mathbb{R}^{2})$. Since $w_{n}$ already converges strongly in $L^{2}(\mathbb{R}^{2})$, it also converges strongly in $H^{1}(\mathbb{R}^{2})$ and in $L^{\infty}(\mathbb{R}^{2})$, by interpolation.
\end{pf}

\begin{lem}[Exponential decay]\label{lem:decay}
The function $w_{n}$ satisfies
\begin{align}\label{eq:edecay}
    \int_{\mathbb{R}^{2}} e^{\lambda|\x|}\left|w_{n}(\x)\right|^{2}  d\x+\int_{\mathbb{R}^{2}}\left|\nabla (e^{\frac{\lambda}{2}|\x|} w_{n}(\x)\right|^{2} d\x \leqslant C,
\end{align}
for a constant  $C$  independent of  $n$. In particular,  $w_{n} \rightarrow Q_{\lambda, \x_0}$  strongly in  $L^{2}(\mathbb{R}^{2},|\x|^2 d\x)$. 
\end{lem}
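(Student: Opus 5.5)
I will prove \eqref{eq:edecay} with an Agmon-type energy estimate applied to the Euler--Lagrange equation \eqref{EL0}, using the magnetic structure to handle the rotation term. I rewrite \eqref{EL0} in magnetic form,
\[
(-i\nabla-\boldsymbol{A}_{\ell_n^2\Omega})^2 w_n+\frac{(4\omega-\Omega^2)\ell_n^4}{4}|\x|^2w_n+\mu_n w_n=a_n|w_n|^2w_n-\frac{b_n}{2\ell_n^2}|w_n|^4w_n .
\]
This form has the advantage that the operator on the left is nonnegative apart from $\mu_n$. Indeed, the trapping coefficient is positive since $\Omega<\Omega^*$.

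Fix a bounded Lipschitz weight $\phi_\varepsilon(\x)=\frac{\lambda}{2}\frac{|\x|}{1+\varepsilon|\x|}$, so that $|\nabla\phi_\varepsilon|\le \lambda/2$, and I will let $\varepsilon\to0$ at the end. Testing the equation against $e^{2\phi_\varepsilon}\overline{w_n}$ and taking the real part, the IMS-type identity for magnetic Laplacians gives
\[
\mathrm{Re}\,\langle e^{2\phi}w,(-i\nabla-\boldsymbol A)^2w\rangle=\int|(\nabla-i\boldsymbol A)(e^{\phi}w)|^2-\int|\nabla\phi|^2e^{2\phi}|w|^2 .
\]
The cross terms are purely imaginary and drop out, since $\phi$ is real. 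Discarding the nonnegative quintic and trap terms, this yields
\[
\int|(\nabla-i\boldsymbol A)(e^{\phi}w_n)|^2+\Big(\mu_n-\frac{\lambda^2}{4}\Big)\int e^{2\phi}|w_n|^2\le a_n\int |w_n|^2 e^{2\phi}|w_n|^2 .
\]
By \eqref{mu}, $\mu_n\to\lambda^2$, so for large $n$ we have $\mu_n-\lambda^2/4\ge \lambda^2/2$. By Lemma \ref{Uconvergence}, $w_n\to Q_{\lambda,\x_0}$ in $L^\infty$, and $Q_{\lambda,\x_0}$ decays, so there is $R$ independent of $n$ with $a_n|w_n|^2\le\lambda^2/4$ on $\{|\x|\ge R\}$. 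Splitting the right side into $|\x|\le R$ and $|\x|\ge R$, the outer part is absorbed into the left. The inner part is bounded by $a_n\|w_n\|_\infty^2e^{\lambda R}$, uniformly in $n$ and $\varepsilon$.

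This gives uniform bounds on $\int e^{2\phi_\varepsilon}|w_n|^2$ and $\int|(\nabla-i\boldsymbol A_{\ell_n^2\Omega})(e^{\phi_\varepsilon}w_n)|^2$. The main obstacle is converting the magnetic gradient bound into the plain gradient bound in \eqref{eq:edecay}, since the diamagnetic inequality only controls $|\nabla|e^\phi w_n||$. I would expand
\[
|\nabla(e^\phi w)|\le|(\nabla-i\boldsymbol A)(e^\phi w)|+\frac{\Omega\ell_n^2}{2}|\x|e^\phi|w| .
\]
The last term is bounded by $C\ell_n^4\int|\x|^2e^{2\phi}|w_n|^2$. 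This is controlled by a slightly stronger weighted $L^2$ bound, $\int e^{(\lambda+\delta)|\x|}|w_n|^2\le C$, obtained by running the same argument with $\lambda/2$ replaced by $(\lambda+\delta)/2$ for a small $\delta>0$. That is still admissible because $\mu_n-(\lambda+\delta)^2/4>0$. Alternatively, the trap term $\frac{(4\omega-\Omega^2)\ell_n^4}{4}\int|\x|^2e^{2\phi}|w_n|^2$ that I discarded above is itself bounded by the same right side, which gives exactly this control without changing the exponent. The cross terms in the expansion are handled by Cauchy--Schwarz.

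Letting $\varepsilon\to0$ by monotone convergence gives \eqref{eq:edecay} with $e^{\lambda|\x|}$. Finally, $|\x|^2\le C_\lambda e^{\lambda|\x|/2}$, so the tails $\int_{|\x|>R}|\x|^2|w_n|^2$ are uniformly small by \eqref{eq:edecay}. Combined with the $L^2$ convergence on bounded sets, this gives $w_n\to Q_{\lambda,\x_0}$ in $L^2(\mathbb{R}^2,|\x|^2d\x)$.
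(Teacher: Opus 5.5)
Your proposal is correct and follows the paper's Agmon-type argument. Both test \eqref{EL0} against an exponential weight, use $\mu_n\to\lambda^2$ and the uniform convergence $w_n\to Q_{\lambda,\x_0}$ to absorb the cubic term outside a fixed ball $B_R$, bound the inner region by $e^{\lambda R}\|w_n\|_{L^\infty}^2$, and control tails to get convergence in $L^2(\mathbb{R}^2,|\x|^2d\x)$.

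The differences are technical refinements rather than a new route. The paper keeps $-\Delta-\Omega\ell_n^2L$, commutes $L$ with the radial weight, and splits $\Omega\ell_n^2\langle e^{\frac{\lambda}{2}|\x|}w_n, L\, e^{\frac{\lambda}{2}|\x|}w_n\rangle$ by Cauchy--Schwarz against part of the gradient and trap terms. Its stated coefficients $\tfrac12$ and $\tfrac{4\omega-\Omega^2}{4}\ell_n^4$ would need to become, e.g., $1-t$ and $(\omega-\tfrac{\Omega^2}{4t})\ell_n^4$ with $\tfrac{\Omega^2}{4\omega}<t<1$. Your magnetic IMS identity instead absorbs the rotation exactly. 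Your bounded weight $\phi_\varepsilon$ and the exponent gap $|\x|^2\le C_\lambda e^{\lambda|\x|/2}$ also justify steps the paper takes directly with the unbounded weight $e^{\lambda|\x|}$. One small fix: for the gradient term, pass to $\varepsilon\to0$ by Fatou rather than monotone convergence.
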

\begin{pf}
It is well known that $w_{n}$ is analytic with all its derivatives decaying fast at infinity. We seek here for an explicit bound, independent of $n$. We use that
\begin{align*}
    -\operatorname{Re}\left\langle w_{n}, e^{\alpha|\x|} \Delta w_{n}\right\rangle & =-\frac{1}{2} \int_{\mathbb{R}^{2}} e^{\alpha|\x|}\left(\overline{w}_{n} \Delta w_{n}+w_{n} \Delta \overline{w}_{n}\right)  d\x \\
    & =-\frac{1}{2} \int_{\mathbb{R}^{2}} e^{\alpha|\x|}\left(\Delta\left|w_{n}\right|^{2}-2\left|\nabla w_{n}\right|^{2}\right)  d\x \\
    & =-\frac{1}{2} \int_{\mathbb{R}^{2}} e^{\alpha|\x|}\left(\left(\frac{\alpha}{|\x|}+\alpha^{2}\right)\left|w_{n}\right|^{2}-2\left|\nabla w_{n}\right|^{2}\right)  d\x \\
    & =\int_{\mathbb{R}^{2}} e^{\alpha|\x|}\left|\nabla w_{n}\right|^{2}d\x-\frac{1}{2} \int_{\mathbb{R}^{2}} e^{\alpha|\x|}\left(\frac{\alpha}{|\x|}+\alpha^{2}\right)\left|w_{n}\right|^{2}  d\x \\
    & =\int_{\mathbb{R}^{2}}\left|\nabla (e^{\frac{\alpha}{2}|\x|} w_{n})\right|^{2}d\x-\frac{\alpha^{2}}{4} \int_{\mathbb{R}^{2}} e^{\alpha|\x|}\left|w_{n}\right|^{2}  d\x.
\end{align*}
Then we integrate the Euler-Lagrange equation \eqref{EL0} against  $e^{\alpha|\x|} \overline{w}_{n}$  and obtain
\begin{align*}
    0&=  \int_{\mathbb{R}^{2}}\left|\nabla (e^{\frac{\alpha}{2}|\x|} w_{n})\right|^{2}d\x+\int_{\mathbb{R}^{2}} e^{\alpha|\x|}\left({\omega\ell_n^4}|\x|^{2}-a_n\left|w_{n}\right|^{2}+\frac{b_n}{2\ell_n^2}\left|w_{n}\right|^{4}+\mu_{n}-\frac{\alpha^{2}}{4} \right)\left|w_{n}\right|^{2}  d\x \\
    &\quad -\Omega \ell_n^2\left\langle e^{\frac{\alpha}{2}|\x|} w_{n}, L e^{\frac{\alpha}{2}|\x|} w_{n}\right\rangle \\
    &\geqslant   \frac{1}{2}\int_{\mathbb{R}^{2}}\left|\nabla (e^{\frac{\alpha}{2}|\x|} w_{n})\right|^{2}  d\x+\frac{(4\omega-\Omega^2)\ell_n^4}{4} \int_{\mathbb{R}^{2}} e^{\alpha|\x|}|\x|^{2}\left|w_{n}\right|^{2}  d\x \\
    &\quad +\int_{\mathbb{R}^{2}} e^{\alpha|\x|}\left(\mu_{n}-a_n\left|w_{n}\right|^{2}+\frac{b_n}{2\ell_n^2}\left|w_{n}\right|^{4}-\frac{\alpha^{2}}{4} \right)\left|w_{n}\right|^{2}  d\x.
\end{align*}
Choosing  $\alpha=\lambda$, \eqref{decay}, \eqref{mu} and using the uniform convergence of  $w_{n}$  towards  $Q_{\lambda, \x_0} $, we can find a radius $ R$  independent of  $n$  such that
\begin{align*}
    \mu_{n}-a_n\left|w_{n}\right|^{2}+\frac{b_n}{2\ell_n^2}\left|w_{n}\right|^{4}-\frac{\alpha^{2}}{4} \geqslant \frac{\lambda^{2}}{2}, \quad \forall\ |\x| \geqslant R,
\end{align*}
and by $\Omega^2<4\omega$ then
\begin{align}
  &\quad\frac{\lambda^{2}}{2} \int_{\mathbb{R}^{2} \backslash B_{R}} e^{\lambda|\x|}\left|w_{n}\right|^{2}  d\x+ \frac{1}{2}\int_{\mathbb{R}^{2}}\left|\nabla (e^{\frac{\alpha}{2}|\x|} w_{n})\right|^{2}  d\x \notag\\
  &\leqslant\int_{B_{R}} e^{\alpha|\x|}\left|\mu_{n}-a_n\left|w_{n}\right|^{2}+\frac{b_n}{2\ell_n^2}\left|w_{n}\right|^{4}-\frac{\alpha^{2}}{4} \right|\left|w_{n}\right|^{2}  d\x\notag\\
  &  \leqslant e^{\lambda R}\left(\mu_{n}+\frac{\lambda^{2}}{4}+a_{*}\left\|w_{n}\right\|_{L^{\infty}\left(B_{R}\right)}^{2}+C\ell_n^4\left\|w_{n}\right\|_{L^{\infty}\left(B_{R}\right)}^4\right) \label{eq:exp-estimate} 
\end{align}
for all  $\ell_n>0$  small enough. This proves the desired exponential decay estimate \eqref{eq:edecay}.

The remaining part of our proof is to demonstrate $|\x| w_{n} \rightarrow|\x| Q_{\lambda, \x_0}$  strongly in  $L^{2}(\mathbb{R}^{2})  $. We already know that $w_n \to Q_{\lambda,\x_0}$ strongly in $L^2(\R^2)$ and that the sequence is uniformly bounded in the weighted space $L^2(\R^2, e^{\lambda|\x|}dx)$. Fix an arbitrary $\eta > 0$, since $Q_{\lambda,\x_0}$ decays exponentially (see \eqref{decay}), there exists $R_0 > 0$ such that
\[
\int_{|\x| > R_0} |\x|^2 |Q_{\lambda,\x_0}|^2  d\x < \frac{\eta}{4}.
\]
Note that $|\x|^2\leqslant 4e^{\lambda|\x|}/\lambda^2$ for any $\lambda>0$ and $\x\in\mathbb{R}^2$.  Combining the uniform exponential estimate \eqref{eq:exp-estimate}, we can choose $R_1 > 0$ (independent of $n$) such that for all small $\ell_n$,
\begin{align*}
    \int_{|\x|>R_1}|\x|^2|w_n|^2 d\x \leqslant \frac{4}{\lambda^2} \int_{|\x|>R_1}e^{\lambda|\x|} |w_n|^2 d\x<\frac{\eta}{4}.
\end{align*}
Set $R = \max(R_0, R_1)$. Then for any sufficiently small $\ell_n$,
\begin{align*}
    \int_{|\x|>R}|\x|^2|w_n-Q_{\lambda,\x_0}|^2  d\x
    \leqslant 2 \int_{|\x|>R}|\x|^2 (|w_n|^2+|Q_{\lambda,\x_0}|^2)  d\x < \eta.
\end{align*}
On the compact region $|\x| \leqslant R$, the weight $|\x|^2 \leqslant R^2$ is bounded, so the strong
$L^2$ convergence $w_n \to Q_{\lambda,\x_0}$ implies
\begin{align*}
    \int_{|\x| \leqslant R} |\x|^2 |w_n - Q_{\lambda,\x_0}|^2  d\x \leqslant R^2 \|w_n - Q_{\lambda,\x_0}\|_{L^2}^2 \to 0
\end{align*}
as ${n}\to+\infty$. Combining the two pieces gives
\begin{align*}
    \limsup_{{n}\to+\infty} \int_{\R^2}|\x|^2|w_n-Q_{\lambda,\x_0}|^2  d\x\leqslant\eta.
\end{align*}
Since $\eta > 0$ was arbitrary, the convergence $|\x|w_n \to |\x|Q_{\lambda,\x_0}$
in $L^2(\R^2)$ follows.
\end{pf}

The remainder of this section is mainly devoted to proving Theorem \ref{thm:main}.
\begin{proof}[\emph{\textbf{Proof of Theorem \ref{thm:main}}}]
We start with the upper bound matching \eqref{m} that we obtain by applying \eqref{upperb}
to $\ell=\ell_n$ defined in \eqref{def:llen},
\begin{equation}\label{eq:upper}
E^{\mathrm{NLS}}(a_n,b_n,\Omega)\leqslant \ell_n^2\left(\ell_n^{-4}\Big(1 - \frac{a_n}{a_*}\Big) +b_n\mathcal{Q}_{L^6}\ell_n^{-6}+\frac{\mathcal{Q}_{\mathrm{pot}}}{2}\right)= \left( 1 - \frac{\zeta}{4} + o(1) \right) \cQ_{\mathrm{pot}} \ell_n^{2}.
\end{equation}
Note that the computation of the equality is performed separately for $\zeta\not=0$ and $\zeta\not=1$,
but both yield the same expansion.

Before proving the matching lower bound, we first provide the proof that the imaginary part of $w_{n}$ is (very) small. We split  $w_{n}$  into real and imaginary parts
\begin{align*}
  w_{n}(x)=R_{n}(x)+i I_{n}(x)  
\end{align*}
and get bounds on  $I_{n}$  using energy estimates (we could similarly use the equation \eqref{EL0}). For later purposes, we choose the phase of $w_{n}$ such that $w_{n}$ is the closest to its limit
\begin{align*}
    \left\|w_{n}-Q_{\lambda, \x_0}\right\|_{L^{2}}=\min _{\theta\in[0,2\pi)}\left\|e^{i \theta} w_{n}-Q_{\lambda, \x_0}\right\|_{L^{2}}.
\end{align*}
This gives the orthogonality condition on the imaginary part of  $w_{n}$
\begin{align}\label{36}
    \int_{\mathbb{R}^{2}}Q_{\lambda,\x_0}\operatorname{Im}(w_{n})\, d\x=0=\int_{\mathbb{R}^{2}}Q_{\lambda,\x_0}I_n\, d\x.
\end{align}

Recalling $\x^{\perp}=(-x_{2},x_{1})$, we observe that
\begin{align}
   \left\langle w_{n}, L w_{n}\right\rangle&=\int_{\mathbb{R}^{2}} \x^{\perp} \cdot \operatorname{Im}\left(\overline{w}_{n} \nabla w_{n}\right) d\x\notag\\
   &=\int_{\mathbb{R}^{2}} \x^{\perp} \cdot\left(R_{n} \nabla I_{n}-I_{n} \nabla R_{n}\right) d\x=2 \int_{\mathbb{R}^{2}} \x^{\perp} \cdot R_{n} \nabla I_{n} \,d\x,\label{310}
\end{align}
where we have integrated by parts and used that  $\operatorname{div} \x^{\perp}=0 $. Moreover,
\begin{align*}
    \left|\left\langle w_{n}, L w_{n}\right\rangle\right|\leqslant 2\int_{\mathbb{R}^{2}} |\x^{\perp} R_{n}|| \nabla I_{n}|\, d\x \leqslant C \||\x| R_{n}\|_{L^{2}}\|\nabla I_{n}\|_{L^{2}}.
\end{align*}
By Lemma \ref{lem:decay}, we have   $|\x| R_{n}$ is bounded in  $L^{2}\left(\mathbb{R}^{2}\right) $. Hence,
\begin{align*}
    \left|\left\langle w_{n}, L w_{n}\right\rangle\right|\leqslant C \|\nabla I_{n}\|_{L^{2}}.
\end{align*}
Then, by Lemma \ref{Uconvergence} and \eqref{215}, the energy reads
\begin{align}
    \mathcal{G}_{n,\omega,\Omega}(w_n) &\geqslant\int_{\mathbb{R}^2}|\nabla w_n|^2 d\x-\frac{a_*}{2}\int_{\mathbb{R}^2}|w_n|^4 d\x+\frac{a_*-a_n}{2}\int_{\mathbb{R}^2}|w_n|^4 d\x-\Omega\ell_n^2 \left\langle w_{n}, L w_{n}\right\rangle\notag\\
    &\geqslant \int_{\mathbb{R}^{2}}\left|\nabla R_{n}\right|^{2} d\x+\int_{\mathbb{R}^{2}}\left|\nabla I_{n}\right|^{2} d\x-\frac{a_{*}}{2} \int_{\mathbb{R}^{2}}\left(R_{n}^{4}+I_{n}^{4}+2 R_{n}^{2} I_{n}^{2}\right) d\x\notag\\
    &\quad-C \ell_n^2 \| \nabla I_{n} \|_{L^{2}}-O(\ell_n^4).\label{m1}
\end{align}
Since $ R_{n} \rightarrow Q_{\lambda, \x_0}$  and  $I_{n} \rightarrow 0$  uniformly by $w_{n} \rightarrow Q_{\lambda, \x_0}$  strongly in  $L^{\infty}(\mathbb{R}^{2})$(see Lemma \ref{Uconvergence}), we obtain
\begin{align*}
\int_{\mathbb{R}^{2}}\left|R_{n}^{2}-Q_{\lambda, \x_0}^{2}\right| I_{n}^{2}\, d\x+\int_{\mathbb{R}^{2}} I_{n}^{4}\, d\x=o\left(\|I_{n}\|_{L^{2}}^{2}\right).
\end{align*}
Moreover, using the Gagliardo-Nirenberg inequality \eqref{GN} for the real part $R_{n}$, we have
\begin{align}
 \int_{\mathbb{R}^{2}}\left|\nabla R_{n}\right|^{2} d\x-\frac{a_{*}}{2} \int_{\mathbb{R}^{2}}\left|R_{n}\right|^{4} d\x &\geqslant\int_{\mathbb{R}^{2}}\left|\nabla R_{n}\right|^{2} d\x\left(1-\int_{\mathbb{R}^{2}}\left|R_{n}\right|^{2} d\x\right)\notag\\
 &\geqslant\left(\lambda^{2}+o(1)\right)\int_{\mathbb{R}^{2}}\left|I_{n}\right|^{2}  d\x,\label{m2}
\end{align}
where in the second estimate we have used the facts that  $\left\|R_{n}\right\|_{L^{2}}^{2}+\left\|I_{n}\right\|_{L^{2}}^{2}=1$  and that  $R_{n} \rightarrow Q_{\lambda, \x_0}$  strongly in  $H^{1}(\mathbb{R}^{2})$  by Lemma \ref{Uconvergence}. Thus, by \eqref{m1} and \eqref{m2},  we can bound the energy from below as
\begin{align}
  \mathcal{G}_{n,\omega,\Omega}\left(w_{n}\right) &\geqslant \int_{\mathbb{R}^{2}}\left|\nabla I_{n}\right|^{2} d\x-a_{*} \int_{\mathbb{R}^{2}} Q_{\lambda, \x_0}^{2} |I_{n}|^{2} d\x\notag\\
  &\quad+\left(\lambda^{2}+o(1)\right) \int_{\mathbb{R}^{2}}\left|I_{n}\right|^{2} d\x-C \ell_n^2 \| \nabla I_{n} \|_{L^{2}}-o\left(\|I_{n}\|_{L^{2}}^{2}\right)-O(\ell_n^4).\label{3111}
\end{align}

Now we use some non-degeneracy property of  $Q_{\lambda, \x_0}$  \cite{73,54,10,28}. Since  $Q_{\lambda, \x_0}$  is positive, it must be the first eigenfunction of the operator
\begin{align*}
    \mathcal{L}_{-}:=-\Delta-a_{*} Q_{\lambda, \x_0}^{2}+\lambda^{2}
\end{align*}
and the corresponding eigenvalue $0$ (see \eqref{Qlamma}) is non-degenerate \cite[Corollary 11.9]{Analysis}. According to the standard spectral analysis, $0$ is the lowest eigenvalue of $\mathcal{L}_{-}$, and it is non-degenerate. Hence, its spectrum satisfies
\begin{align*}
    \sigma(\mathcal{L}_{-}) \subset \{0\} \cup [\lambda_2, \infty), \quad \lambda_2 > 0.
\end{align*}
Therefore, for all $f$ orthogonal to  $Q_{\lambda, \x_0}$, we have 
\begin{align}\label{3131}
   \left\langle f, \mathcal{L}_{-} f\right\rangle_{L^{2}} \geqslant \lambda_{2}\|f\|_{L^{2}}^{2},
\end{align}
where  $\lambda_{2}>0$ is the second eigenvalue of $\mathcal{L}_{-}$. Note that
\begin{align}\label{3141}
\left\langle f, \mathcal{L}_{-} f\right\rangle_{L^{2}} \geqslant\|\nabla f\|_{L^{2}}^{2}-(a_{*}\|Q_{\lambda, \x_0}\|_{L^{\infty}}^{2} -\lambda^2)\|f\|_{L^{2}}^{2}. 
\end{align}
Let $M>0$ is a large constant such that $M\lambda_2\geqslant a_{*}\|Q_{\lambda, \x_0}\|_{L^{\infty} }^{2}$, combining \eqref{3131} with \eqref{3141}, we get
\begin{align*}
    (1+M)\left\langle f, \mathcal{L}_{-} f\right\rangle_{L^{2} } \geqslant\|\nabla f\|_{L^{2} }^{2}+\lambda^2\|f\|_{L^{2} }^{2}\geqslant\min\{1,\lambda^2\}\| f \|_{H^{1} }^{2} . 
\end{align*}
Hence, for all $f$ orthogonal to  $Q_{\lambda, \x_0}$, we obtain
\begin{align}\label{315}
   \left\langle f, \mathcal{L}_{-} f\right\rangle_{L^{2} } \geqslant c \| f \|_{H^{1} }^{2} 
\end{align}
for a constant $c=\min\{1/(1+M),\lambda^2/(1+M)\}>0$. Inserting \eqref{315} in \eqref{3111} using the fact that  $I_{n} $ is orthogonal to  $Q_{\lambda, \x_0}$  as we have seen in \eqref{36}, we obtain
\begin{align*}
    \mathcal{G}_{n,\omega,\Omega}(w_{n}) \geqslant c_{1}\|I_{n}\|_{H^{1}}^{2}-C \ell_n^2 \| \nabla I_{n} \|_{L^{2}}-O(\ell_n^4)
\end{align*}
for a constant  $c_{1}>0 $. Combining with the energy upper bound  $\mathcal{G}_{n,\omega,\Omega}(w_{n})=O\left(\ell_n^{4}\right)$  we conclude that
\begin{align}\label{316}
    \left\|I_{n}\right\|_{H^{1} } \leqslant C \ell_n^2.  
\end{align}

From \eqref{310}, we have
\begin{align*}
    \left\langle w_{n}, L w_{n}\right\rangle
    &=2 \int_{\mathbb{R}^{2}} \x^{\perp} \cdot Q_{\lambda, \x_0} \nabla I_{n}\, d\x+2 \int_{\mathbb{R}^{2}} \x^{\perp} \cdot(R_{n}- Q_{\lambda, \x_0}) \nabla I_{n}\, d\x. 
\end{align*}
Since $|\x|R_{n}$  converges to  $|\x| Q_{\lambda, \x_0}$  strongly in $L^{2}(\mathbb{R}^2)$  by Lemma \ref{lem:decay},  we deduce form \eqref{316} that
\begin{align*}
  \left|\int_{\mathbb{R}^{2}} \x^{\perp} \cdot(R_{n}- Q_{\lambda, \x_0}) \nabla I_{n}\, d\x\right|\leqslant  \int_{\mathbb{R}^{2}} |\x| |R_{n}- Q_{\lambda, \x_0}|| \nabla I_{n}|\, d\x\leqslant o(\ell_n^2).
\end{align*}
Hence, combining integrated by parts, we get
\begin{align}\label{j4}
  \left\langle w_{n}, L w_{n}\right\rangle=2 \int_{\mathbb{R}^{2}} \x^{\perp} \cdot Q_{\lambda, \x_0} \nabla I_{n}\, d\x+o(\ell_n^2)=-2 \int_{\mathbb{R}^{2}} (\x^{\perp} \cdot \nabla Q_{\lambda, \x_0}) I_{n} d\x+o(\ell_n^2). 
\end{align}
But $Q_{\lambda,\x_0}(\x)=\lambda Q_{*}(\lambda(\x-\x_0))$ with $Q_{*}$ a radial function, hence
\begin{align}\label{j5}
    \left(\x^{\perp}-\x_0^{\perp}\right) \cdot \nabla Q_{\lambda, \x_0}=0.
\end{align}
Inserting \eqref{j5} in \eqref{j4}, using \eqref{316} and strong $L^{2}$-convergence of $|\x|R_{n}$ again we obtain
\begin{align*}
    \left\langle w_{n}, L w_{n}\right\rangle&=-2 \int_{\mathbb{R}^{2}} (\x_0^{\perp} \cdot \nabla Q_{\lambda, \x_0}) I_{n}\, d\x+o(\ell_n^2)\\
    &=2 \int_{\mathbb{R}^{2}} \x_0^{\perp} \cdot Q_{\lambda, \x_0} \nabla I_{n}\, d\x+o(\ell_n^2)=2 \int_{\mathbb{R}^{2}} \x_0^{\perp} \cdot R_{n} \nabla I_{n}\, d\x+o(\ell_n^2).
\end{align*}
Hence, inserting this in the energy gives
\begin{align*}
\mathcal{G}_{n,\omega,\Omega}(w_n)&=\int_{\mathbb{R}^2}|\nabla w_n|^2 d\x-\frac{a_*}{2}\int_{\mathbb{R}^2}|w_n|^4 d\x+\omega\ell_n^4\int_{\mathbb{R}^2}|\x|^2|w_n|^2 d\x\\
&\quad+\frac{a_*-a_n}{2}\int_{\mathbb{R}^2}|w_n|^4 d\x
+\frac{b_n}{6\ell_n^2}\int_{\mathbb{R}^2}|w_n|^6 d\x-\Omega\ell_n^2\left\langle w_{n}, L w_{n}\right\rangle\\
&=\int_{\mathbb{R}^{2}}\left|\nabla w_{n}\right|^{2} d\x-\frac{a_{*}}{2} \int_{\mathbb{R}^{2}}\left|w_{n}\right|^{4} d\x+\ell_n^2 \Omega \int_{\mathbb{R}^{2}} \x_0^{\perp} \cdot \operatorname{Im}\left(w_{n} \nabla \bar{w}_{n}\right) d\x \\
&\quad +\left(\omega\ell_n^4\int_{\mathbb{R}^2}|\x|^2|w_n|^2 d\x+\frac{a_*-a_n}{2}\int_{\mathbb{R}^2}|w_n|^4 d\x
+\frac{b_n}{6\ell_n^2}\int_{\mathbb{R}^2}|w_n|^6 d\x\right)+o\left(\ell_n^4\right).
\end{align*}

Now, define a new function $f_{n}$ by setting
\begin{align*}
    w_{n}(\x)=f_{n}(\x) e^{i \frac{\Omega\ell_n^2}{2} \x_0^{\perp} \cdot \x}
\end{align*}
and observe that
\begin{align*}
  \int_{\mathbb{R}^{2}}\left|\nabla w_{n}\right|^{2} d\x+ \ell_n^2 \Omega \int_{\mathbb{R}^{2}} \x_0^{\perp} \cdot \operatorname{Im}\left(w_{n} \nabla \bar{w}_{n}\right) d\x=\int_{\mathbb{R}^{2}}\left|\nabla f_{n}\right|^{2} d\x-\frac{\ell_n^4 \Omega^{2}}{4}|\x_0|^{2} \int_{\mathbb{R}^{2}}\left|f_{n}\right|^{2} d\x .  
\end{align*}
Then,
\begin{align*}
  \frac{\mathcal{G}_{n,\omega,\Omega}(w_n)}{\ell_n^{4}}
  &= \frac{1}{\ell_n^{4}}\left(\int_{\mathbb{R}^2}|\nabla f_n|^2 d\x-\frac{a_*}{2}\int_{\mathbb{R}^2}|f_n|^4 d\x\right)
  -\frac{\Omega^{2}}{4}|\x_0|^{2} \int_{\mathbb{R}^{2}}\left|f_{n}\right|^{2} d\x +\omega\int_{\mathbb{R}^2}|\x|^2|f_n|^2 d\x\notag\\
&\quad+\frac{a_*-a_n}{2\ell_n^4}\int_{\mathbb{R}^2}|f_n|^4 d\x
+\frac{b_n}{6\ell_n^6}\int_{\mathbb{R}^2}|f_n|^6 d\x+o(1).
\end{align*}
We distinguish two (overlapping) cases for $\zeta\geqslant0$: $\zeta\not=0$ and $\zeta\not=1$.

\noindent\textbf{Case 1: $\zeta\not=0$.}  Using \eqref{eq:rate} and the definition of $\ell_n$ depending on $b_n$ in \eqref{def:llen}, we obtain
\begin{align*}
    \frac{b_n}{6\ell_n^6}=\frac{b_n}{6}\left( \frac{4\cQ_{L^6} b_n}{\zeta\cQ_{\mathrm{pot}}} \right)^{-1}=\frac{\zeta\cQ_{\mathrm{pot}}}{24\cQ_{L^6}}
\end{align*}
and
\begin{align*}
   \limn \frac{a_*-a_n}{2\ell_n^4}=\limn \frac{a_*-a_n}{2}\left( \frac{4\cQ_{L^6} b_n}{\zeta\cQ_{\mathrm{pot}}} \right)^{-2/3}=\frac{a_*\cQ_{\mathrm{pot}}}{4}(1-\zeta).
\end{align*}
Using the optimal Gagliardo-Nirenberg inequality and the convergence $f_{n}\rightarrow Q_{\lambda,\x_0}$ in $L^\infty(\mathbb{R}^2)$, we obtain
\begin{align*}
  \liminf_{n \rightarrow \infty} \frac{\mathcal{G}_{n,\omega,\Omega}(w_n)}{\ell_n^{4}} &\geqslant
  \frac{a_*\cQ_{\mathrm{pot}}}{4}(1-\zeta) \int_{\mathbb{R}^{2}} Q_{\lambda, \x_0}^{4} d\x+\frac{\zeta\cQ_{\mathrm{pot}}}{24\cQ_{L^6}} \int_{\mathbb{R}^{2}} Q_{\lambda, \x_0}^{6} d\x+\int_{\mathbb{R}^{2}}\left(\omega|\x|^{2}-\frac{\Omega^{2}}{4}|\x_0|^{2}\right) Q_{\lambda, \x_0}^{2} d\x\notag\\
  &=\frac{\cQ_{\mathrm{pot}}}{2}(1-\zeta)\lambda^2+\frac{\zeta\cQ_{\mathrm{pot}}}{4} \lambda^4+\int_{\mathbb{R}^{2}}\left(\omega|\x|^{2}-\frac{\Omega^{2}}{4}|\x_0|^{2}\right) Q_{\lambda, \x_0}^{2} d\x. 
\end{align*}
Recalling that  $Q_{\lambda, \x_0}(\x)=\lambda Q_{0}(\lambda(\x-\x_0))$  for a radial function  $Q_{0}$ yields
\begin{align*}
    \omega\int_{\mathbb{R}^2} |\x|^2 Q_{\lambda,\x_0}^2\,  d\x = \omega\int_{\mathbb{R}^2} \left( \frac{|\y|^2}{\lambda^2} + \frac{2}{\lambda} \x_0\cdot \y + |\x_0|^2 \right) Q_0^2(\y)\, d\y= \omega|\x_0|^2 + \frac{\omega}{\lambda^2} \int_{\mathbb{R}^2} |\y|^2 Q_0^2(\y) \,d\y,
\end{align*}
where we use $\int_{\mathbb{R}^2}\x_0\cdot \y\,Q_0^2(\y)d\y=0$ and $\|Q_0\|_{L^2}=1$. Hence,  
\begin{align*}
  \liminf_{n \rightarrow \infty} \frac{\mathcal{G}_{n,\omega,\Omega}(w_n)}{\ell_n^{4}} \geqslant\left(\omega-\Omega^{2}/4\right)|\x_0|^{2}+\frac{1}{2\lambda^{2}}\cQ_{\mathrm{pot}} +\frac{\cQ_{\mathrm{pot}}}{2}(1-\zeta)\lambda^2+\frac{\zeta\cQ_{\mathrm{pot}}}{4} \lambda^4.
\end{align*}
Since $\omega-\Omega^{2}/4>0$, we deduce that the minimum of the right side is attained for  $\x_0=0$  and  $\lambda=1$ (see, e.g., \cite[p.15]{NR2d}), and
\begin{align*}
  \liminf_{n\rightarrow\infty}\frac{\mathcal{G}_{n,\omega,\Omega}(w_n)}{\ell_n^{4}}\geqslant\left(1-\frac{\zeta}{4}\right)\cQ_{\mathrm{pot}}.
\end{align*}
Combining the upper bound \eqref{eq:upper}, we concludes the proof of \eqref{eq:energy-exp}. Combining Lemma \ref{Uconvergence}, we deduce that $w_{n}$ converges to  $Q_{0}$ strongly in $H^{1}(\mathbb{R}^{2})\cap L^{\infty}(\mathbb{R}^{2})$. Also, this shows that any sequence of minimizers must, modulo rescaling, choice of a constant phase (in \eqref{36}) and passing to a subsequence, converge to $Q_{0}$ strongly in $H^{1}(\mathbb{R}^{2})\cap L^{\infty}(\mathbb{R}^{2})$. By uniqueness of the limit we conclude that passing to a subsequence is unnecessary, which concludes the proof of \eqref{eq:con}.  

\noindent\textbf{Case 2: $\zeta\not=1$.} Using \eqref{eq:rate} and the definition of $\ell_n$ depending on $a_*-a_n$ in \eqref{def:llen}, we obtain
\begin{align*}
  \frac{a_*-a_n}{2\ell_n^4}=\frac{a_*-a_n}{2} \frac{a_*(1-\zeta)\cQ_{\mathrm{pot}}}{2(a_* - a_n)} =\frac{a_*(1-\zeta)\cQ_{\mathrm{pot}}}{4} 
\end{align*}
and
\begin{align*}
     \limn \frac{b_n}{6\ell_n^6}=\limn \frac{b_n}{6}\left( \frac{2(a_* - a_n)}{a_*(1-\zeta)\cQ_{\mathrm{pot}}} \right)^{-3/2}=\frac{\zeta\cQ_{\mathrm{pot}}}{24\cQ_{L^6}}.
\end{align*}
For the rest of the proof, by the same arguments as in the
case $\zeta\not=0$,  we omit the details since it strictly follows the
end of the proof in the case $\zeta\not=0$.
\end{proof}

\section{Condensation and collapse in the many-body theory}\label{SEC3}
The goal of this section is to prove Theorem \ref{the:many-body}. Comparing the quantum energy and the Hartree energy is a common strategy in many-body quantum systems to understand the role of interactions and the validity of the mean-field approximation. The usual mean-field approximation suggests to
restrict wave functions to the factorized ansatz of $N$ particles
\begin{align*}
	\Psi_N(\x_1,\x_2,\cdots,\x_N)\approx u^{\otimes N}(\x_1,\x_2,\cdots,\x_N):=u(\x_1)u(\x_2)\cdots u(\x_N).
\end{align*}
Inserting in the energy functional the above uncorrelated state, with the normalization condition $\|u\|_{L^2}=1$, we obtain the Hartree energy functional
\begin{align}
	\mathcal{E}^{\rm H}_{a,b,N,\Omega}(u):&=\frac{ \langle u^{\otimes N}|H^{N}_{a,b,\Omega}|u^{\otimes N} \rangle}{N}\notag \\
    &=\int_{\mathbb{R}^2}|(\nabla -i \boldsymbol{A}_\Omega)u|^2 d\x+\frac{4\omega-\Omega^2}{4}\int_{\mathbb{R}^2}|\x|^2|u|^2 d\x\notag\\
    &\quad-\frac{a}{2}\iint_{\mathbb{R}^4}U_{N^\alpha}(\x-\y)|u(\x)|^2|u(\y)|^2 d\x d\y\notag\\
    &\quad+\frac{b}{6}\iiint_{\mathbb{R}^6}W_{N^\beta}(\x-\y,\x-\z)|u(\x)|^2|u(\y)|^2|u(\z)|^2 d\x d\y d\z.\label{18}
\end{align}
The corresponding Hartree ground state energy, given by
\begin{gather}\label{19}
	\begin{aligned}
	E^{\rm H}(a,b,N,\Omega):=\inf\Big\{\mathcal{E}^{H}_{a,b,N,\Omega}(u):u\in H^1(\mathbb{R}^2)\text{ and }\|u\|_{L^2}=1\Big\},
	\end{aligned}
\end{gather}
is thus an upper bound to the many-body ground state energy
\begin{align}\label{20}
    E^{\rm H}(a,b,N,\Omega)\geqslant E^{\mathrm{QM}}(a,b,N,\Omega).
\end{align}

\subsection{Condensation and collapse in the Hartree theory}
The Hartree functional, which is commonly used to describe the mean-field approximation in many-body systems, can be formally related to the NLS functional under certain
conditions. The Hartree theory plays a role of an interpolation theory between the many-body and the NLS theories, and this role is important in the study of the systems of two- and three-body interactions without rotates, as in \cite{NR1d,NR2d}. Therefore, understanding microscopic phenomenon in the Hartree theory is the next step before turning to the many-body problem. We have the following result.
\begin{pro}[Collapse and condensation of rotating Hartree ground states]\label{thm:Hartree}
Let $0<\Omega<\Omega^*$,  $0< \alpha<1 / 12$  and  $0<\beta<1 / 24 $. Assume that  $U$  and  $W$  satisfy \eqref{13}, \eqref{14} and \eqref{15}.
\begin{itemize}
\item[\emph{(i)}] Let  $a, b>0$  be fixed with  $a<a_{*}$  or  ($a \geqslant a_{*}$  and  $\alpha<\beta$). Let $\{u^{\mathrm{H}}_{a,b,N,\Omega}\}_{N}$ be a sequence of (approximate) ground states of  $E^{\rm H}(a,b,N,\Omega)$  given by \eqref{19}. Then, there exists a rotating cubic-quintic NLS ground state $u^{\mathrm{NLS}}_{a,b,\Omega}$ of \eqref{m} such that, along a subsequence,
\begin{align}\label{25}
  \lim _{N \rightarrow+\infty} u^{\mathrm{H}}_{a,b,N,\Omega}(\x)=u^{\mathrm{NLS}}_{a,b,\Omega}(\x)
\end{align}
strongly in $H^{1}(\mathbb{R}^{2})$. Furthermore,
\begin{align}\label{26}
    \lim _{N \rightarrow+\infty} E^{\rm H}(a,b,N,\Omega)=E^{\rm NLS}(a,b,\Omega).
\end{align}

\item[\emph{(ii)}] Let  $Q_{0}$, $\mathcal{Q}_{L^{6}}$, $\mathcal{Q}_{\rm pot}$, $\zeta$, $\left\{a_{N}\right\}_{N}$, $\left\{b_{N}\right\}_{N}$, and  $\left\{\ell_{N}\right\}_{N}$  be as in Theorem \ref{thm:main}. Assume further that  $|\x|U(\x)\in L^{1}(\mathbb{R}^{2})$, that  $\alpha<\beta $ if $ \zeta \geqslant 1 $, and that $\ell_{N} \sim N^{-\eta}$  with
\begin{align}\label{j90}
 0<\eta<\min \left\{\frac{\alpha}{5}, \beta\right\} .
\end{align}
Then,
\begin{equation}\label{28}
    E^{\rm H}(a_N,b_N,N,\Omega)=E^{\rm NLS}(a_N,b_N,\Omega)+o\left(\ell_{N}^{2}\right)=\left(1-\frac{\zeta}{4}+o(1)\right) \mathcal{Q}_{\rm pot} \ell_{N}^{2} .
\end{equation}
Moreover, for any sequence of (approximate) ground states  $\{u^{\mathrm{H}}_{a_N,b_N,N,\Omega}\}_{N}$  of  $E^{\rm H}(a_N,b_N,N,\Omega)$  given by \eqref{19}, we have
\begin{align}\label{29}
    \lim _{N \rightarrow+\infty} \ell_{N} u^{\mathrm{H}}_{a_N,b_N,N,\Omega}\left(\ell_{N} \x\right)=Q_{0}(\x)
\end{align}
strongly in  $H^{1}(\mathbb{R}^{2})$, for the whole sequence.
\end{itemize}
\end{pro}

In order to prove Theorem \ref{the:many-body}, we need the following lemma, which is about the (rate of) convergence of the two- and three-body interactions.

\begin{lem}\label{lem:conv}
Assume $U$ and $W$ satisfy \eqref{13}, \eqref{14} and \eqref{15}. Then for any $|v|\in H^1(\R^2)$,
\begin{align}
0 &\leqslant \norm{v}_{L^4}^4 - \iint_{\R^4} U_{N^\alpha}(\x-\y) \abs{v(\x)}^2 \abs{v(\y)}^2  d\x d\y \leqslant \big\|{|v|}\big\|_{H^1}^4 \, o(1), \label{211} \\
0 &\leqslant \norm{v}_{L^6}^6 - \iiint_{\R^6} W_{N^\beta}(\x-\y,\x-\z) \abs{v(\x)}^2 \abs{v(\y)}^2 \abs{v(\z)}^2  d\x d\y d\z \leqslant \big\||v|\big\|_{H^1}^6 \, o(1), \label{212}
\end{align}
where the $o(1)$ terms are independent of $v$. 

Assume in addition that $\abs{\x}U(\x) \in L^1(\R^2)$. We then have
\begin{align}\label{eq:conv-U-rate}
    0 \leqslant \norm{v}_{L^4}^4 - \iint_{\R^4} U_{N^\alpha}(\x-\y) \abs{v(\x)}^2 \abs{v(\y)}^2  d\x d\y \leqslant 2 N^{-\alpha} \norm{\x U(\x)}_{L^1} \norm{v}_{L^6}^3 \big\|\nabla |v|\big\|_{L^2}.
\end{align}
\end{lem}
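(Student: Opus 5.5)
Write $\rho=|v|^2$ and $f=|v|$. Every quantity in Lemma \ref{lem:conv} depends only on $f$, so we may take $v=f\ge0$ with $f\in H^1(\R^2)$. The plan is to use Fourier analysis for the lower bounds and the qualitative $o(1)$ statements, and a fundamental-theorem-of-calculus argument for the rate \eqref{eq:conv-U-rate}.

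\emph{Lower bounds.} Since $\int U=1$ and $U\ge0$, Cauchy--Schwarz (or $2\rho(\x)\rho(\y)\le\rho(\x)^2+\rho(\y)^2$) gives
\[
\iint U_{N^\alpha}(\x-\y)\rho(\x)\rho(\y)\,d\x d\y\le \tfrac12\iint U_{N^\alpha}(\x-\y)\big(\rho(\x)^2+\rho(\y)^2\big)\,d\x d\y=\norm{\rho}_{L^2}^2=\norm{v}_{L^4}^4 .
\]
For the three-body term we use $\rho(\x)\rho(\y)\rho(\z)\le\frac13(\rho(\x)^3+\rho(\y)^3+\rho(\z)^3)$. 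The symmetry \eqref{15}, together with $W(\x,\y)=W(\y,\x)$, lets us change variables so that each of the three terms integrates to $\norm{\rho}_{L^3}^3\iint W=\norm{v}_{L^6}^6$. This proves the lower bounds in \eqref{211} and \eqref{212}.

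\emph{Upper bounds.} For \eqref{211}, Plancherel gives
\[
\norm{\rho}_2^2-\iint U_{N^\alpha}\rho\rho=\int|\hat\rho(k)|^2\big(1-\hat U(k/N^\alpha)\big)\,dk .
\]
We split this at $|k|\le\epsilon N^\alpha$ and $|k|>\epsilon N^\alpha$. On the low-frequency part, $|1-\hat U(k/N^\alpha)|\le\sup_{|p|\le\epsilon}|1-\hat U(p)|=:\delta(\epsilon)\to0$ as $\epsilon\to0$, since $\hat U$ is continuous with $\hat U(0)=1$. On the high-frequency part, $|1-\hat U|\le2$, and $\int_{|k|>R}|\hat\rho|^2\le R^{-2s}\norm{\rho}_{\dot H^s}^2$. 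We need an $H^s$ bound on $\rho=f^2$ for some small $s>0$. Since $\nabla\rho=2f\nabla f\in L^{p}$ with $p<2$ (by Hölder, $f\in L^q$ for all $q<\infty$), Sobolev embedding gives $\rho\in H^{s}$ with $\norm{\rho}_{H^s}\lesssim\norm{f}_{H^1}^2$. Hence the difference is at most $\big(\delta(\epsilon)+C(\epsilon N^\alpha)^{-2s}\big)\norm{f}_{H^1}^4$. Choosing $\epsilon=\epsilon_N\to0$ slowly makes this $o(1)\norm{f}_{H^1}^4$, uniformly in $v$.

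For \eqref{212} we argue the same way with $\hat W$ on $\R^4$. The relevant quantity is $\int\hat\rho(-k-q)\hat\rho(k)\hat\rho(q)\big(1-\hat W(k/N^\beta,q/N^\beta)\big)\,dk\,dq$ after rewriting the integral in terms of $\x-\y$ and $\x-\z$. The high-frequency part, where $|k|$ or $|q|$ exceeds $\epsilon N^\beta$, is controlled through Young and Hausdorff--Young type bounds: $\norm{\hat\rho}_{L^1\cap L^2}$-weighted estimates follow from $\rho\in H^s\cap L^1$ and $\rho\in L^3$. Alternatively, we can avoid this by working in physical space: $\rho\in W^{1,p}$ gives $\norm{\rho(\cdot-h)-\rho}_{L^3}\lesssim|h|^{\theta}\norm{f}_{H^1}^2$, and a truncation of $W$ at $|(\x,\y)|\le R$ uses only the tail $\iint_{|\cdot|>R}W\to0$. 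I would adopt the physical-space route for both terms, since it unifies the two estimates.

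\emph{Rate \eqref{eq:conv-U-rate}.} Write $\rho(\x)-\rho(\x-\y)=\int_0^1\y\cdot\nabla\rho(\x-t\y)\,dt$, where $\nabla\rho=2f\nabla f$. Then
\[
\norm{\rho}_2^2-\iint U_{N^\alpha}\rho\rho=\iint U_{N^\alpha}(\y)\rho(\x)\big(\rho(\x)-\rho(\x-\y)\big)\,d\x d\y ,
\]
and the same identity holds with $\x,\y$ symmetrized. Bounding by Hölder, $\int\rho(\x)\,|2f\nabla f|(\x-t\y)\,d\x\le2\norm{f}_{L^4}^2\norm{f\nabla f}_{L^2}$. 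This gives the wrong norms, so instead we use $\norm{\rho}_{L^3}\norm{f}_{L^6}\norm{\nabla f}_{L^2}=\norm{f}_6^3\norm{\nabla f}_2$ via Hölder with exponents $(3,6,2)$. Integrating against $|\y|U_{N^\alpha}(\y)$, whose integral is $N^{-\alpha}\norm{\x U}_{L^1}$, yields the factor $2N^{-\alpha}\norm{\x U}_{L^1}\norm{v}_{L^6}^3\norm{\nabla|v|}_{L^2}$.

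The main obstacle is making the three-body $o(1)$ uniform with only $\norm{|v|}_{H^1}^6$ on the right. I expect the physical-space modulus-of-continuity estimate in $L^3$ to handle it.
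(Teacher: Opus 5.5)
Your proof is correct, and it differs from the paper in that the paper gives no argument for this lemma. It only refers to the literature for \eqref{211} and \eqref{eq:conv-U-rate}, and calls \eqref{212} a 2D analogue of a one-dimensional lemma.

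Your proof of the rate \eqref{eq:conv-U-rate} reproduces the stated constant exactly. It writes $\rho(x)-\rho(x-y)=\int_0^1 y\cdot\nabla\rho(x-ty)\,dt$ with $\nabla\rho=2f\nabla f$, applies H\"older with exponents $(3,6,2)$, and uses $\int|y|U_{N^\alpha}(y)\,dy=N^{-\alpha}\||x|U\|_{L^1}$.

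For the qualitative bounds, the physical-space route you settle on does close, and it is worth writing out. The steps are:
\begin{itemize}
\item Use the splitting $\rho(x)^2-\rho(x-a)\rho(x-b)=\rho(x)\big(\rho(x)-\rho(x-b)\big)+\rho(x-b)\big(\rho(x)-\rho(x-a)\big)$ and H\"older in $L^3$.
\item Use the interpolation $\|\rho-\rho(\cdot-h)\|_{L^3}\leqslant\big(|h|\,\|\nabla\rho\|_{L^{3/2}}\big)^{1/3}\big(2\|\rho\|_{L^6}\big)^{2/3}\lesssim|h|^{1/3}\|f\|_{H^1}^2$.
\item Split the $(a,b)$-integral at $|(a,b)|=RN^{-\beta}$. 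This gives an error at most $C\|f\|_{H^1}^6\big((RN^{-\beta})^{1/3}+\iint_{|(a,b)|>R}W\big)$, which is $o(1)$ uniformly in $v$ for $R=N^{\beta/2}$.
\item The same works for $U$, bounding $\int\rho(x)|\rho(x)-\rho(x-y)|\,dx\leqslant\|\rho\|_{L^{3/2}}\|\rho-\rho(\cdot-y)\|_{L^3}$.
\end{itemize}

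What this buys over the cited route is a unified, self-contained proof of \eqref{211}--\eqref{212} that uses only $W\geqslant 0$ and $\iint W=1$. Neither $W\in L^\infty$ nor the symmetry \eqref{15} is needed. Even for the lower bound, at fixed $y$ the map $(x,z)\mapsto(x-y,x-z)$ has unit Jacobian, so your appeal to \eqref{15} there is superfluous.

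One caution on the Fourier alternative you mention for the three-body term: $\hat\rho\in L^1$ is not available, because $f\in H^1(\R^2)$ can be unbounded. That route would instead need $\hat\rho\in L^{3/2}$, which follows from $\rho\in H^s$ with $s>1/3$ together with Young's inequality. Abandoning it in favour of the physical-space argument was reasonable.
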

\begin{pf}
 The proof of \eqref{211} and \eqref{eq:conv-U-rate} can be found in \cite{Lewin-17,Lewin-18}. On the other hand, \eqref{212} is the 2D analogue of \cite[Lemma 3.1]{NR1d} and we omit its proof for brevity.
\end{pf}
\begin{proof}[\emph{\textbf{Proof of Proposition \ref{thm:Hartree}}}]
The proof for approximate ground states being the same as the one for ground states, with very few changes, we only write the latter for brevity.

First, we prove (i). We see immediately from the variational principle, together with \eqref{211} and the nonnegativity in \eqref{212}, that
\begin{align}\label{214}
    \lim _{N \rightarrow+\infty} E^{\rm H}(a,b,N,\Omega)\leqslant E^{\rm NLS}(a,b,\Omega).
\end{align}
To prove the lower bound, we process as follows. Let $\{u^{\mathrm{H}}_{a,b,N,\Omega}\}_{N}$ be a sequence of ground states of $E^{\rm H}(a,b,N,\Omega)$ for any $0<\Omega<\Omega^*$. We observe that if $\{\|\nabla |u^{\mathrm{H}}_{a,b,N,\Omega}|\|_{L^2}\}_{N}$ is uniformly bounded, then it follows from \eqref{214}, from the nonnegativity of the three-body interaction, from the nonnegativity in \eqref{211}, and from \eqref{212} that $\{\||\x|u^{\mathrm{H}}_{a,b,N,\Omega}\|_{L^2}\}_{N}$ is also uniformly bounded. For the uniform boundedness of $\{\|\nabla|u^{\mathrm{H}}_{a,b,N,\Omega}|\|_{L^2}\}_{N}$, we treat the two cases $0 <
a < a_*$ and ($a \geqslant a_*$ and $\alpha<\beta$) for which we obtain the claim. First, for any $0<\Omega<\Omega^*$, by \eqref{211}, the nonnegativity of the three-body interaction, the diamagnetic inequality \cite{anailsis} and the Gagliardo-Nirenberg inequality \eqref{GN}, we have
\begin{align}
	E^{\rm H}(a,b,N,\Omega) &=
    \mathcal{E}^{\rm H}_{a,b,N,\Omega}(u^{\mathrm{H}}_{a,b,N,\Omega})\notag\\
    &\geqslant\int_{\mathbb{R}^2}\big|\nabla|u^{\mathrm{H}}_{a,b,N,\Omega}|\big|^2 d\x-\frac{a}{2}\int_{\mathbb{R}^2}|u^{\mathrm{H}}_{a,b,N,\Omega}|^4 d\x\notag\\
    &\geqslant \Big(1-\frac{a}{a_*}\Big)\int_{\mathbb{R}^2}\big|\nabla|u^{\mathrm{H}}_{a,b,N,\Omega}|\big|^2 d\x.\label{xiao1}
\end{align}
For any $0 <
a < a_*$, we deduce from \eqref{214} and \eqref{xiao1} that $\{\|\nabla |u^{\mathrm{H}}_{a,b,N,\Omega}|\|_{L^2}\}_{N}$ is uniformly bounded.  
Second, if $a \geqslant a_*$ with the additional assumption $\alpha<\beta$, then assume on the contrary that $\|\nabla |u^{\mathrm{H}}_{a,b,N,\Omega}|\|_{L^2}\to+\infty$ as $N\to+\infty$.
Define $\tilde{u}^{\mathrm{H}}_{N}(\x)=\varepsilon_N {u}^{\mathrm{H}}_{a,b,N,\Omega}(\varepsilon_N \x)$ with $\varepsilon_N:=\|\nabla |u^{\mathrm{H}}_{a,b,N,\Omega}|\|_{L^2}^{-1}$. Similarly \eqref{xiao1}, by the nonnegativity in \eqref{211} yields
\begin{align}
E^{\rm H}(a,b,N,\Omega) &= \mathcal{E}_{a,b,N,\Omega}^H (u^{\mathrm{H}}_{a,b,N,\Omega})\notag\\
  &\geqslant \varepsilon_N^{-4} \frac{b}{6} \iiint_{\mathbb{R}^6} W_{N^\beta\varepsilon_N} (\x - \y, \x - \z) |\tilde{u}^{\mathrm{H}}_{N}(\x)|^2 |\tilde{u}^{\mathrm{H}}_{N}(\y)|^2 |\tilde{u}^{\mathrm{H}}_{N}(\z)|^2 d\x d\y d\z\notag\\
  &\quad+ \varepsilon_N^{-2} \left( \big\|\nabla |\tilde{u}^{\mathrm{H}}_{N}|\big\|_{L^2}^2 - \frac{a}{2} \|\tilde{u}^{\mathrm{H}}_{N}\|_{L^4}^4 \right) + \frac{\varepsilon_N^2(4\omega-\Omega^2)}{4} \int_{\mathbb{R}^2} |\x|^2 |\tilde{u}^{\mathrm{H}}_{N}(\x)|^2  d\x. \label{2.15}
\end{align}
On the one hand, after multiplying \eqref{2.15} by $\varepsilon_N^2$, then using the nonnegativity of the three-body interaction and of the external potential allows to deduce that
\begin{align}\label{xiao5}
    \varepsilon_N^2E^{\rm H}(a,b,N,\Omega)\geqslant\big\|\nabla |\tilde{u}^{\mathrm{H}}_{N}|\big\|_{L^2}^2- \frac{a}{2} \|\tilde{u}^{\mathrm{H}}_{N}\|_{L^4}^4= 1 - \frac{a}{2} \|\tilde{u}^{\mathrm{H}}_{N}\|_{L^4}^4,
\end{align}
since $\big\|\nabla |\tilde{u}^{\mathrm{H}}_{N}|\big\|_{L^2}=1$ by construction. 
Hence, by \eqref{214} and \eqref{xiao5}, we obtain
\begin{align*}
    0\geqslant \limsup_{N\to+\infty}\left(1 - \frac{a}{2} \|\tilde{u}^{\mathrm{H}}_{N}\|_{L^4}^4\right)\geqslant 1-\frac{a}{2} \liminf_{N\to+\infty}\|\tilde{u}^{\mathrm{H}}_{N}\|_{L^4}^4,
\end{align*}
namely,
\begin{align}
   \liminf_{N \to +\infty} \|\tilde{u}^{\mathrm{H}}_{N}\|_{L^4}^4 \geqslant \frac{2}{a} > 0. \label{2.16}
\end{align}
Moreover, the Gagliardo-Nirenberg inequality \eqref{GN} and $\|\tilde{u}^{\mathrm{H}}_{N}\|_{L^2} = 1 = \big\|\nabla |\tilde{u}^{\mathrm{H}}_{N}|\big\|_{L^2}$ give
\begin{align}\label{2.17}
    \big\|\nabla |\tilde{u}^{\mathrm{H}}_{N}|\big\|_{L^2}^2 - \frac{a}{2} \|\tilde{u}^{\mathrm{H}}_{N}\|_{L^4}^4\geqslant %\big\|\nabla |\tilde{u}^{\mathrm{H}}_{N}|\big\|_{L^2}^2 - \frac{a}{a_*}\big\|\nabla |\tilde{u}^{\mathrm{H}}_{N}|\big\|_{L^2}^2\|\tilde{u}^{\mathrm{H}}_{N}\|_{L^2}^2=
    1-\frac{a}{a_*}>0
\end{align}
for all $a>a_*$. On the other hand, multiplying \eqref{2.15} by $\varepsilon_N^4$ and using \eqref{2.17}, we obtain
\[\lim_{N \to +\infty} \iiint_{\mathbb{R}^6} W_{N^\beta\varepsilon_N} (\x - \y, \x - \z) |\tilde{u}^{\mathrm{H}}_{N}(\x)|^2 |\tilde{u}^{\mathrm{H}}_{N}(\y)|^2 |\tilde{u}^{\mathrm{H}}_{N}(\z)|^2 d\x d\y d\z = 0,\]
and claim that, due to the assumption $\alpha < \beta$, it yields the strong convergence $\tilde{u}^{\mathrm{H}}_{N} \to 0$ in $L^6(\mathbb{R}^2)$. Indeed, we first observe that $\varepsilon_N \geqslant CN^{-\alpha}$. This follows from \eqref{214}, $E^{\rm H}(a,b,N,\Omega)= \mathcal{E}_{a,b,N,\Omega}^H (u^{\mathrm{H}}_{a,b,N,\Omega})$ and  the nonnegativity of the three-body interaction and of the external potential that 
\begin{align*}
\mathcal{E}_{a,b,N,\Omega}^H (u^{\mathrm{H}}_{a,b,N,\Omega})\geqslant\int_{\mathbb{R}^2}|(\nabla -i \boldsymbol{A}_\Omega)u^{\mathrm{H}}_{a,b,N,\Omega}|^2 d\x-\frac{a}{2}\iint_{\mathbb{R}^4}U_{N^\alpha}(\x-\y)|u^{\mathrm{H}}_{a,b,N,\Omega}(\x)|^2|u^{\mathrm{H}}_{a,b,N,\Omega}(\y)|^2 d\x d\y.
\end{align*}
Combining  $\|U_{N\alpha}\|_{L^\infty} = N^{2\alpha} \|U\|_{L^\infty}$, by $\|u^{\mathrm{H}}_{a,b,N,\Omega}\|_{L^2} = 1$, the diamagnetic inequality  and Cauchy-Schwarz inequality, we have
\begin{align}\label{xiao2}
    E^{\rm H}(a,b,N,\Omega)\geqslant\int_{\mathbb{R}^2}\big|\nabla |u^{\mathrm{H}}_{a,b,N,\Omega}|\big|^2 d\x -\frac{a}{2}N^{2\alpha} \|U\|_{L^\infty}= \varepsilon_N^{-2}-\frac{a}{2}N^{2\alpha} \|U\|_{L^\infty}.
\end{align}
From \eqref{214}, we deduce that $\lim_{N \to +\infty} (\varepsilon_N^{-2} - \frac{a}{2} N^{2\alpha} \|U\|_{L^\infty}) \leqslant E^{\rm NLS}(a,b,\Omega) < +\infty$. Hence, there exists a constant $C$ such that $\varepsilon_N\geqslant CN^{-\alpha}$ as $N\to+\infty$. Therefore, if $\alpha < \beta$, then $N^{\beta} \varepsilon_N \geqslant C N^{\beta-\alpha} \to +\infty$ as $N\to+\infty$ and, consequently, the scaled three-body interaction $W_{N^{\beta} \varepsilon_N}$ must converge to the delta interaction $\delta_{\x=\y=\z}$. More precisely, replacing $W_{N^{\beta}}$ by $W_{N^{\beta} \varepsilon_N}$ in \eqref{212}---the proof of which does not depend on the specific rate of divergence $N^{\beta}$, hence applies to $N^{\beta} \varepsilon_N$---and using crucially that the $o(1)$ in \eqref{212} does not depend on the $v \in H^1(\mathbb{R}^2)$, hence applies to $\tilde{u}^{\mathrm{H}}_{N}$ for which $\|\tilde{u}^{\mathrm{H}}_{N}\|_{H^1}^2= \|\tilde{u}^{\mathrm{H}}_{N}\|_{L^2}^2+\|\nabla\tilde{u}^{\mathrm{H}}_{N}\|_{L^2}^2= 2$, we obtain
\[
\lim_{N \to +\infty} \|\tilde{u}^{\mathrm{H}}_{N}\|_{L^6}^6 = \lim_{N \to +\infty} \iiint_{\mathbb{R}^6} W_{N^{\beta} \varepsilon_N}(\x-\y, \x-\z) |\tilde{u}^{\mathrm{H}}_{N}(\x)|^2 |\tilde{u}^{\mathrm{H}}_{N}(\y)|^2 |\tilde{u}^{\mathrm{H}}_{N}(\z)|^2 d\x d\y d\z=0.
\]
Consequently, $\tilde{u}^{\mathrm{H}}_{N}\to 0$ strongly in $L^p(\mathbb{R}^2)$, for $2 < p \leqslant 6$, by interpolation, contradicting \eqref{2.16}. Thus, $\{\|\nabla |u^{\mathrm{H}}_{a,b,N,\Omega}|\|_{L^2}\}_{N}$ is uniformly bounded.

Therefore, we have proved in both our cases that $\{\||\x|u^{\mathrm{H}}_{a,b,N,\Omega}\|_{L^2}\}_N$ and $\{\|\nabla |u^{\mathrm{H}}_{a,b,N,\Omega}|\|_{L^2}\}_{N}$ are uniformly bounded. We can now claim that $\{\big\| \nabla u^{\mathrm{H}}_{a,b,N,\Omega} \big\|_{L^2}\}_N$ is also uniformly bounded. Indeed, by the Cauchy-Schwarz inequality, we then have
 \begin{align}
\left|\Omega L\big(u^{\mathrm{H}}_{a,b,N,\Omega}\big)\right|
 \leqslant \frac{1}{2} \int_{\mathbb{R}^{2}}\left|\nabla u^{\mathrm{H}}_{a,b,N,\Omega}\right|^{2}  d\x +\frac{\Omega^{2}}{2} \int_{\mathbb{R}^{2}}|\x|^{2}\left|u^{\mathrm{H}}_{a,b,N,\Omega}\right|^{2} d\x ,\label{31311}
\end{align}
where $C>0$ is a constant, independent of $N$. Moreover, 
\begin{align*}
    \int_{\mathbb{R}^2}|(\nabla -i \boldsymbol{A}_\Omega)u^{\mathrm{H}}_{a,b,N,\Omega}|^2 d\x &=\int_{\mathbb{R}^2} |\nabla u^{\mathrm{H}}_{a,b,N,\Omega}|^2 d\x +\frac{\Omega^2}{4}\int_{\mathbb{R}^2}|\x|^2 |u^{\mathrm{H}}_{a,b,N,\Omega}|^2 d\x -\Omega L\big(u^{\mathrm{H}}_{a,b,N,\Omega}\big)\\
    &\geqslant \int_{\mathbb{R}^2} |\nabla u^{\mathrm{H}}_{a,b,N,\Omega}|^2 d\x -\frac{\Omega^2}{4}\int_{\mathbb{R}^2}|\x|^2 |u^{\mathrm{H}}_{a,b,N,\Omega}|^2 d\x 
\end{align*}
On the other hand, we deduce from Lemma \ref{lem:conv} and \eqref{214} that $\int_{\mathbb{R}^2}|(\nabla -i \boldsymbol{A}_\Omega)u^{\mathrm{H}}_{a,b,N,\Omega}|^2 d\x $ is also uniformly bounded. Hence, combining $\||\x|u^{\mathrm{H}}_{a,b,N,\Omega}\|_{L^2}\leqslant C$, we obtain
\begin{align}\label{xiao4}
    \int_{\mathbb{R}^2} |\nabla u^{\mathrm{H}}_{a,b,N,\Omega}|^2 d\x &\leqslant \frac{\Omega^2}{4}\int_{\mathbb{R}^2}|\x|^2 |u^{\mathrm{H}}_{a,b,N,\Omega}|^2 d\x +\int_{\mathbb{R}^2}|(\nabla -i \boldsymbol{A}_\Omega)u^{\mathrm{H}}_{a,b,N,\Omega}|^2 d\x \leqslant C.
\end{align}
Then, there exists $u_0 \in H^1(\mathbb{R}^2)$ such that, up to a subsequence, the convergence $u^{\mathrm{H}}_{a,b,N,\Omega}\to u_0$ holds weakly in $H^1(\mathbb{R}^2)$, almost everywhere in $\mathbb{R}^2$, and strongly in $L^r(\mathbb{R}^2)$ for $2\leqslant r< +\infty$. In particular, $\|u_0\|_{L^2}=1$ hence, by \eqref{211}, \eqref{212} and the weak lower semicontinuity, we have
\begin{align*}
    \lim_{N \to +\infty} \mathcal{E}_{a,b,N,\Omega}^{\rm{H}}(u^{\mathrm{H}}_{a,b,N,\Omega}) \geqslant \mathcal{E}_{a,b,\Omega}^{\rm{NLS}}(u_0) \geqslant E^{\rm{NLS}}(a,b,\Omega).
\end{align*}
Together with \eqref{214}, this yields \eqref{26} as well as the $H^1$-convergence in \eqref{25}.

Now we complete the proof of Theorem \ref{thm:Hartree} by proving (ii). That is, the asymptotic behavior of Hartree energy and of its ground states in the collapse regime. By the variational principle, the second inequality in \eqref{eq:conv-U-rate}, and the nonnegativity in \eqref{212}, we have
\begin{align}
    E^{\rm H}(a_N,b_N,N,\Omega) &\leqslant \mathcal{E}^{\rm H}_{a_N,b_N,N,\Omega}\big(\ell_N^{-1} Q_0(\ell_N^{-1}\x)\big)\notag\\
    &\leqslant \mathcal{E}^{\rm NLS}_{a_N,b_N,\Omega}\big(\ell_N^{-1} Q_0(\ell_N^{-1}\x)\big) + CN^{-\alpha} \ell_N^{-3}\notag\\
    &= \left( 1- \frac{\zeta}{4} + CN^{-\alpha} \ell_N^{-5} + o(1) \right) \mathcal{Q}_{\rm pot} \ell_{N}^{2}. \label{2.18}
\end{align}
We recall that we assume $\ell_N = N^{-\eta}$ with $\eta > 0$. Thus, the error term $N^{-\alpha} \ell_N^{-5}$ is negligible when $\eta < \alpha/5$ and we obtained the upper bound in \eqref{28}. The matching lower bound in \eqref{28} is a consequence of the claimed convergence \eqref{29}, which is obtained as follows. Let $u^{\mathrm{H}}_{a_N,b_N,N,\Omega}$ be a ground state of $E^{\rm H}(a_N,b_N,N,\Omega)$ and $w^{\mathrm{H}}_N(\x) := \ell_N u^{\mathrm{H}}_{a_N,b_N,N,\Omega}(\ell_N\x)$. Then, $\|w^{\mathrm{H}}_N\|_{L^2} = \|u^{\mathrm{H}}_{a_N,b_N,N,\Omega}\|_{L^2} = 1$ and, by the diamagnetic inequality and the nonnegativity in \eqref{211}, 
\begin{align}
   E^{\rm H}(a_N,b_N,N,\Omega) &\geqslant \ell_N^{-4} \frac{b_N}{6} \iiint_{\mathbb{R}^6} W_{N^{\beta} \ell_N}(\x-\y, \x-\z) |w^{\mathrm{H}}_N(\x)|^2 |w^{\mathrm{H}}_N(\y)|^2 |w^{\mathrm{H}}_N(\z)|^2  d\x d\y d\z \notag\\
   &\quad+ \ell_N^{-2} \left( \big\| \nabla |w^{\mathrm{H}}_N| \big\|_{L^2}^2 - \frac{a_N}{2} \| w^{\mathrm{H}}_N \|_{L^4}^4 \right) + \frac{\ell_N^2(4\omega-\Omega^2)}{4} \int_{\mathbb{R}^2} |\x|^2 |w^{\mathrm{H}}_N(\x)|^2 d\x. \label{2.19}
\end{align}
We distinguish two (overlapping) cases for $\zeta \geqslant 0$: $\zeta \neq 0$ and $\zeta \neq 1$.

\textbf{Case $\zeta \neq 0$.} Combining \eqref{2.18}, under the assumption $\eta < \alpha/5$, with \eqref{2.19} and using the definition of $\ell_N$ depending on $b_N$ in \eqref{def:llen}, we obtain
\begin{align}
    1 - \frac{\zeta}{4} + o(1)
    &\geqslant \frac{\zeta}{24Q_{L^6}} \iiint_{\mathbb{R}^6} W_{N^\beta\ell_N}(\x-\y,\x-\z)|w^{\mathrm{H}}_N(\x)|^2|w^{\mathrm{H}}_N(\y)|^2|w^{\mathrm{H}}_N(\z)|^2 d\x d\y d\z\notag\\
    &\quad+ \ell_N^{-4} Q_{\rm pot}^{-1} \left( \|\nabla |w^{\mathrm{H}}_N|\|_{L^2}^2 - \frac{a_N}{2} \|w^{\mathrm{H}}_N\|_{L^4}^4 \right) + \frac{(4\omega-\Omega^2)}{4Q_{\rm pot}} \int_{\mathbb{R}^2} |\x|^2 |w^{\mathrm{H}}_N(\x)|^2   d\x .\label{2.20}
\end{align}
Here the detailed derivation process of these coefficients can be found in the proof of Theorem \ref{thm:main}. Moreover, by \eqref{eq:rate}, the Gagliardo-Nirenberg inequality \eqref{GN}, and the definition of $\ell_N$ depending on $b_N$ in \eqref{def:llen}, we have
\begin{align}
    \ell_N^{-4} Q_{\rm pot}^{-1} \left( \|\nabla |w^{\mathrm{H}}_N|\|_{L^2}^2 - \frac{a_N}{2} \|w^{\mathrm{H}}_N\|_{L^4}^4 \right) \geqslant \left( \frac{1-\zeta}{2} + o(1) \right) \|\nabla |w^{\mathrm{H}}_N|\|_{L^2}^2.\label{2.21}
\end{align}
Again, we observe that if $\{\|\nabla |w^{\mathrm{H}}_N|\|_{L^2}\}_N$ is uniformly bounded, then it follows from \eqref{2.20} and \eqref{2.21} that $\{\||\x| w^{\mathrm{H}}_N\|_{L^2}\}_N$ is also uniformly bounded. For the uniform boundedness of $\{\|\nabla |w^{\mathrm{H}}_N|\|_{L^2}\}_N$, we treat the two cases $0 < \zeta < 1$ and $(\zeta \geqslant 1$ and $\alpha < \beta)$ for which we obtain the claim. First, we observe that if $0 < \zeta < 1$, by \eqref{2.20} and \eqref{2.21}, then the argument for $\{|\x| w^{\mathrm{H}}_N\|_{L^2}\}_N$ just mentioned actually gives also the uniform boundedness of $\{\|\nabla |w^{\mathrm{H}}_N|\|_{L^2}\}_N$. Second, if $\zeta \geqslant 1$ with the additional assumption $\alpha < \beta$, then assume on the contrary that $\|\nabla |w^{\mathrm{H}}_N|\|_{L^2} \to +\infty$ as $N \to +\infty$. Define $\tilde{w}^{\mathrm{H}}_N(\x) = \epsilon_N w^{\mathrm{H}}_N(\epsilon_N \x)$ with $\epsilon_N := \|\nabla |w^{\mathrm{H}}_N|\|_{L^2}^{-1} \to 0$
as $N \to +\infty$. Dropping the nonnegative external term in \eqref{2.20}, we have
\begin{align}
   1 - \frac{\zeta}{4} + o(1) &\geqslant \epsilon_N^{-4} \frac{\zeta}{24Q_{L^6}} \iiint_{\mathbb{R}^6} W_{N^\beta\ell_N\epsilon_N} (\x-\y, \x-\z) |\tilde{w}^{\mathrm{H}}_N(\x)|^2 |\tilde{w}^{\mathrm{H}}_N(\y)|^2 |\tilde{w}^{\mathrm{H}}_N(\z)|^2  d\x d\y d\z\notag\\
   &\quad+ \epsilon_N^{-2} \ell_N^{-4} Q_{\rm pot}^{-1} \left( \|\nabla |\tilde{w}^{\mathrm{H}}_N|\|_{L^2}^2 - \frac{a_N}{2} \|\tilde{w}^{\mathrm{H}}_N\|_{L^4}^4 \right).\label{2.22}
\end{align}
On the one hand, multiplying \eqref{2.22} by $\epsilon_N^2 \ell_N^{4}$, dropping the nonnegative three-body interaction term, and using that $\|\tilde{w}^{\mathrm{H}}_N\|_{L^2}=1=\|\nabla|\tilde{w}^{\mathrm{H}}_N|\|_{L^2}$ and that $a_N\to a_*$ and $\epsilon_N^2 \ell_N^{4}\to0$ as $N\to+\infty$, we deduce that
\begin{align*}
    0\geqslant \limsup_{N\to+\infty}\left(\|\nabla|\tilde{w}^{\mathrm{H}}_N|\|_{L^2} - \frac{a_N}{2} \|\tilde{w}^{\mathrm{H}}_{N}\|_{L^4}^4\right)\geqslant 1-\frac{a_*}{2} \liminf_{N\to+\infty}\|\tilde{w}^{\mathrm{H}}_{N}\|_{L^4}^4,
\end{align*}
namely,
\begin{align}
    \liminf_{N \to +\infty} \|\tilde{w}^{\mathrm{H}}_N\|_{L^4}^4 \geqslant \frac{2}{a_*} > 0.\label{2.23}
\end{align}
On the other hand, multiplying \eqref{2.22} by $\epsilon_N^4$ and using \eqref{2.21} on $|\tilde{w}^{\mathrm{H}}_N|$, we deduce that
\begin{align}
   \lim_{N\to+\infty}\iiint_{\mathbb{R}^6} W_{N^\beta\ell_N\epsilon_N} (\x-\y, \x-\z) |\tilde{w}^{\mathrm{H}}_N(\x)|^2 |\tilde{w}^{\mathrm{H}}_N(\y)|^2 |\tilde{w}^{\mathrm{H}}_N(\z)|^2  d\x d\y d\z=0.\label{2.24} 
\end{align}
Since $u^{\mathrm{H}}_{a_N,b_N,N,\Omega}$ is a ground state of $E^{\rm H}(a_N,b_N,N,\Omega)$, similarly to \eqref{xiao2} we can drop the nonnegativity three-body and external terms and use $\|U_N\|_{L^\infty} = N^{2\alpha} \|U\|_{L^\infty}$ for a upper bound on $E^{\rm H}(a_N,b_N,N,\Omega)$ in \eqref{2.18}. Passing then to the limit yields $\big\|\nabla |u^{\mathrm{H}}_{a_N,b_N,N,\Omega}|\big\|_{L^2} \leqslant C N^\alpha $, which is derived by using the diamagnetic inequality. Hence, 
$$\epsilon_N = \big\|\nabla |w^{\mathrm{H}}_N|\big\|_{L^2}^{-1} = \ell_N^{-1} \big\|\nabla |u^{\mathrm{H}}_{a_N,b_N,N,\Omega}|\big\|_{L^2}^{-1} \geqslant C\ell_N^{-1} N^{-\alpha}.$$
Moreover, when $\alpha < \beta$, $ N^{\beta}\ell_N\epsilon_N \geqslant CN^{\beta-\alpha} \to +\infty $ as $ N \to +\infty $. Therefore, with the same arguments as before, replacing $ W_{N^\beta} $ by $ W_{N^\beta\ell_N\epsilon_N} $ in \eqref{212} and using \eqref{2.24} gives that $ \tilde{w}^{\mathrm{H}}_N \to 0 $ strongly in $ L^6(\mathbb{R}^2) $. Consequently, $ \tilde{w}^{\mathrm{H}}_N \to 0 $ strongly in $ L^p(\mathbb{R}^2) $ for $2<p\leq6$, by interpolation, contradicting \eqref{2.23}. Thus, $\{\|\nabla |w^{\mathrm{H}}_{N}|\|_{L^2}\}_{N}$ is uniformly bounded.

Therefore, we have proved in both our cases that $\{\||\x|w^{\mathrm{H}}_{N}\|_{L^2}\}_N$ is uniformly bounded. Hence, with the same arguments as \eqref{xiao4}, we have $\{\big\|\nabla w^{\mathrm{H}}_{N}\big\|_{L^2}\}_N$ is also uniformly bounded.
Thus, there exists $ w \in H^1(\mathbb{R}^2) $ such that, up to a subsequence, the convergence $ w^{\mathrm{H}}_{N} \to w $ holds weakly in $ H^1(\mathbb{R}^2) $, almost everywhere in $ \mathbb{R}^2 $, and strongly in $ L^r(\mathbb{R}^2) $ for $ 2 \leqslant r < +\infty $. At this stage, we note that $ N^{\beta}\ell_N \to +\infty $ as $ N \to +\infty $, due to our assumptions $ \ell_N \sim N^{-\eta} $ and $ \eta < \beta $. We then use \eqref{212} on  $w^{\mathrm{H}}_{N}$, with the uniform boundedness of $ \{\|\nabla w^{\mathrm{H}}_{N}\|_{L^2} \}_N$ and the strong convergence $ w^{\mathrm{H}}_{N}\to w $ in $ L^6(\mathbb{R}^2) $, and obtain
\begin{align*}
  \lim_{N \to +\infty} \iiint_{\mathbb{R}^6} W_{N^\beta\ell_N}(\x-\y, \x-\z)|w^{\mathrm{H}}_{N}(\x)|^2|w^{\mathrm{H}}_{N}(\y)|^2|w^{\mathrm{H}}_{N}(\z)|^2  d\x  d\y  d\z = \|w\|_{L^6}^6.  
\end{align*}
Inserting the above into \eqref{2.20}, using \eqref{31311} and \eqref{2.21}, and adapting arguments in the proof of Theorem \ref{thm:main}, we conclude the proof of the convergence of Hartree ground states in \eqref{29} and of the energy lower bound in \eqref{28}.

\textbf{Case $\zeta\neq1$.} In this case, ${(a_*-a_N)}/{((1-\zeta)a_*)}>0$ for $N$ large enough, by \eqref{eq:rate}. Applying the Gagliardo-Nirenberg inequality \eqref{GN} to the $L^4$-norm term in \eqref{2.19} and rewriting the lower bound obtained that way, using \eqref{eq:rate} and the definition of $ \ell_N $ depending on $a_N$ in \eqref{def:llen}---it is well defined at least for $ N $ large enough since, by \eqref{eq:rate}, ${(a_*-a_N)}/{((1-\zeta)a_*)}>0$ for $ N $ large enough---we obtain by \eqref{2.18}, where we assume $ \eta < \alpha/5 $, that
\begin{align*}
  1 - \frac{\zeta}{4} + o(1)  &\geqslant \left( \frac{\zeta}{24Q_L^6} + o(1) \right) \iiint_{\mathbb{R}^6} W_{N^\beta\ell_N}(\x-\y, \x-\z)|w^{\mathrm{H}}_{N}(\x)|^2|w^{\mathrm{H}}_{N}(\y)|^2|w^{\mathrm{H}}_{N}(\z)|^2  d\x d\y d\z\\
  &\qquad+ \frac{1-\zeta}{2} \big\|\nabla |w^{\mathrm{H}}_{N}|\big\|_{L^2}^2 + \frac{(4\omega-\Omega^2)}{4Q_{\rm pot}}\int_{\mathbb{R}^2} |\x|^2|w^{\mathrm{H}}_{N}(x)|^2 \,  d\x .
\end{align*}
For $ 0 \leqslant \zeta < 1 $ the above yields the uniformly boundedness of $\{\big\|\nabla |w^{\mathrm{H}}_{N}|\big\|_{L^2}\}_N$ and $\{\||\x|w^{\mathrm{H}}_{N}\|_{L^2}\}_N$. This actually holds true for any $\zeta>1$ too, under the additional assumption $ \alpha < \beta $, by the same arguments as in the case $ \zeta \neq 0 $. For the rest of the proof, we omit the details since it strictly follows the end of the proof in the case $ \zeta \neq 0 $. 
\end{proof}

\subsection{Condensation and collapse in the many-body theory}
As in \cite{Lewin-16}, a major ingredient in our proof is a quantitative version of the quantum de Finetti theorem.
\begin{pro}[Quantitative quantum de Finetti]\label{cfde-Finetti}
Let  $\Psi_N \in \mathfrak{H}^{N}=\bigotimes_{\mathrm{sym}}^{N} L^{2}\left(\mathbb{R}^{2}\right)$  and let  $P$  be a finite-rank orthogonal projector with
\begin{align*}
    \operatorname{dim}(P \mathfrak{H})=d<\infty .
\end{align*}
Then, there exists a positive Borel measure  $d \mu_{\Psi_N}$  on the unit sphere  $\mathcal{S}_{P \mathfrak{H}}$  such that
\begin{align}\label{pure-estimate}
    \operatorname{Tr}\left|P^{\otimes 3} \gamma_{\Psi_N}^{(3)} P^{\otimes 3}-\int_{\mathcal{S}_{P \mathfrak{H}}}|u^{\otimes 3}\rangle\langle u^{\otimes 3}| d \mu_{\Psi_N}(u) \right| \leqslant \frac{12d}{N}
\end{align}
and
\begin{align}\label{225}
   1\geqslant\int_{\mathcal{S}_{P \mathfrak{H}}} d \mu_{\Psi_N}(u) =\operatorname{Tr}\left[P^{\otimes 3} \gamma_{\Psi_N}^{(3)} P^{\otimes 3}\right]\geqslant1-3\operatorname{Tr}\left[P_\perp \gamma_{\Psi_N}^{(1)}\right],
\end{align}
where $P_\perp = \mathbf{1} - P$.
\end{pro}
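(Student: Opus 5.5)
This is the quantitative de Finetti theorem of Christandl--König--Mitchison--Renner and Chiribella, in the form used by Lewin--Nam--Rougerie. I would prove it with the Schur (coherent-state) construction applied to a localized state in Fock space.

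\textbf{Step 1: localization.} Since $P$ is finite rank, geometric localization in Fock space gives a localized state $G_N^P=\bigoplus_{k=0}^N G_{N,k}^P$ on $\mathcal F(P\mathfrak H)$. Its components satisfy
\begin{align*}
P^{\otimes n}\gamma^{(n)}_{\Psi_N}P^{\otimes n}=\binom{N}{n}^{-1}\sum_{k=n}^N\binom{k}{n}\,\Tr_{n+1\to k}G^P_{N,k},\qquad \sum_{k}\Tr G^P_{N,k}=1.
\end{align*}

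\textbf{Step 2: de Finetti on each $k$-sector.} On the finite-dimensional space $\bigotimes_{\mathrm{sym}}^k P\mathfrak H$, use the resolution of identity
\begin{align*}
\dim\Big(\textstyle\bigotimes_{\mathrm{sym}}^k\C^d\Big)\int_{\mathcal S_{P\mathfrak H}}|u^{\otimes k}\rangle\langle u^{\otimes k}|\,du=\mathbf 1 .
\end{align*}
Define the lower symbol $\mu_k(u)=\dim\big(\bigotimes_{\mathrm{sym}}^k\C^d\big)\,\langle u^{\otimes k},G_{N,k}^Pu^{\otimes k}\rangle\,du$. The Christandl--König--Mitchison--Renner bound then gives, for $k\geqslant 3$,
\begin{align*}
\Tr\Big|\Tr_{4\to k}G^P_{N,k}-\int|u^{\otimes3}\rangle\langle u^{\otimes 3}|\,d\mu_k\Big|\leqslant \frac{4\cdot 3\, d}{k}\Tr G_{N,k}^P .
\end{align*}

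\textbf{Step 3: assembling the measure.} Set
\begin{align*}
d\mu_{\Psi_N}=\sum_{k\geqslant3}\binom{N}{3}^{-1}\binom{k}{3}\,d\mu_k ,
\end{align*}
and sum the sector estimates with these weights. The remaining error is the mismatch between $\binom{k}{3}/\binom{N}{3}$ and $(k/N)^3$ together with the $d/k$ terms. This is the main obstacle: one must check that $\sum_k\binom{N}{3}^{-1}\binom{k}{3}\cdot\frac{12d}{k}\,\Tr G^P_{N,k}\leqslant \frac{12d}{N}$. I would handle it with the elementary inequality $\binom{k}{3}/k\leqslant\binom{N}{3}/N$ for $k\leqslant N$, combined with $\sum_k\Tr G_{N,k}^P\leqslant1$. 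Standard proofs in fact place the whole $k$-dependence inside $\mu_{\Psi_N}$ so that only the CKMR error survives, and I would follow Lewin--Nam--Rougerie's Lemma for this.

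\textbf{Step 4: mass identity.} Taking the trace of the measure gives $\int d\mu_{\Psi_N}=\Tr[P^{\otimes3}\gamma^{(3)}_{\Psi_N}P^{\otimes 3}]$, and this is at most $1$. For the lower bound, write
\begin{align*}
\mathbf 1-P^{\otimes 3}\leqslant P_\perp\otimes\mathbf1\otimes\mathbf1+\mathbf1\otimes P_\perp\otimes\mathbf1+\mathbf1\otimes\mathbf1\otimes P_\perp .
\end{align*}
By symmetry of $\gamma^{(3)}_{\Psi_N}$ this gives $\Tr[P^{\otimes3}\gamma^{(3)}_{\Psi_N}]\geqslant 1-3\Tr[P_\perp\gamma^{(1)}_{\Psi_N}]$. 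Since $\Tr[P^{\otimes3}\gamma^{(3)}_{\Psi_N}]=\Tr[P^{\otimes3}\gamma^{(3)}_{\Psi_N}P^{\otimes3}]$, this yields \eqref{225}.
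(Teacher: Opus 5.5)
Your proof is correct and follows the same route as the paper, which gives no argument of its own and cites Lewin--Nam--Rougerie. That cited proof is what you reconstruct: Fock-space localization, the Christandl--K\"onig--Mitchison--Renner bound on each $k$-sector, weights $\binom{k}{3}/\binom{N}{3}$ closed by $\binom{k}{3}/k\leqslant\binom{N}{3}/N$, and the mass bound via $\mathbf 1-P^{\otimes 3}\leqslant\sum_{j}P_\perp^{(j)}$, as in \eqref{2body}. Your worry about a ``mismatch with $(k/N)^3$'' does not arise: your measure carries exactly the weights $\binom{k}{3}/\binom{N}{3}$, so only the per-sector error remains, and your elementary inequality already finishes Step 3.
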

\begin{pf}
The first inequality \eqref{pure-estimate} is contained in \cite[Lemma 3.4]{Lewin-16}. The second inequality \eqref{225} is established in the course of the proof of \cite[Lemma 3.8]{Lewin-16}.
\end{pf}

\begin{proof}[\emph{\textbf{Proof of Theorem \ref{the:many-body}}}]
Recall that the one-body operator in the rotating frame can be written as
\[
h= -\Delta+\omega|\x|^{2}-\Omega L
=\big(-i\nabla-\boldsymbol{A}_\Omega\big)^2+\frac{4\omega-\Omega^2}{4}|\x|^2,
\]
where $\boldsymbol{A}_\Omega(\x)=\Omega \x^\perp/2=\Omega(-x_2,x_1)/2$. For $\Omega<2\sqrt{\omega}$ the quadratic form of $h$ is equivalent to that of the non-rotating harmonic oscillator; more precisely, there exist constants $c,C>0$ such that
\[
c(-\Delta+\omega|\x|^{2})-C \leqslant h \leqslant C(-\Delta+\omega|\x|^{2})+C .
\]
We introduce a high-energy cutoff $K>0$ (to be chosen later) and define the spectral projector 
\begin{align*}
    P=\mathbf{1}(h\leqslant K)\quad\text{and}\quad P_\perp=\mathbf{1}-P.
\end{align*}
By the Cwikel-Lieb-Rosenblum type estimate (see,
e.g., \cite[Lemma 3.3]{Lewin-16}) for the 2D harmonic oscillator, the dimension of the projected space $P\mathfrak{H}$ is bounded by
\begin{align}\label{finited}
    \dim(P\mathfrak{H}) \leqslant C K^2
\end{align}
for a constant $C$ depending only on $\Omega$ and $\omega$. It is noted that the upper bound of $E^{\mathrm{QM}}(a,b,N,\Omega)$ can be directly obtained from \eqref{20} and \eqref{26}. Now, we will provide the proof for the lower bound of $E^{\mathrm{QM}}(a,b,N,\Omega)$ to complete the proof of \eqref{eq:QMNLS0}. 

We project through $P$ the many-body ground states $\Psi_N$ onto the finite dimensional space and bound the full energy from below in terms of projected state of the
Hamiltonian $H^N_{a,b,\Omega}$. The energy can be expressed via the $k$-body reduced density matrices  $\gamma_{\Psi_N}^{(k)}$, we obtain
\begin{align}
\frac{\langle\Psi_N|H^N_{a,b,\Omega}|\Psi_N\rangle}{N}
&=\Tr\left[ h \gamma_{\Psi_N}^{(1)} \right] - \frac{a}{2} \Tr\left[ U_{N^\alpha} \gamma_{\Psi_N}^{(2)} \right]+\frac{b}{6} \Tr\left[ W_{N^\beta} \gamma_{\Psi_N}^{(3)} \right]\notag\\
&=\Tr\left[ \Big(\frac{h_{1}+h_{2}+h_{3}}{3}- \frac{a}{4}  \mathcal{U}_{N^\alpha}+ \frac{b}{6}  W_{N^\beta}\Big) \gamma_{\Psi_N}^{(3)} \right]\notag\\
&=:\frac{1}{3}\Tr\left[ H_{a,b,N,\Omega}^{(3)} \gamma_{\Psi_N}^{(3)} \right],\label{228}
\end{align}
where
\begin{align*}
    H_{a,b,N,\Omega}^{(3)}=(h_{1}+h_{2}+h_{3})- \frac{3a}{4}  \mathcal{U}_{N^\alpha}+ \frac{b}{2}  W_{N^\beta}.
\end{align*}
Here $h_{i}$ acts on the $i$-th variable and 
\begin{align*}
  \mathcal{U}_{N^\alpha}=U_{N^\alpha}\otimes \mathbf{1}+\mathbf{1}\otimes U_{N^\alpha}.  
\end{align*}

For the one-body noninteracting term, since $h\ge0$, $P=\mathbf{1}(h\leqslant K)$ and $P_\perp = \mathbf{1} - P$, we have $Ph P\leqslant KP$ and $P_\perp h P_\perp\geqslant KP_\perp$. Hence,
\begin{align}
\Tr\Big[h \gamma_{\Psi_N}^{(1)}\Big]&=\Tr\Big[Ph P\gamma_{\Psi_N}^{(1)}\Big]+\Tr\Big[P_\perp h P_\perp \gamma_{\Psi_N}^{(1)}\Big] \notag\\
&\geqslant\frac{1}{3}\Tr\Big[(h_{1}+h_{2}+h_{3})P^{\otimes3} \gamma_{\Psi_N}^{(3)}P^{\otimes3}\Big]+K\Tr\Big[P_\perp \gamma_{\Psi_N}^{(1)}\Big].\label{1body}
\end{align}
For the two-body interaction term, we write
\begin{align*}
    \mathcal{U}_{N^\alpha}=P^{\otimes3}\mathcal{U}_{N^\alpha}P^{\otimes3} +\Pi\,\mathcal{U}_{N^\alpha}\Pi+P^{\otimes3} \mathcal{U}_{N^\alpha}\Pi+\Pi\,\mathcal{U}_{N^\alpha}P^{\otimes3}
\end{align*}
with the orthogonal projection $\Pi:=\mathbf{1}^{\otimes 3} - P^{\otimes 3}$. To bound the error terms, using the operator inequality (see, e.g., \cite[below (43)]{Lewin-17}), 
\begin{align}\label{231}
    X A Y + Y A X \geqslant -\eps X|A|X - \eps^{-1}Y|A|Y
\end{align}
valid for any $\varepsilon>0$, any signed, self-adjoint operator $A$, and any orthogonal projectors
$X$ and $Y$. Applying \eqref{231} to $X = P^{\otimes 3}$, $Y = \Pi$, and $A = -U_{N^\alpha}\leq0$, we deduce that
\begin{align*}
    U_{N^\alpha} &\leqslant (1+\varepsilon) P^{\otimes 3}\mathcal{U}_{N^\alpha}P^{\otimes 3} +(1+\varepsilon^{-1})\Pi\,\mathcal{U}_{N^\alpha}\Pi\notag\\
    &\leqslant  P^{\otimes 3}\mathcal{U}_{N^\alpha}P^{\otimes 3} + CN^{2\alpha}\Big(\varepsilon P^{\otimes 3}+(1+\varepsilon^{-1})\Pi\Big),
\end{align*}
where we have used that $U_{N^\alpha}\leqslant N^{2\alpha} \|U\|_{L^\infty}$. Note that
\begin{gather*}
    \begin{aligned}
    \Pi&=P_\perp\otimes \mathbf{1}\otimes \mathbf{1}+P\otimes P_\perp\otimes \mathbf{1}+P\otimes P\otimes P_\perp\notag\\
    &\leqslant P_\perp\otimes \mathbf{1}\otimes \mathbf{1}+\mathbf{1}\otimes P_\perp\otimes \mathbf{1}+\mathbf{1}\otimes \mathbf{1}\otimes P_\perp,
\end{aligned}
\end{gather*}
then
\begin{equation}\label{2body}
    \Tr\Big[ \Pi \gamma_{\Psi_N}^{(3)} \Big] \leqslant 3 \Tr\Big[P_\perp \gamma_{\Psi_N}^{(1)}\Big].
\end{equation}
Optimizing $\eps$ and utilizing \eqref{2body}, we obtain
\begin{equation}\label{233}
    - a \Tr\left[ \mathcal{U}_{N^\alpha} \gamma_{\Psi_N}^{(3)} \right] \geqslant - a \Tr\left[ \mathcal{U}_{N^\alpha} P^{\otimes 3} \gamma_{\Psi_N}^{(3)} P^{\otimes 3} \right] - C N^{2\alpha} \sqrt{\Tr\Big[P_\perp \gamma_{\Psi_N}^{(1)}\Big]}.
\end{equation}
Similarly, applying \eqref{231} to $X = P^{\otimes 3}$, $Y = \Pi$, $A = W_{N^\beta}\geqslant0$, and using $W_{N^\beta}\leqslant N^{4\beta} \|W\|_{L^\infty}$, we obtain
\begin{equation}\label{234}
 \Tr\left[ W_{N^\beta} \gamma_{\Psi_N}^{(3)} \right] \geqslant \Tr\left[W_{N^\beta} P^{\otimes 3} \gamma_{\Psi_N}^{(3)} P^{\otimes 3} \right] - C N^{4\beta} \sqrt{\Tr\Big[P_\perp \gamma_{\Psi_N}^{(1)}\Big]}.
\end{equation}
Putting together the estimates \eqref{1body}, \eqref{233}, and \eqref{234} into \eqref{228}, we arrive at
\begin{align}
\frac{\langle\Psi_N|H^N_{a,b,\Omega}|\Psi_N\rangle}{N} &\geqslant \frac{1}{3}\Tr\left[ H_{a,b,N,\Omega}^{(3)} P^{\otimes 3} \gamma_{\Psi_N}^{(3)} P^{\otimes 3} \right] + K\Tr\Big[P_\perp \gamma_{\Psi_N}^{(1)}\Big]\notag\\
&\qquad\qquad- C  (N^{2\alpha}+N^{4\beta}) \sqrt{\Tr\Big[P_\perp \gamma_{\Psi_N}^{(1)}\Big]}\notag\\
&\geqslant\frac{1}{3}\Tr\left[ H_{a,b,N,\Omega}^{(3)} P^{\otimes 3} \gamma_{\Psi_N}^{(3)} P^{\otimes 3} \right] + \frac{K}{2}\Tr\Big[P_\perp \gamma_{\Psi_N}^{(1)}\Big] - \frac{C  (N^{4\alpha}+N^{8\beta})}{K},\label{235}
\end{align}
where the last inequality is obtained by the Cauchy-Schwarz inequality. 

Next, by Proposition \ref{cfde-Finetti} we bound the truncated energy from below in terms of the de Finetti measure. Using \eqref{pure-estimate} and the triangle inequality, we obtain
\begin{align}
    &\frac13 \Tr\Big[ H_{a,b,N,\Omega}^{(3)} P^{\otimes3}\gamma_{\Psi_N}^{(3)} P^{\otimes3} \Big]
    \geqslant \frac13 \int_{\mathcal{S}_{P \mathfrak{H}}} \langle u^{\otimes3}, H_{a,b,N,\Omega}^{(3)} u^{\otimes3}\rangle\,d\mu_{\Psi_N}(u) \notag\\
    &\qquad\qquad\qquad\quad - \frac13 \big\| H_{a,b,N,\Omega}^{(3)} P^{\otimes3} \big\|\Tr\Big| P^{\otimes3}\gamma_{\Psi_N}^{(3)}P^{\otimes3} - \int_{\mathcal{S}_{P \mathfrak{H}}} |u^{\otimes3}\rangle\langle u^{\otimes3}|\,d\mu_{\Psi_N}(u) \Big|,\label{step1}
\end{align}
where using the fact that $\text{Tr}\bigl[H_{a,b,N,\Omega}^{(3)}|u^{\otimes3}\rangle\langle u^{\otimes3}|\bigr]=\langle u^{\otimes3},H_{a,b,N,\Omega}^{(3)} u^{\otimes3}\rangle$. We now estimate the operator norm of $H_{a,b,N,\Omega}^{(3)}$ restricted to the low‑energy subspace $P \mathfrak{H}$. Because $P$ is precisely the spectral projector of $h_i$ onto energies $\leqslant K$, we have
\begin{align*}
    \big\|h_{i} P^{\otimes3}\big\| \leqslant K \qquad (i=1,2,3).
\end{align*}
For the interaction terms we use the $L^\infty$‑bounds
\begin{align*}
    \|U_{N^\alpha}\|_{L^\infty} = N^{2\alpha}\|U\|_{L^\infty},\qquad
\|W_{N^\beta}\|_{L^\infty} = N^{4\beta}\|W\|_{L^\infty},
\end{align*}
which imply
\begin{align*}
    \big\|U_{N^\alpha}(x_1-x_2) P^{\otimes3}\big\| \leqslant C N^{2\alpha},\qquad \big\|W_{N^\beta}(x_1-x_2,x_1-x_3) P^{\otimes3}\big\| \leqslant C N^{4\beta}.
\end{align*}
Based on the above estimations, we conclude
\begin{equation}\label{norm-est}
\big\| H_{a,b,N,\Omega}^{(3)} P^{\otimes3} \big\| \leqslant C\big( K + N^{2\alpha} + N^{4\beta} \big).
\end{equation}
Insert \eqref{norm-est} and the de Finetti error \eqref{pure-estimate} into \eqref{step1}, we get
\begin{align}\label{cbodylower}
\frac13 \Tr\Big[H_{a,b,N,\Omega}^{(3)} P^{\otimes3}\gamma^{(3)} P^{\otimes3} \Big]
\geqslant \frac13 \int_{\mathcal{S}_{P \mathfrak{H}}} \langle u^{\otimes3}, H_{a,b,N,\Omega}^{(3)} u^{\otimes3}\rangle\,d\mu_{\Psi_N}(u)
- \frac{Cd}{N}\big( K + N^{2\alpha} + N^{4\beta} \big).
\end{align}

Finally, substitute \eqref{finited} and \eqref{cbodylower} into \eqref{235}, we obtain
\begin{align}
\frac{\langle\Psi_N|H^N_{a,b,\Omega}|\Psi_N\rangle}{N}
&\geqslant \frac13\int_{\mathcal{S}_{P \mathfrak{H}}} \langle u^{\otimes3}, H_{a,b,N,\Omega}^{(3)} u^{\otimes3}\rangle\,d\mu_{\Psi_N}(u)
\notag\\
&\qquad\quad+ \frac{K}{2}\Tr\Big[P_\perp \gamma_{\Psi_N}^{(1)}\Big]- \frac{C  (N^{4\alpha}+N^{8\beta})}{K}- \frac{CK^{2}}{N}\big( K + N^{2\alpha} + N^{4\beta} \big),\label{after-def}
\end{align}
where the constant $C$ may change from line to line but remains independent of $N$ and $K$. Note that for each $u\in \mathcal{S}_{P \mathfrak{H}}\subset L^2(\mathbb{R}^2)$ such that $\|u\|_{L^2}=1$. Direct computation, for any $u\in\mathcal{S}_{P \mathfrak{H}}$, we have
\begin{align}\label{237}
    \frac13\langle u^{\otimes3}, H_{a,b,N,\Omega}^{(3)} u^{\otimes3}\rangle
= \mathcal{E}_{a,b,N,\Omega}^{\rm H}(u)\geqslant E^{\mathrm{H}}(a,b,N,\Omega).
\end{align}
Combining \eqref{237} and \eqref{after-def}, using the mass bound \eqref{225}, we deduce that
\begin{align*}
\frac{\langle\Psi_N|H^N_{a,b,\Omega}|\Psi_N\rangle}{N}
&\geqslant \int_{\mathcal{S}_{P \mathfrak{H}}} {{E}}^{\mathrm{H}}(a,b,N,\Omega)\,d\mu_{\Psi_N}(u)+ \frac{K}{2}\Tr\Big[P_\perp \gamma_{\Psi_N}^{(1)}\Big]\\
& \qquad\qquad - \frac{C  (N^{4\alpha}+N^{8\beta})}{K}- \frac{CK^{2}}{N}\big( K + N^{2\alpha} + N^{4\beta} \big) \notag\\
&\geqslant {{E}}^{\mathrm{H}}(a,b,N,\Omega) \left(1 - 3 \operatorname{Tr} \Big[ P_\perp \gamma_{\Psi_N}^{(1)} \Big]\right)
+ \frac{K}{2}\Tr\Big[P_\perp \gamma_{\Psi_N}^{(1)}\Big]\notag\\
& \qquad\qquad -  \frac{C  (N^{4\alpha}+N^{8\beta})}{K}-\frac{CK^{2}}{N}\big( K + N^{2\alpha} + N^{4\beta} \big) \notag\\
&= {{E}}^{\mathrm{H}}(a,b,N,\Omega) + \Big(\frac{K}{2} - 3{{E}}^{\mathrm{H}}(a,b,N,\Omega)\Big)\Tr\Big[P_\perp \gamma_{\Psi_N}^{(1)}\Big] \notag\\
& \qquad\qquad - \frac{C  (N^{4\alpha}+N^{8\beta})}{K}-\frac{CK^{2}}{N}\big( K + N^{2\alpha} + N^{4\beta} \big).
\end{align*}
By \eqref{28} in Proposition \ref{thm:Hartree}, the ground states Hartree energy ${{E}}^{\mathrm{H}}(a,b,N,\Omega)$ is uniformly bounded for any $0<\Omega<\Omega^*$. Therefore we may choose $K>0$ large enough (independently of $N$ for large $N$) such that ${K}/{2} - 3{{E}}^{\mathrm{H}}(a,b,N,\Omega)\geqslant{K}/{4}$, i.e.,
\begin{align}
E^{\mathrm{QM}}(a,b,N,\Omega)
&\geqslant  {{E}}^{\mathrm{H}}(a,b,N,\Omega) + \frac{K}{4} \Tr\Big[P_\perp \gamma_{\Psi_N}^{(1)}\Big] \notag\\
& \qquad\qquad - \frac{C  (N^{4\alpha}+N^{8\beta})}{K}-\frac{CK^{2}}{N}\big( K + N^{2\alpha} + N^{4\beta} \big).\label{ieq1}
\end{align}
We now choose $K$ as a power of $N$ to make the error terms vanish as $N \to+\infty$: we need $N^{1/3}\gg K \gg N^{4\alpha}+N^{8\beta}$. Choosing optimally
\begin{align*}
    K\sim N^{2\alpha+\frac{1}{6}}+N^{4\beta+\frac{1}{6}}\quad\text{with }\alpha<\frac{1}{12}\text{ and }\beta<\frac{1}{24}.
\end{align*}
Hence, insert this into \eqref{ieq1}, using \eqref{20}, we have
\begin{equation*}
{{E}}^{\mathrm{H}}(a,b,N,\Omega) \geqslant E^{\mathrm{QM}}(a,b,N,\Omega)\geqslant {{E}}^{\mathrm{H}}(a,b,N,\Omega)+\frac{K}{4} \Tr\Big[P_\perp \gamma_{\Psi_N}^{(1)}\Big]-C\Big(N^{2\alpha-\frac{1}{6}}+N^{4\beta-\frac{1}{6}}\Big).
\end{equation*}
Dropping this positive term only decreases the right side, so we obtain
\begin{equation}\label{eq:final-lower1}
{{E}}^{\mathrm{H}}(a,b,N,\Omega) \geqslant E^{\mathrm{QM}}(a,b,N,\Omega)\geqslant {{E}}^{\mathrm{H}}(a,b,N,\Omega)-C\left(N^{2\alpha-\frac{1}{6}}+N^{4\beta-\frac{1}{6}}\right).
\end{equation}
Moreover, for all $\alpha<1/{12}$ and $\beta<1/{24}$, the error in \eqref{eq:final-lower1} vanishes  as $N\to+\infty$, yielding
\begin{equation}\label{low-lim}
\liminf_{N\to+\infty} E^{\mathrm{QM}}(a,b,N,\Omega) \ge \liminf_{N\to+\infty} E^{\mathrm{H}}(a,b,N,\Omega)=E^{\mathrm{NLS}}(a,b,\Omega),
\end{equation}
where using \eqref{26}. Collecting the upper bound \eqref{20} and the lower bound \eqref{low-lim}, we have 
\[
\lim_{N\to+\infty} E^{\mathrm{QM}}(a,b,N,\Omega)= {E}^{\mathrm{NLS}}(a,b,\Omega)
\]
which proves \eqref{eq:QMNLS0}. 

Next, we prove that \eqref{condensation} and \eqref{bec} hold. The proof is divided into several steps below for completion.

\emph{Step 1: Strong compactness of density matrices.} We first note that $\gamma_N^{(k)}$ is by definition bounded in the trace-class, so that we can extract a subsequence along which
\begin{align}\label{week}
\gamma_N^{(k)}\rightharpoonup^*\gamma^{(k)}
\end{align}
trace-class weak-$*$ convergence as $N\to+\infty$. Modulo a diagonal extraction argument, one can assume that the convergence is along the same subsequence for any $k$. We now argue that the convergence is actually strong. We start by proving that
\begin{align}\label{uniform}
    \Tr\Big[h \gamma_{\Psi_N}^{(1)}\Big]=\Tr\Big[(-\Delta + \omega|\x|^2- \Omega L) \gamma_{\Psi_N}^{(1)}\Big]\leqslant C,
\end{align}
independently of $N$. To this purpose,  we introduce the perturbed Hamiltonian
\begin{align*}
    H_{a,b,N,\Omega,\lambda} = H_{a,b,N,\Omega} - \lambda \sum_{i=1}^N h_{i}
\end{align*}
with the corresponding quantum energy $E^{\text{QM}}(a,b,N,\Omega,\lambda)$. Here $\lambda$ is fixed in $(0, 1)$, with the
additional constraint $0 <\lambda< 1-a/a_*$ in the case $a<a_*$, in order to ensure that $a/(1-\lambda)$ compares the same way to $a_*$ as $a$ does. We return to \eqref{235}, \eqref{after-def} and \eqref{237} to derive an energy lower bound similar to \eqref{ieq1} with $E^{\text{QM}}(a,b,N,\Omega)$ replaced by $E^{\text{QM}}(a,b,N,\Omega,\lambda)$, i.e.,
\begin{align*}
E^{\text{QM}}(a,b,N,\Omega,\lambda) \geqslant (1-\lambda)E^{\text{H}}(a/(1-\lambda),b/(1-\lambda),N,\Omega)-o(1).
\end{align*} 
In particular, there exists a constant $C_\lambda>0$ such that 
\begin{align*}
    \langle \Psi_N|H_{a,N,\Omega,\lambda}| \Psi_N \rangle \geqslant -C_\lambda N,
\end{align*}
namely,
\begin{align}\label{j6}
    \frac{\langle \Psi_N|H_{a,N,\Omega}| \Psi_N \rangle}{N}  \geqslant -C_\lambda +\lambda \sum_{i=1}^N\frac{\langle \Psi_N| h_{i}| \Psi_N \rangle}{N}=-C_\lambda +\lambda \Tr\Big[h\gamma_{\Psi_N}^{(1)}\Big].
\end{align}
Combining \eqref{20} and \eqref{j6}, we have the desired
uniform boundedness of $\Tr[h\gamma_{\Psi_N}^{(1)}]$ and \eqref{uniform} holds true. Since $h=-\Delta + \omega|\x|^2- \Omega L$ has a compact resolvent for any $0<\Omega<\Omega^*$, \eqref{week} and \eqref{uniform} imply that, up to a subsequence, $\gamma_{\Psi_N}^{(1)}$ converges strongly in the trace-class. By \cite[Corollary 2.4]{Lewin-14}, $\gamma_{\Psi_N}^{(k)}$ converges strongly as well for all $k\geq1$.

\emph{Step 2: Introducing the limit measure.} Denote $\kappa_0=\max\{2\alpha+{1}/{6},4\beta+{1}/{6}\}$, by \eqref{20} and \eqref{eq:final-lower1}, we obtain   
\begin{align*}
{{E}}^{\mathrm{H}}(a,b,N,\Omega)
\geqslant  {{E}}^{\mathrm{H}}(a,b,N,\Omega) + \frac{N^{\kappa_0}}{4} \Tr\Big[P_\perp \gamma_{\Psi_N}^{(1)}\Big]  - o(1).
\end{align*}
Hence, $N^{\kappa_0} \Tr[P_\perp \gamma_{\Psi_N}^{(1)}] \to0$ as $N\to+\infty$. Since $\kappa_0>0$, we have
\begin{align}\label{0}
\Tr\Big[P_\perp \gamma_{\Psi_N}^{(1)}\Big] =o(N^{-\kappa_0}).
\end{align}
From \eqref{225} and \eqref{0}, we get
\begin{align}\label{243}
    \lim_{N\to+\infty}\int_{\mathcal{S}_{P \mathfrak{H}}} d \mu_{\Psi_N}(u)=1,
\end{align}
where we recall that $\mathcal{S}_{P \mathfrak{H}}$ depends on $K\sim N^{2\alpha+\frac{1}{6}}+N^{4\beta+\frac{1}{6}}$.
This implies that
\begin{align}\label{j7}
    1-\operatorname{Tr}\left[P^{\otimes 3} \gamma_{\Psi_N}^{(3)} P^{\otimes 3}\right]=o(N^{-\kappa_0}).
\end{align}
Combining \eqref{pure-estimate} and \eqref{j7}, by the triangle and Cauchy-Schwarz inequalities, we find
\begin{align}\label{j9}
\operatorname{Tr} \left| \gamma_{\Psi_N}^{(3)} -  \int_{\mathcal{S}_{P \mathfrak{H}}}|u^{\otimes 3}\rangle\langle u^{\otimes 3}| d \mu_{\Psi_N}(u)\right|&\leqslant \operatorname{Tr} \left|P^{\otimes 3} \gamma_{\Psi_N}^{(3)} P^{\otimes 3} -  \int_{\mathcal{S}_{P \mathfrak{H}}}|u^{\otimes 3}\rangle\langle u^{\otimes 3}| d \mu_{\Psi_N}(u)\right|\notag\\
&\quad+\operatorname{Tr} \left| \gamma_{\Psi_N}^{(3)} -  P^{\otimes 3} \gamma_{\Psi_N}^{(3)} P^{\otimes 3}\right|\notag\\
&\leqslant 12N^{2\kappa_0-1}+o(N^{-\kappa_0})\to0\quad\text{as }N\to+\infty.
\end{align}
Moreover, we denote $\widetilde{P}=\mathbf{1}(h\leqslant{K'})$ the spectral projector of the one-body operator $h$ onto energies below a cut-off $K'$. Since $\gamma_{\Psi_N}^{(3)}\to\gamma^{(3)}$ and $P_{K'}\to\mathbf{1}$ as $K'\to+\infty$, we deduce from the above proof that
\begin{align*}
    \lim_{K'\to+\infty}\lim_{N\to+\infty}\mu_{\Psi_N}(\mathcal{S}_{P_{K'} \mathfrak{H}})=1.
\end{align*}
This tightness condition allows us to use Prokhorov’s theorem and \cite[Lemma 1]{Skorokhod} to
ensure that, up to extraction of a subsequence, $\mu_{\Psi_N}$ converges weakly to a measure $\mu$ in the ball $\mathcal{B}_{\mathfrak{H}}$. After passing to the weak limit, we find that
\begin{align*}
 \gamma^{(3)} =  \int_{\mathcal{B}_{\mathfrak{H}}}|u^{\otimes 3}\rangle\langle u^{\otimes 3}| d \mu(u).
\end{align*}
Since $\mu(\mathcal{B}_{\mathfrak{H}})\leq1$ and $\Tr[\gamma^{(3)}]=1$ by the strong convergence of $\gamma_{\Psi_N}^{(3)}$, we conclude that
\begin{align*}
1=\Tr\big[\gamma^{(3)}\big] &=  \int_{\mathcal{B}_{\mathfrak{H}}}\Tr\Big[|u^{\otimes 3}\rangle\langle u^{\otimes 3}|\Big]\,d \mu(u)\notag\\
&=\int_{\mathcal{B}_{\mathfrak{H}}}\langle u^{\otimes 3}, u^{\otimes 3}\rangle\,d\mu(u)=  \int_{\mathcal{B}_{\mathfrak{H}}}\|u\|_{L^2}^6\,d \mu(u)\leqslant\mu(\mathcal{B}_{\mathfrak{H}})\leq1.
\end{align*}
This implies that $\|u\|_{L^2}=1$ and $\mu(\mathcal{S}_{\mathfrak{H}})=1$, i.e.,
\begin{align}
 \gamma^{(3)} =  \int_{\mathcal{S}_{\mathfrak{H}}}|u^{\otimes 3}\rangle\langle u^{\otimes 3}|\,d \mu(u).\label{j8}
\end{align}

\emph{Step 3: The limit measure charges only NLS minimizers.}
From the lower bound \eqref{after-def} together with the reduction to the Hartree energy we have
\begin{align}\label{2.45}
    \frac{\langle\Psi_N|H^N_{a,b,\Omega}|\Psi_N\rangle}{N} \geqslant \int_{\mathcal{S}_{P \mathfrak{H}}} \mathcal E^{\rm H}_{a,b,N,\Omega}(u)\,d\mu_{\Psi_N}(u)-o(1).
\end{align}
Note that $\mathcal{S}_{P \mathfrak{H}}\subset \mathcal{S}_{ \mathfrak{H}}=\{u\in L^2(\mathbb{R}^2):\|u\|_{L^2}=1\}$ for any $K>0$, and we split the integral into two parts: low and high kinetic energy 
\begin{align*}
    \Bbbk_-=\{u\in \mathcal{S}_{ \mathfrak{H}}: \langle u, hu\rangle\leqslant C_{\rm kin} \}\quad\text{and}\quad \Bbbk_+=\mathcal{S}_{ \mathfrak{H}}\setminus  \Bbbk_-
\end{align*}
for $C_{\rm kin} > 0$ a large constant independent of $N$. On the one hand, for the high kinetic energy part, we consider the modified Hartree functional
\begin{align*}
   \mathcal E^{\rm H}_{a,b,N,\Omega,\lambda}(u)=\mathcal E^{\rm H}_{a,b,N,\Omega}(u)-\lambda \langle u, hu\rangle,
\end{align*}
with the corresponding modified Hartree energy $E^{\rm H}(a,b,N,\Omega,\lambda)$. Recall that $\mathcal E^{\rm H}_{a,b,N,\Omega}$ was defined in \eqref{18}. Again, and for the same reason, $\lambda$ is fixed with $0 <\lambda< 1- a/a_*$ if $a < a_*$, and with $0<\lambda<1$ if $a \geqslant a_*$ and $\alpha<\beta$. We return to Proposition \ref{thm:Hartree}
and derive an energy lower bound similar to \eqref{26} with $E^{\rm H}(a,b,N,\Omega)$ replaced by $E^{\rm H}(a,b,N,\Omega,\lambda)$. Again, we find in particular that $E^{\rm H}(a,b,N,\Omega,\lambda)\geqslant-C_\lambda$ and deduce that
\begin{align}\label{2.46}
  \mathcal E^{\rm H}_{a,b,N,\Omega}(u)\geqslant\lambda \langle u, hu\rangle -C_\lambda\geqslant \frac{\lambda C_{\rm kin}}{2}, \quad \forall\, u\in\Bbbk_+,
\end{align}
for a large enough $C_{\rm kin}> 0$. On another hand, for the low kinetic energy part, noticing that $u\in H^1(\mathbb{R}^2)$ for all $u\in \Bbbk_-$, we estimate the Hartree functional $\mathcal E^{\rm H}_{a,b,N,\Omega}$ from below by  the rotating cubic-quintic NLS functional $\mathcal E^{\rm NLS}_{a,b,\Omega}$. Indeed, by $|\nabla|u||\leqslant|\nabla u|$ a.e. in $\mathbb{R}^2$ and Lemma \ref{lem:conv}, we obtain
\begin{align}\label{2.48}
  \mathcal{E}^{\rm H}_{a,b,N,\Omega}(u) \geqslant\mathcal{E}^{\rm NLS}_{a,b,\Omega}(u)-o(1), \quad \forall\, u\in\Bbbk_-.
\end{align}
Putting together \eqref{2.46} and \eqref{2.48} into \eqref{2.45}, we deduce that
\begin{align*}
    E^{\rm NLS}(a,b,\Omega)+o(1)&\geqslant E^{\rm QM}(a,b,N,\Omega)\notag\\
    &\geqslant \int_{\mathcal{S}_{P \mathfrak{H}}\cap \Bbbk_+} \frac{\lambda C_{\rm kin}}{2}\,d\mu_{\Psi_N}(u)\notag\\
    &\quad+\int_{\mathcal{S}_{P \mathfrak{H}}\cap \Bbbk_-} \mathcal E^{\rm NLS}_{a,b,\Omega}(u)\,d\mu_{\Psi_N}(u)-o(1)\notag\\
    &\geqslant \int_{\mathcal{S}_{P \mathfrak{H}}} \min\left\{\frac{\lambda C_{\rm kin}}{2}, \mathcal E^{\rm NLS}_{a,b,\Omega}(u)\right\} \,d\mu_{\Psi_N}(u)-o(1).
\end{align*}
Recalling now---see \eqref{j8}---that, modulo a subsequence, the sequence of measures $\mu_{\Psi_N}$ converges to a measure $\mu$ and passing to the limit $N\to+\infty$, we
have
\begin{align*}
    E^{\rm NLS}(a,b,\Omega)&\geqslant\lim_{N\to+\infty} E^{\rm QM}(a,b,N,\Omega)\geqslant \int_{\mathcal{S}_{\mathfrak{H}}} \min\left\{\frac{\lambda C_{\rm kin}}{2}, \mathcal E^{\rm NLS}_{a,b,\Omega}(u)\right\} \,d\mu(u).
\end{align*}
Passing now to the limit $C_{\rm kin}\to+\infty$ gives
\begin{align*}
    E^{\rm NLS}(a,b,\Omega)&\geqslant\lim_{N\to+\infty} E^{\rm QM}(a,b,N,\Omega)\geqslant \int_{\mathcal{S}_{\mathfrak{H}}} \mathcal E^{\rm NLS}_{a,b,\Omega}(u) \,d\mu(u)\geqslant E^{\rm NLS}(a,b,\Omega),
\end{align*}
where using the fact that $\mu(\mathcal{S}_{\mathfrak{H}})=1$. This shows that $\mu$ must also be supported on $ \mathcal{M}_{\mathrm{NLS}} $, which proves \eqref{condensation}. If, moreover, the rotating cubic-quintic NLS minimizer is unique up to a phase (as happens for some small parameter regimes), the whole sequence converges without passing to a subsequence. Thus, \eqref{bec} holds.

Recall from Theorem \ref{thm:main} that the rotating cubic‑quintic NLS ground states $u^{\mathrm{NLS}}_{a_N,b_N,\Omega}$
blow up at the scale $\ell_N$: the rescaled function $\widetilde u_N(x)=\ell_N u^{\mathrm{NLS}}_{a_N,b_N,\Omega}(\ell_N x)$
converges strongly to the Gagliardo-Nirenberg optimizer $Q_0$ in $H^1(\mathbb{R}^2)$, and the NLS
energy satisfies
\begin{equation}\label{eq:nls-exp}
E^{\rm NLS}(a_N,b_N,\Omega)
= \Big(1-\frac{\zeta}{4}+o(1)\Big)\mathcal{Q}_{\rm pot}\,\ell_N^{2}.
\end{equation}
For any $\alpha<1/{12}$ and $\beta<1/{24}$, by  \eqref{eq:final-lower1}, we have
\begin{equation}\label{eq:final-lower2}
{{E}}^{\mathrm{H}}(a_N,b_N,N,\Omega) \geqslant E^{\mathrm{QM}}(a_N,b_N,N,\Omega)\geqslant {{E}}^{\mathrm{H}}(a_N,b_N,N,\Omega)-C\left(N^{2\alpha-\frac{1}{6}}+N^{4\beta-\frac{1}{6}}\right).
\end{equation}
Hence, by \eqref{eq:energy-exp} and \eqref{26}, the error in \eqref{eq:final-lower2} vanishes  as $N\to+\infty$ with   $\alpha<1/{12}$, and $\beta<1/{24}$, and 
\begin{align*}
    2\alpha-\frac{1}{6}<-2\eta\quad\text{and} \quad 4\beta-\frac{1}{6}<-2\eta,
\end{align*}
which, with \eqref{j90}, implies that 
\begin{align*}
 0<\eta<\min\!\Big\{\frac{\alpha}{5},\frac{1-12\alpha}{12},\beta,\frac{1-24\beta}{12}\Big\}.   
\end{align*}
Combining \eqref{eq:final-lower2} and \eqref{28}, we get
\begin{align*}
    \lim_{N\to+\infty} E^{\mathrm{QM}}(a,b,N,\Omega)= {E}^{\mathrm{NLS}}(a,b,\Omega),
\end{align*}
Combining \eqref{eq:nls-exp} gives
\begin{align*}
E^{\mathrm{QM}}(a_{N}, b_{N},N,\Omega)=E^{\mathrm{NLS}}(a_{N}, b_{N},N,\Omega)\big(1+o(1)\big)=\left( 1 - \frac{\zeta}{4} + o(1) \right) \cQ_{\mathrm{pot}} \ell_N^{2},
\end{align*}
which is precisely \eqref{eq:energy-expN}. 

We finish the proof of Theorem \ref{the:many-body} by investigating the collapse behavior of many-body ground states in \eqref{condensation0}. Let $\{\Psi_N\}_N$ be a sequence of ground states of $E^{\mathrm{QM}}(a_{N}, b_{N},N,\Omega)$. From \eqref{j9}, we deduce that
\begin{align*}
\limN \operatorname{Tr} \left| \gamma_{\Psi_N}^{(3)} -  \int_{\mathcal{S}_{P \mathfrak{H}}}|u^{\otimes 3}\rangle\langle u^{\otimes 3}| d \mu_{\Psi_N}(u)\right|=0.
\end{align*}
Taking a partial trace we also have
\begin{align*}
\limN \operatorname{Tr} \left| \gamma_{\Psi_N}^{(1)} -  \int_{\mathcal{S}_{P \mathfrak{H}}}|u\rangle\langle u| d \mu_{\Psi_N}(u)\right|=0.
\end{align*}
To complete the proof of \eqref{condensation0}, it suffices to prove the convergence in trace class of
$(\gamma_{\Psi_N}^{(1)}-|Q_N\rangle\langle Q_N|)$ to $0$, where $Q_N(x)=\ell_N^{-1}Q_0(\ell_N^{-1}x)$. It follows from the above that this is equivalent to
\begin{align}\label{2.50}
    \limN \int_{\mathcal{S}_{P \mathfrak{H}}}\big| \langle u,Q_N\rangle \big| d \mu_{\Psi_N}(u)=1.
\end{align}

To this end, we define
\begin{align*}
    \delta_N := \int_{\mathcal{S}_{P \mathfrak{H}}} \frac{\mathcal{E}_{a_N,b_N,N,\Omega}^{\rm H}(u)-E^{\rm H}(a_N,b_N,N,\Omega)}{|E^{\rm H}(a_N,b_N,N,\Omega)|} \, d\mu_{\Psi_N}(u) \geqslant 0,
\end{align*}
by definition of $E^{\rm H}(a_N,b_N,N,\Omega)$, and claim that $\delta_N \to 0$ as $N\to+\infty$. Indeed, on the one hand by the lower bound in \eqref{225} together with \eqref{0} if $E^{\rm H}(a_N,b_N,N,\Omega)\geqslant0$ and on the other hand by the upper bound in \eqref{225} if $E^{\rm H}(a_N,b_N,N,\Omega)<0$---which happens if $\zeta>4$, see \eqref{28}---we have
\begin{align*}
    E^{\rm H}(a_N,b_N,N,\Omega) \leqslant \big(1 + o(1)\big) E^{\rm H}(a_N,b_N,N,\Omega) \int_{\mathcal{S}_{P \mathfrak{H}}} d\mu_{\Psi_N}(\gamma).
\end{align*}
By \eqref{2.45}, this yields
\begin{align*}
    \big(1 + o(1)\big) E^{\rm H}(a_N,b_N,N,\Omega) \int_{\mathcal{S}_{P \mathfrak{H}}} d\mu_{\Psi_N}(u) \geqslant E^{\rm H}(a_N,b_N,N,\Omega)
    \geqslant \int_{\mathcal{S}_{P \mathfrak{H}}} \mathcal{E}^{\rm H}_{a_N, b_N, N} (u) \, d\mu_{\Psi_N}(u) - o(1),
\end{align*}
hence
\begin{align*}
    o(1) \operatorname{sign} \big(E^{\rm H}(a_N,b_N,N,\Omega)\big) \int_{\mathcal{S}_{P \mathfrak{H}}} d\mu_{\Psi_N}(u) + \frac{o(1)}{|E^{\rm H}(a_N,b_N,N,\Omega)|} \geqslant \delta_N \geqslant 0,
\end{align*}
and we obtain the claim, using \eqref{243} as well as the fact that the error term from \eqref{2.45}---the one divided by $|E^{\rm H}(a_N,b_N,N,\Omega)|$ in the above equation---is of small order compared to the rotating cubic-quintic NLS energy (see the end of the previous subsection), hence compared to $E^{\rm H}(a_N,b_N,N,\Omega)$.

We also define $\mathcal{T}_N$ as the subset of $\mathcal{S}_{P\mathfrak{H}}$ and
\begin{align}\label{2.51}
    0\leqslant \frac{\mathcal{E}_{a_N,b_N,N,\Omega}^{\rm H}(u)-E^{\rm H}(a_N,b_N,N,\Omega)}{|E^{\rm H}(a_N,b_N,N,\Omega)|}\leqslant\sqrt{\delta_N}
\end{align}
Note that the sets $\mathcal{T}_N\subset  \mathcal{S}_{ \mathfrak{H}}=\{u\in L^2(\mathbb{R}^2):\|u\|_{L^2}=1\}$ are non-empty since they contain Hartree ground states for instance.

On the one hand, we claim that we must have
\begin{align}\label{2.52}
    \lim_{N \to +\infty} \inf_{u \in \mathcal{T}_N} \big|\langle u, Q_N \rangle\big| = 1.
\end{align}
Indeed, if this was not the case, and since $|\langle u, Q_N \rangle| \leqslant 1$ by Schwarz inequality, there would exist a subsequence $\{u_N\}_N \subset \mathcal{T}_N$ such that
\begin{align}\label{2.53}
    \lim_{N\to+\infty}\big|\langle u_N,Q_N\rangle\big|<1.
\end{align}
Since $u_N \in \mathcal{T}_N$ and $\delta_N \to 0$ as $N\to+\infty$, we would deduce from \eqref{2.51} that
\begin{align*}
    \lim_{N \to +\infty} \frac{\mathcal{E}_{a_N,b_N,N,\Omega}^{\rm H}(u_N)}{E^{\rm H}(a_N,b_N,N,\Omega)} = 1.
\end{align*}
Therefore, $\{u_N\}_N$ would be a sequence of approximate ground states of $E^{\rm H}(a_N,b_N,N,\Omega)$ and
\begin{align*}
    \lim_{N \to +\infty} \big|\langle u_N, Q_N \rangle\big| = 1,
\end{align*}
by Theorem \ref{thm:Hartree}, contradicting \eqref{2.53}. Thus, \eqref{2.52} is proved.

On the other hand, by the definition of $\delta_N$ and the choice of $\mathcal{T}_N$, we have
\begin{align}\label{j10}
    \delta_N \geqslant\int_{\mathcal{T}_N^c} \frac{\mathcal{E}_{a_N,b_N,N,\Omega}^{\rm H}(u)-E^{\rm H}(a_N,b_N,N,\Omega)}{|E^{\rm H}(a_N,b_N,N,\Omega)|} \, d\mu_{\Psi_N}(u) \geqslant \sqrt{\delta_N} \mu_{\Psi_N}(\mathcal{T}_N^c),
\end{align}
where $\mathcal{T}_N^c:=\mathcal{S}_{ \mathfrak{H}} \setminus\mathcal{T}_N$. Hence, \eqref{j10} implies that $\mu_{\Psi_N}(\mathcal{T}_N^c) \leqslant \sqrt{\delta_N} \to 0$ as $N\to+\infty$. Combining \eqref{243}, we can draw the conclusion that $\mu_{\Psi_N}(\mathcal{T}_N) \to 1$ as $N\to+\infty$. The latter convergence and \eqref{2.52} imply that
\begin{align*}
   1\geqslant \int_{ \mathcal{S}_{ \mathfrak{H}}} \big| \langle u, Q_N \rangle \big| d\mu_{\Psi_N}(u) &\geqslant \int_{\mathcal{T}_N} \big| \langle u, Q_N \rangle \big| d\mu_{\Psi_N}(u)\\
    &\geqslant \mu_{\Psi_N}(\mathcal{T}_N) \inf_{u\in \mathcal{T}_N} \big| \langle u, Q_N \rangle \big| \to 1\quad\text{as }N\to+\infty.
\end{align*}
This implies that \eqref{2.50} holds true and the proof of Theorem \ref{the:many-body} is completed.
\end{proof}

\appendix
\section{Appendix: The exceptional case $\zeta=+\infty$: NLS energy estimate}\label{appa}
In this appendix we treat the exceptional regime mentioned in Remark \ref{re12}, where the limit in the rate condition \eqref{eq:rate} is $-\infty$. This corresponds to the parameter scaling
\begin{equation*}
 \frac{a_n - a_*}{b_n^{2/3}} \to +\infty \quad (a_n\searrow a_*,\quad b_n\searrow 0).
\end{equation*}
We shall establish the following energy estimate
\begin{equation*}
E^{\mathrm{NLS}}(a_n,b_n,\Omega) \sim-\frac{(a_n - a_*)^2}{b_n},
\end{equation*}
i.e. there exist constants $C_1,C_2>0$ such that
\begin{equation}\label{energy:A}
-C_1 \frac{(a_n - a_*)^2}{b_n} \le E^{\mathrm{NLS}}(a_n,b_n,\Omega) \le -C_2 \frac{(a_n - a_*)^2}{b_n}
\end{equation}
for all sufficiently large $n$. The precise value of the leading coefficient is not needed for the purposes of Remark \ref{re12}.

\begin{proof}[Proof of \eqref{energy:A}]
Let $\delta_n := a_n - a_*>0$, we define the blow-up length scale $\ell_n := \sqrt{{b_n}/{\delta_n}}$.
Then $\ell_n \to 0$. Indeed, since $\delta_n \gg b_n^{2/3}$, we have $\ell_n^2 = b_n/\delta_n \ll b_n^{1/3} \to 0$. Let $u_n$ be a minimizing sequence of $E^{\mathrm{NLS}}(a_n,b_n,\Omega)$. Define the rescaled functions
\begin{equation*}
w_n(x) := \ell_n \, u_n(\ell_n x), \qquad \|w_n\|_{L^2}=1.
\end{equation*}
Substituting into the NLS energy \eqref{110}, we get
\begin{align*}
    \mathcal{E}^{\mathrm{NLS}}_{a_n,b_n,\Omega}(u_n)%&=\int |(\nabla-i\mathbf{A}_\Omega)u_n|^2 dx+\frac{4\omega-\Omega^2}{4}\int |x|^2|u_n|^2 dx-\frac{a_n}{2}\int |u_n|^4 dx+\frac{b_n}{6}\int |u_n|^6 dx\notag\\
    &=\frac{1}{\ell_n^2} \int |(\nabla - i\mathbf{A}_{\ell_n^2\Omega})w_n|^2 dx+\frac{4\omega-\Omega^2}{4} \ell_n^2 \int|x|^2 |w_n|^2 dx-\frac{a_n}{2\ell_n^2} \int |w_n|^4 dx+ \frac{b_n}{6\ell_n^4} \int |w_n|^6 dx\notag\\
    &=\frac{1}{\ell_n^2}\left( \int |(\nabla - i\mathbf{A}_{\ell_n^2\Omega})w_n|^2 dx-\frac{a_*}{2} \int |w_n|^4 dx\right)+\frac{4\omega-\Omega^2}{4} \ell_n^2 \int|x|^2 |w_n|^2 dx\notag\\
    &\qquad-\frac{a_n-a_*}{2\ell_n^2} \int |w_n|^4 dx+ \frac{b_n}{6\ell_n^4} \int |w_n|^6 dx.
\end{align*}
Multiplying by the factor $b_n/\delta_n^2$, by H\"older inequality and Young inequality, we obtain 
\begin{align}
    \frac{b_n}{\delta_n^2} \mathcal{E}^{\mathrm{NLS}}_{a_n,b_n,\Omega}(u_n)
    &=\frac{1}{\delta_n}\left( \int |(\nabla - i\mathbf{A}_{\ell_n^2\Omega})w_n|^2 dx-\frac{a_*}{2} \int |w_n|^4 dx\right)+\frac{4\omega-\Omega^2}{4} \frac{b_n^2}{\delta_n^3} \int|x|^2 |w_n|^2 dx\notag\\
    &\qquad-\frac{1}{2} \int |w_n|^4 dx+ \frac{1}{6} \int |w_n|^6 dx \label{A1}\\
    &\geqslant-\frac{1}{2} \int |w_n|^4 dx+ \frac{1}{6} \int |w_n|^6 dx\geqslant -\frac{3}{8},\notag
\end{align}
which implies that $ E^{\mathrm{NLS}}(a_n,b_n,\Omega)\geqslant -\frac{3}{8}\frac{\delta_n^2}{b_n}$.
On the other hand, for any fixed $\lambda>0$, consider the test function $w_* = \lambda^{-1} Q_0(\lambda^{-1} \cdot)$ with the potential term $\|x w_*\|_2^2 =\lambda^{2}\|x Q_0\|_2^2$ and the angular momentum term $\langle w_*, L w_*\rangle=0$. Substituting $w_*$ into \eqref{A1} gives an upper bound
\begin{align*}
    \frac{b_n}{\delta_n^2} \mathcal{E}_{a_n,b_n,\Omega}(\ell_n^{-1}w_*(\ell_n^{-1}\cdot))
    = -\frac{\lambda^{-2}}{a_*} + \mathcal{Q}_{L^6}\lambda^{-4}  + \frac{\omega b_n^2}{\delta_n^3}\lambda^{2}\|x Q_0\|_2^2,
\end{align*}
and ${b_n^2}/{\delta_n^3} \to 0$ since $\delta_n \gg b_n^{2/3}$. Taking $\lambda= {\sqrt{2a_*\mathcal{Q}_{L^6}}}$, we have
\begin{align*}
   E^{\mathrm{NLS}}(a_n,b_n,\Omega)\leqslant \mathcal{E}_{a_n,b_n,\Omega}(\ell_n^{-1}w_*(\ell_n^{-1}\cdot))
   = -\Big(\frac{1}{4a_*^2\mathcal{Q}_{L^6}}+o(1)\Big)\frac{\delta_n^2}{b_n}.
\end{align*}
This implies that the proof of \eqref{energy:A} is completed.
\end{proof}

\vspace{0.2cm}
\noindent\textbf{Acknowledgments} 
The authors declare that they have no conflict of interest. This work was supported by the National Natural Science
Foundation of China (12301090), Natural Science Foundation of Zhejiang Province (Grant No. LMS26A010005) and Gansu Provincial Department of Education (Grant No. 2026CXZX-039).

\noindent\textbf{Data Availability Statement.} Data sharing not applicable to this article as no datasets were generatedor analyzed during the current study.

\end{document}